\documentclass[12pt]{article}
\usepackage{amsmath}
\usepackage[round]{natbib}
\usepackage{amssymb,color}
\usepackage{appendix}
\usepackage{amsthm}
\usepackage[hyperfootnotes=false]{hyperref}
\usepackage{enumitem}
\usepackage{microtype}
\usepackage{dsfont}

\usepackage{mathtools} 
\mathtoolsset{showonlyrefs=true}  

\usepackage{caption}

\usepackage[margin=0.7in]{geometry}

\numberwithin{equation}{section}
\usepackage{comment}

\usepackage{titling}
\newtheorem{thm}{Theorem}[section]
\newtheorem{cor}[thm]{Corollary}
\newtheorem{defi}{Definition}[section]
\newtheorem{assume}{Assumption}[section]
\newtheorem{prop}[thm]{Proposition}

\newtheorem{lemma}[thm]{Lemma}

\newtheorem{remark}{Remark}[section]

\begin{document}

\author{%
Jaehyun Kim\textsuperscript{1}\thanks{jaehyun107@snu.ac.kr}
\quad and\quad Hyungbin Park\textsuperscript{2}\thanks{hyungbin@snu.ac.kr, hyungbin2015@gmail.com}\\[1em]
\normalsize{\textsuperscript{1}Department of Statistics and Data Science,}\\
\normalsize{The Chinese University of Hong Kong, Shatin, N.T., Hong Kong}\\[0.5em]
\normalsize{\textsuperscript{2}Department of Mathematical Sciences and Research Institute of Mathematics,}\\
\normalsize{Seoul National University, Seoul, South Korea}
}
\title{Quadratic $G$-BSDEs for Bond Pricing with Endogenous Short-Rate Feedback}

\date{September 16, 2026}
\maketitle

\begin{abstract}
We study robust bond valuation with endogenous short-rate feedback under volatility uncertainty. Within the $G$-expectation framework, the dependence of the short rate on the bond price yields a nonlinear fixed-point problem, represented by a quadratic $G$-BSDE for the logarithmic price. Under suitable assumptions, we establish existence, uniqueness, comparison, and stability for bounded finite-horizon solutions. An additional strict monotonicity condition yields a unique bounded infinite-horizon solution and exponential convergence of finite-horizon approximations on compact time intervals. We apply these results to inverse short-rate design, constructing discount-rate coefficients that reproduce admissible smooth bond-price targets at a fixed maturity. For long maturities, we construct feedback rules under which the compensated logarithmic price converges exponentially to a prescribed bounded state-dependent profile, while the asymptotic yield equals a specified target.
\end{abstract}

\section{Introduction}
\label{sec:intro}

The standard short-rate approach values a zero-coupon bond by discounting its terminal payoff using a short-rate process specified independently of the price being computed. A different problem arises when interest-rate decisions respond to bond-market valuations. Bond prices then depend on the future path of interest rates, while the interest-rate rule itself depends on bond prices. This circular dependence turns valuation into a self-consistency problem. When the law of the underlying economic state is also uncertain, the problem combines two distinct nonlinearities: endogenous price--rate feedback and valuation under multiple volatility scenarios. This paper develops a $G$-BSDE framework for analyzing their interaction.

Endogenous feedback between financial valuations and economic decisions is a recurring theme in economic theory. \citet{kiyotaki1997credit} show how the interaction between asset prices and collateral constraints can amplify and propagate shocks. In a continuous-time macroeconomic model with a financial sector, \citet{brunnermeier2014macroeconomic} analyze nonlinear feedback between financial conditions, asset valuations, and endogenous risk. Focusing more directly on interest-rate decisions, \citet{bernanke1997inflation} examine the interaction between monetary policy and private-sector forecasts, emphasizing the difficulties that arise when the information used to guide policy responds to the policy rule itself. These contributions motivate a joint analysis of financial variables and the rules that respond to them. Our model isolates a reduced-form price--rate feedback mechanism; it does not derive that mechanism from a general equilibrium or a welfare-maximization problem.

The treatment of uncertainty is likewise important. Specifying stochastic volatility under a single probability measure allows volatility to fluctuate, but still prescribes its probabilistic law. The $G$-expectation framework instead accommodates a family of volatility scenarios through a sublinear expectation, with $G$-Brownian motion as the driving noise \citep{denis2011function,peng2019nonlinear}. The resulting quadratic variation is itself uncertain. This feature permits the discounting rule to distinguish between ordinary time and realized quadratic variation, and it leads naturally to nonlinear valuation equations.

The financial implications of volatility ambiguity have been studied from both hedging and equilibrium perspectives. \citet{epstein2013ambiguous} develop a continuous-time asset-pricing framework with ambiguity about volatility and examine its implications for hedging and equilibrium asset returns. \citet{vorbrink2014financial} formulates financial markets with volatility uncertainty using $G$-Brownian motion and studies upper and lower arbitrage-free prices. These works clarify why robust valuation need not select a unique price independently of the pricing criterion, preferences, or market structure. In the present paper, uniqueness refers to a self-consistent valuation under a specified conditional sublinear expectation, rather than uniqueness across all possible ambiguity-sensitive pricing rules.

For bond markets, \citet{holzermann2021hull} constructs a Hull--White model under volatility uncertainty. \citet{holzermann2022term} subsequently develops a Heath--Jarrow--Morton framework driven by $G$-Brownian motion. Of particular relevance is Theorem~3.1 of that paper: under suitable regularity assumptions, a drift condition involving a market price of risk and additional market prices of uncertainty yields a sublinear expectation under which discounted bonds are symmetric $G$-martingales, and hence excludes arbitrage in the specified market. The condition also determines the risk-neutral forward-rate drift from the volatility structure. The appearance of several market prices reflects the distinction between ordinary drift and quadratic-variation-dependent drift under volatility uncertainty. This is a sufficient no-arbitrage result; its symmetry condition is stronger than the $G$-martingale property alone.

Further developments include \citet{holzermann2024pricing}, who studies the pricing of interest-rate derivatives under volatility uncertainty, and \citet{akhtari2024cox}, who establish a $G$-expectation formulation of the Cox--Ingersoll--Ross process and investigate its mathematical properties. Together, these contributions provide models and analytical tools for interest rates under uncertain volatility. Our focus is different: rather than taking the discount-rate coefficients as independent inputs to valuation, we allow them to depend on the logarithmic value being determined. The key questions are whether the resulting implicit pricing equation is well posed and how the feedback rule shapes its solution.

To describe the problem, let $X$ be a state process driven by a $d$-dimensional $G$-Brownian motion $B$, and let $\hat{\mathbb E}_s$ denote the conditional sublinear expectation used for valuation. Fix a reference maturity $T$ and write $Y_s^T=\log P(s,T)$. We consider generalized discount-rate coefficients $r(s,X_s,Y_s^T)$ and $\rho_{ij}(s,X_s,Y_s^T)$, acting through ordinary time and quadratic covariation, respectively. The self-consistency requirement is
\[
\begin{aligned}
e^{Y_s^T}=\hat{\mathbb E}_s\left[
 e^{-\int_s^T r(u,X_u,Y_u^T)\,du
 -\sum_{i,j=1}^d\int_s^T
 \rho_{ij}(u,X_u,Y_u^T)\,d\langle B^i,B^j\rangle_u}
\right],
\qquad 0\leq s\leq T.
\end{aligned}
\]
Setting $\rho_{ij}=0$ recovers feedback through the ordinary short rate alone. In the general formulation, the $\rho_{ij}$ are coefficients of the discounting rule; they are not the market prices of uncertainty appearing in the drift condition of \citet{holzermann2022term}.

The logarithmic transformation turns this fixed-point problem into the quadratic $G$-BSDE
\[
\begin{aligned}
Y_s^T={}&-\int_s^T r(u,X_u,Y_u^T)\,du\\
&+\sum_{i,j=1}^d\int_s^T
\left(\frac12 Z_u^{T,i}Z_u^{T,j}-\rho_{ij}(u,X_u,Y_u^T)\right)
\,d\langle B^i,B^j\rangle_u\\
&-\int_s^T Z_u^T\,dB_u-(K_T^T-K_s^T),
\qquad Y_T^T=0,
\end{aligned}
\]
where $K^T$ is a decreasing $G$-martingale. The quadratic term in $Z^T$ comes from the logarithmic transformation, whereas $K^T$ records the nonlinear martingale component associated with volatility uncertainty. This representation makes it possible to study self-consistent valuation using estimates and comparison arguments for quadratic backward equations. The valuation problem is formulated for a fixed reference maturity; compatibility across simultaneously traded maturities under one common money-market account is a separate requirement.

The relevant mathematical foundations are provided by the literature on $G$-BSDEs. \citet{hu2014backward,hu2014comparison} establish the Lipschitz theory and develop comparison, nonlinear Feynman--Kac, and Girsanov results. \citet{hu2018quadratic} study quadratic $G$-BSDEs through discrete approximations and fully nonlinear partial differential equations. For the quadratic and long-horizon questions considered here, several extensions are particularly relevant. \citet{hu2018ergodic} develop infinite-horizon and ergodic $G$-BSDEs and their applications. \citet{hu2022quadratic} treat quadratic $G$-BSDEs with convex generators and unbounded terminal conditions. \citet{sun2024forward} connect ergodic quadratic $G$-BSDEs with homothetic forward performance processes under volatility uncertainty. More recently, \citet{lin2026infinite} establish well-posedness and comparison results for infinite-horizon quadratic $G$-BSDEs.

The work of Falei Wang and his coauthors provides a central methodological
foundation for our analysis. In particular, our treatment of infinite-horizon
equations builds on the linearization and finite-horizon approximation methods
of \citet{hu2018ergodic}. Their linearization result is used explicitly in the
proof of Lemma~\ref{lem:infinite-linearization-bmo}, where it is combined with
truncation and $G$-BMO estimates to handle quadratic growth. These methods
underpin the well-posedness and horizon-convergence analysis in
Section~\ref{section 4}, which we apply to self-consistent bond valuation and
long-maturity short-rate design.

Building on this theory, the present paper addresses two connected financial questions: the well-posedness of endogenous bond valuation and the inverse design of short-rate feedback for prescribed price targets. We first develop a Markovian analysis suited to the pricing application. With bounded terminal data, appropriate Lipschitz assumptions, and quadratic growth in the martingale integrand, we establish a priori bounds and BMO estimates, together with a comparison principle, stability, and the existence and uniqueness of bounded solutions. We also establish spatial Lipschitz continuity and temporal $1/2$-H\"older continuity of the associated Markovian value function. Applied to the logarithmic pricing equation, these results yield a unique bounded self-consistent valuation under the stated feedback assumptions.

We next consider infinite-horizon equations under a strict monotonicity condition in the backward variable. The corresponding estimates yield a unique bounded infinite-horizon solution and show that finite-horizon approximations converge exponentially on every compact time interval. Thus the influence of a remote terminal condition can be controlled quantitatively. This horizon-stability result is the analytical link between finite-maturity valuation and the long-maturity design problem.

Finally, we study the inverse question of constructing discount-rate coefficients from a prescribed price target. For a fixed maturity, an admissible smooth function $\psi$, satisfying $\psi(T,x)=0$, determines coefficients that reproduce $P(s,T)=e^{\psi(s,X_s)}$. For long maturities, we prescribe a target yield $\lambda>0$ and an admissible smooth bounded profile $\phi$, and use feedback through the compensated logarithmic price
\[
\widetilde Y_s^T:=\log P(s,T)+\lambda(T-s).
\]
A linear restoring term $\mu(y-\phi(x))$, with $\mu>0$, together with suitable state-dependent discount-rate coefficients, identifies $\phi(X_s)$ as the bounded infinite-horizon profile. Under the required regularity and boundedness conditions, the finite-horizon solutions satisfy
\[
\sup_{0\leq u\leq s}
\left|\widetilde Y_u^T-\phi(X_u)\right|
\leq C e^{-\mu(T-s)},\qquad T\geq s,
\quad\text{quasi-surely}.
\]
Consequently, for each fixed $s$,
\[
\begin{gathered}
e^{\lambda(T-s)}P(s,T)\longrightarrow e^{\phi(X_s)},\\
-\frac{\log P(s,T)}{T-s}\longrightarrow\lambda
\qquad\text{as }T\to\infty.
\end{gathered}
\]
Here $\lambda$ is prescribed as a design target, not determined as an ergodic eigenvalue. Likewise, $\mu$ controls convergence as maturity increases; the result is not a statement about the speed of adjustment following a policy intervention in calendar time.

The remainder of the paper is organized as follows. Section~2 recalls the $G$-expectation framework and the BMO tools used in the analysis. Section~3 studies finite-horizon quadratic $G$-BSDEs, and Section~4 develops the infinite-horizon theory and convergence estimates. Section~5 applies these results to bond valuation with endogenous short-rate feedback. Section~6 considers fixed-maturity and long-maturity target design. Section~7 concludes.

\section{Preliminaries}\label{section 2}
\subsection{\textit{G}-expectation}

In this subsection, we briefly recall the basic notions and results of
$G$-expectation theory. We refer the reader to
\cite{denis2011function}, \cite{hu2014backward}, and
\cite{peng2019nonlinear} for further details.
Let $\Omega=C_0([0,\infty),\mathbb{R}^d)$ be the canonical space of all
$\mathbb{R}^d$-valued continuous paths
$(\omega_t)_{t\in[0,\infty)}$ starting from $\omega_0=0$, equipped with
the metric
$$
\rho(\omega^{(1)},\omega^{(2)})
:=
\sum_{n=1}^\infty
\frac{1}{2^n}
\left(
1\wedge
\max_{t\in[0,n]}
|\omega_t^{(1)}-\omega_t^{(2)}|
\right).
$$
Let $B$ denote the canonical process on $\Omega$, that is,
$B_t(\omega)=\omega(t)$ for $t\geq0$.
For each $t>0$, define
\begin{equation}
\begin{split}
\textnormal{Lip}(\Omega_t)
:=\Big\{&\varphi\big(
B_{t_1}-B_{t_0},B_{t_2}-B_{t_1},\cdots,B_{t_n}-B_{t_{n-1}}
\big)\,:\ n\in\mathbb{N},\\
&0\leq t_0\leq\cdots\leq t_n\leq t,\quad
\varphi\in C_{l,\textnormal{Lip}}(\mathbb{R}^{d\times n})\Big\}.
\end{split}
\end{equation}
Here, $C_{l,\textnormal{Lip}}(\mathbb{R}^{d\times n})$ denotes the
space of all continuous functions $\varphi$ for which there exist
constants $C>0$ and $k\in\mathbb{N}$, depending only on $\varphi$, such
that
$|\varphi(x)-\varphi(y)|
\leq C(1+|x|^k+|y|^k)|x-y|$
for all $x,y\in\mathbb{R}^{d\times n}$.
It is immediate that
$\textnormal{Lip}(\Omega_t)\subseteq\textnormal{Lip}(\Omega_T)$
whenever $t\leq T$. We therefore set
$\textnormal{Lip}(\Omega):=\bigcup_{t\geq0}\textnormal{Lip}(\Omega_t)$.

Let $\mathbb{S}^d$ denote the space of all $d\times d$ symmetric
matrices.
Given a monotone sublinear functional
$G:\mathbb{S}^d\to\mathbb{R}$, there exists a sublinear expectation
space $(\Omega,\textnormal{Lip}(\Omega),\hat{\mathbb{E}})$ under which
the canonical process $B$ is a $G$-Brownian motion satisfying
\begin{equation}
G(A)
=
\frac{1}{2}
\hat{\mathbb{E}}
\left[
\langle AB_1,B_1\rangle
\right],
\qquad A\in\mathbb{S}^d.
\end{equation}
Moreover, there exists a bounded and closed set $\Gamma$ of
$d\times d$ symmetric positive semidefinite matrices such that
\begin{align}
G(A)
=
\frac{1}{2}
\sup_{Q\in\Gamma}
\operatorname{tr}(AQ).
\label{eqn: gamma}
\end{align}
The conditional $G$-expectation $\hat{\mathbb{E}}_t$ is initially
defined for $X\in\textnormal{Lip}(\Omega)$ by
$$
\hat{\mathbb{E}}_t[X]
=
\left.
\hat{\mathbb{E}}
\left[
\varphi
\left(
x_1,\cdots,x_j,
B_{t_{j+1}}-B_{t_j},
\cdots,
B_{t_n}-B_{t_{n-1}}
\right)
\right]
\right|_{
\substack{
x_1=B_{t_1}-B_{t_0},\ \cdots,\\
x_j=B_{t_j}-B_{t_{j-1}}
}
}.
$$
Here, $X$ is represented as
$X=\varphi(B_{t_1}-B_{t_0},B_{t_2}-B_{t_1},\cdots,
B_{t_{j+1}}-B_{t_j},\cdots,B_{t_n}-B_{t_{n-1}})$,
where
$\varphi\in C_{l,\textnormal{Lip}}(\mathbb{R}^{d\times n})$
and $t=t_j$ for some $j=0,\cdots,n-1$.
The sublinear expectation space
$(\Omega,\textnormal{Lip}(\Omega),\hat{\mathbb{E}})$ is called a
$G$-expectation space, while $\hat{\mathbb{E}}$ and
$\hat{\mathbb{E}}_t$ are referred to as the $G$-expectation and the
conditional $G$-expectation, respectively.


For $p\geq 1$, we define $\mathbb{L}_G^p(\Omega_t)$ ($\mathbb{L}_G^p(\Omega)$, respectively) as the completion of $Lip(\Omega_t)$ ($Lip(\Omega)$, respectively) with respect to the norm $|\!|\xi|\!|_{p,G}:= (\hat{\mathbb{E}}[|\xi|^p])^{\frac{1}{p}}$. For each $t\geq 0$, the conditional sublinear expectation extends continuously to $\hat{\mathbb{E}_t}:\mathbb{L}^p_G(\Omega)\to \mathbb{L}^p_G(\Omega_t)$.
In this paper, we shall only
consider   $G$-Brownian motions satisfying the strong ellipticity condition, that is, 
there exist  strictly positive constants $\overline{\sigma}, \underline{\sigma}$ such that
\begin{equation}
\frac{1}{2}\underline{\sigma}^2\text{tr}(A-B)\leq G(A)-G(B)\leq \frac{1}{2}\overline{\sigma}^2\text{tr}(A-B)\text{ for } A\geq B\,.
\end{equation}

The conditional $G$-expectation admits a probabilistic representation.
For the following representation theorem, see
\citet{hu2009representation,hu2021extended}.
 Let $(\mathcal{F}_t)_{t\ge0}$ be the filtration generated by the canonical process $B$ and define $\mathcal{F}=\sigma(\cup_{t\ge0}\mathcal{F}_t)$.

\begin{thm}
The $G$-expectation $\hat{\mathbb{E}}$ can be represented by a
weakly compact family $\mathcal{P}$ of probability measures on
$(\Omega,\mathcal{F})$. More precisely,
\begin{equation}
\hat{\mathbb{E}}[X]
=
\sup_{P\in\mathcal{P}}E^P[X]
\text{ for all }X\in\mathbb{L}^1_{\textit{G}}(\Omega).
\end{equation}
The family $\mathcal{P}$ is called a representing set for
$\hat{\mathbb{E}}$. Furthermore, for every $P\in\mathcal{P}$,
the conditional $G$-expectation has the following representation:
\begin{align}
\hat{\mathbb{E}}_t[X]
=
\operatorname*{ess\,\sup}_{Q\in\mathcal{P}(t,P)}
E^Q[X\mid\mathcal{F}_t]
\hspace{1cm}P\textnormal{-almost surely},
\end{align}
where
\[
\mathcal{P}(t,P)
:=
\left\{
Q\in\mathcal{P}
\,\middle|\,
E^Q[X]=E^P[X]
\text{ for all }X\in Lip(\Omega_t)
\right\}.
\]
\end{thm}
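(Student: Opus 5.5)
The plan follows the standard route of Denis--Hu--Peng and Hu--Peng. First build a concrete family of measures. On a Wiener space carrying a classical $d$-dimensional Brownian motion $W$, let $\mathcal{A}_\Gamma$ be the set of progressively measurable processes $\theta$ with values in $\Gamma^{1/2}:=\{Q^{1/2}:Q\in\Gamma\}$, a bounded closed set by \eqref{eqn: gamma}. Let $P_\theta$ be the law of $\int_0^\cdot \theta_s\,dW_s$ on $\Omega$, and let $\mathcal{P}$ be the weak closure of $\{P_\theta:\theta\in\mathcal{A}_\Gamma\}$.

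\textbf{Tightness and compactness.} Since $\Gamma$ is bounded, the Burkholder--Davis--Gundy inequality gives
\[
E^{P_\theta}\big[|B_t-B_s|^4\big]\le C|t-s|^2
\]
uniformly in $\theta$. Kolmogorov's criterion on each $[0,n]$ then yields tightness for the metric $\rho$, so the closure $\mathcal{P}$ is weakly compact by Prokhorov's theorem.

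\textbf{Identification on $\textnormal{Lip}(\Omega)$.} For $X=\varphi(B_{t_1}-B_{t_0},\dots,B_{t_n}-B_{t_{n-1}})$, I would show by backward induction on $n$ that
\[
\hat{\mathbb E}[X]=\sup_\theta E^{P_\theta}[X].
\]
For a single increment, $u(t,x)=\hat{\mathbb E}[\varphi(x+B_t)]$ solves the $G$-heat equation $\partial_t u=G(D^2u)$. Writing $G(A)=\frac12\sup_{Q\in\Gamma}\operatorname{tr}(AQ)$ casts this as the HJB equation of the stochastic control problem $\sup_\theta E[\varphi(x+\int_0^t\theta\,dW)]$. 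Uniqueness of viscosity solutions, which uses strong ellipticity and polynomial growth, identifies the two. The multi-increment case follows from the dynamic programming principle, together with the fact that concatenating controls at $t_j$ remains in $\mathcal{A}_\Gamma$, which matches the iterated definition of $\hat{\mathbb E}$. Passing to weak limits leaves the supremum unchanged for continuous $X$ of polynomial growth, by uniform integrability from the moment bounds. The identity then extends to $\mathbb L^1_G(\Omega)$ by density: both sides are $\|\cdot\|_{1,G}$-continuous, because $\sup_P E^P|X-X_k|=\hat{\mathbb E}|X-X_k|$ for Lipschitz approximants.

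\textbf{Conditional representation.} Fix $P\in\mathcal{P}$. For $X\in\textnormal{Lip}(\Omega)$ with $t=t_j$, the inequality $E^Q[X\mid\mathcal F_t]\le\hat{\mathbb E}_t[X]$ for $Q\in\mathcal P(t,P)$ follows from the conditional control representation: regular conditional distributions of $Q$ given $\mathcal F_t$ are again generated by admissible controls, shifted. Hence $E^Q[X\mid\mathcal F_t]\le \tilde u(B_{t_1}-B_{t_0},\dots)$, where $\tilde u$ is the function defining $\hat{\mathbb E}_t[X]$. The inequality holds $Q$-a.s., and therefore $P$-a.s., since the relevant random variables are $\mathcal F_t$-measurable and $Q=P$ on $\mathcal F_t$.

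For the reverse inequality, I would use a measurable selection of $\varepsilon$-optimal controls after time $t$, depending on the path up to $t$, and paste them onto $P$ on $\mathcal F_t$. This produces $Q^\varepsilon\in\mathcal P(t,P)$ with $E^{Q^\varepsilon}[X\mid\mathcal F_t]\ge\hat{\mathbb E}_t[X]-\varepsilon$. The family $\{E^Q[X\mid\mathcal F_t]\}$ is upward directed, again by pasting on $\mathcal F_t$-sets, so the essential supremum is attained as a monotone limit. The identity then extends to $\mathbb L^1_G$ by density, using $E^P|\hat{\mathbb E}_t[X]-\hat{\mathbb E}_t[X_k]|\le\hat{\mathbb E}|X-X_k|$ and the same bound for the essential supremum.

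\textbf{Main obstacle.} The hard step is the pasting and measurable selection in the conditional part. The difficulty is ensuring that pasted measures stay inside the weak closure $\mathcal P$ and agree with $P$ on $\mathcal F_t$. This is most delicate when $P$ is only a weak limit of the $P_\theta$ rather than some $P_\theta$ itself. I would first prove the result for $P=P_\theta$, where pasting is done at the level of controls. For a general $P$ I would approximate by $P_{\theta_k}\to P$ and pass to the limit using weak compactness and continuity of the integrands, following the argument of Hu and Peng.
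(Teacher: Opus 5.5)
The paper does not prove this theorem; it quotes it from \citet{hu2009representation} and \citet{hu2021extended}. Your plan is a sound reconstruction, but its unconditional part follows the Denis--Hu--Peng stochastic-control construction rather than the more abstract route of the cited representation result. That route works as follows. One writes $\hat{\mathbb E}$ on $\textnormal{Lip}(\Omega)$ as a supremum of dominated linear functionals (Hahn--Banach). Each functional becomes a probability on path space via Kolmogorov's extension and continuity criteria, using $\hat{\mathbb E}[|B_t-B_s|^4]\le C|t-s|^2$. One may then take $\mathcal P$ to be the maximal set of $P$ with $E^P\le\hat{\mathbb E}$ on $\textnormal{Lip}(\Omega)$, which is tight and closed.

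The abstract route needs no HJB identification, viscosity uniqueness or dynamic programming. It also makes pasting in the conditional part easy: a measure equal to $P$ on $\mathcal F_t$, with measurably selected $\varepsilon$-optimal conditional laws, lies in $\mathcal P$ simply because $E^Q[Y]\le E^P[\hat{\mathbb E}_t[Y]]\le\hat{\mathbb E}[Y]$. Your route instead gives an explicit description of $\mathcal P$ as the closure of laws of $\int\theta\,dW$. The cost is the approximation step for $P\notin\{P_\theta\}$. That step does work: a weak limit $Q$ of $Q_k\in\mathcal P(t,P_{\theta_k})$ agrees with $P$ on $\mathcal F_t$. Moreover, $E^{Q_k}[\xi X]\ge E^{Q_k}[\xi(\hat{\mathbb E}_t[X]-\varepsilon)]$ for bounded continuous $\mathcal F_t$-measurable $\xi\ge0$ passes to the limit by uniform integrability.

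Two details should be fixed. First, you justify $E^Q[X\mid\mathcal F_t]\le\hat{\mathbb E}_t[X]$ by saying that conditional laws of $Q$ are generated by shifted admissible controls. This need not hold for $Q$ in the weak closure: for non-convex $\Gamma$, chattering controls produce limits whose quadratic-variation density lies only in the convex hull of $\Gamma$. The inequality nevertheless holds for every $Q$ in any representing set, because for bounded nonnegative $\xi\in\textnormal{Lip}(\Omega_t)$
\[
E^Q\bigl[\xi(X-\hat{\mathbb E}_t[X])\bigr]\le\hat{\mathbb E}\bigl[\xi(X-\hat{\mathbb E}_t[X])\bigr]=\hat{\mathbb E}\bigl[\xi\hat{\mathbb E}_t[X]-\xi\hat{\mathbb E}_t[X]\bigr]=0 .
\]
Second, the upward-directedness claim is unnecessary. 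Pasting on $\mathcal F_t$-sets is also not obviously stable inside the weak closure. The measures $Q^\varepsilon$ already give $\operatorname{ess\,sup}_{Q\in\mathcal P(t,P)}E^Q[X\mid\mathcal F_t]\ge\hat{\mathbb E}_t[X]-\varepsilon$.
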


We define the capacity as 
\begin{equation}
c(A):= \sup_{P\in \mathcal{P}} P(A),\hspace{0.2cm} A\in \mathbb{F}
\end{equation}
where $ \mathbb{F}:= \bigcup\limits_{t\geq 0} \mathcal{F}_t$. 
A set $A\in \mathbb{F}$ is called polar if $c(A)=0$. We say that a property holds ``quasi-surely'' if it holds outside a polar set. 
Let $\mathcal{B}_b(\Omega)$ be the set of all bounded $\mathcal{F}$-measurable real-valued functions, and define $\mathbb{L}^p_b(\Omega)$ as the completion of $\mathcal{B}_b(\Omega)$ with respect to the norm $|\!| \cdot|\!|_{p,G}$. We define $\mathbb{L}_G^{\infty}(\Omega_t)$ ($\mathbb{L}_G^{\infty}(\Omega)$, respectively) as the completion of $Lip(\Omega_t)$ ($Lip(\Omega)$, respectively) with respect to the norm $|\!|\xi|\!|_{\infty,G}:=\inf\{C\ge 0\,|\, |\xi|\le C \textnormal{ quasi-surely } \}$. \citet{denis2011function} proved that  
\begin{align}
    \mathbb{L}^p_b(\Omega)=\{X|X\text{ is }\mathcal{F}\text{ measurable, } \hat{\mathbb{E}}[|X|^p]<\infty\text{ and } \lim_{n\rightarrow\infty}\hat{\mathbb{E}}[|X|^p\mathds{1}_{|X|>n}]=0\}.
\end{align}
Consequently, to establish $Y\in \mathbb{L}_b^1(\Omega)$, it suffices to show that $Y\in \mathbb{L}_G^p(\Omega)$ for some $p>1$.

We now consider several process spaces and norms. 
For $T>0$ and $p\ge 1$, define
\begin{align}
\mathbb{M}^{p,0}(0,T)&=\big\{ \eta_t(\omega)=\sum_{k=1}^{n}\xi_k(\omega) \mathds{1}_{[t_k,t_{k+1})}(t) \,|\,  n \in \mathbb{N},\xi_k\in \mathbb{L}_G^p(\Omega_{t_k}), 0\le t_k< t_{k+1}\leq T, k=1,2,\cdots,n\big\}\\
\mathbb{S}^0(0,T)&=\big\{(\varphi(\,\cdot\,,B_{t_1\wedge \cdot},\cdots,B_{t_n\wedge \cdot})\,|\, n\in\mathbb{N},   t_1,\cdots,t_n\in[0,T], \varphi\in C_{b,Lip}(\mathbb{R}^{n+1}) \big\}
\end{align}
with norms
\begin{align}
|\!|\eta|\!|_{\mathbb{M}^p}=(\hat{\mathbb{E}}[\int_0^T|\eta_u|^p\,du])^{\frac{1}{p}}\,,\,\,
|\!|\eta|\!|_{\mathbb{H}^p}=(\hat{\mathbb{E}}[(\int_0^T|\eta_u|^2\,du)^{\frac{p}{2}}])^{\frac{1}{p}}\,,\,\,
|\!|\eta|\!|_{\mathbb{S}^p}=(\hat{\mathbb{E}}[\sup_{0\leq u\leq T}|\eta_u|^p])^{\frac{1}{p}}\,.
\end{align}
Here, $C_{b,Lip}(\mathbb{R}^{n+1})$ is the space of all bounded Lipschitz continuous functions on $\mathbb{R}^{n+1}.$ 
The completions of $\mathbb{M}^{p,0}(0,T)$ with respect to the norms
$|\!|\cdot|\!|_{\mathbb{M}^p}$ and $|\!|\cdot|\!|_{\mathbb{H}^p}$
are denoted by $\mathbb{M}^p(0,T)$ and $\mathbb{H}^p(0,T)$,
respectively. The completion of $\mathbb{S}^{0}(0,T)$ with respect
to the norm $|\!|\cdot|\!|_{\mathbb{S}^p}$ is denoted by
$\mathbb{S}^p(0,T)$. 
The corresponding spaces $\mathbb{M}^p(t,T)$, $\mathbb{H}^p(t,T)$, and
$\mathbb{S}^p(t,T)$ are defined analogously for $0\leq t<T$.
For $d\in\mathbb{N}$, the spaces $\mathbb{M}^p(t,T;\mathbb{R}^d)$,
$\mathbb{H}^p(t,T;\mathbb{R}^d)$, and $\mathbb{S}^p(t,T;\mathbb{R}^d)$
are defined as the $d$-fold product spaces
$(\mathbb{M}^p(t,T))^d$, $(\mathbb{H}^p(t,T))^d$,
and $(\mathbb{S}^p(t,T))^d$, respectively.
Moreover, for $t\geq0$, we define
$\mathbb{M}^p(t,\infty;\mathbb{R}^d):=\bigcap_{T>t}\mathbb{M}^p(t,T;\mathbb{R}^d)$,
$\mathbb{H}^p(t,\infty;\mathbb{R}^d):=\bigcap_{T>t}\mathbb{H}^p(t,T;\mathbb{R}^d)$,
and
$\mathbb{S}^p(t,\infty;\mathbb{R}^d):=\bigcap_{T>t}\mathbb{S}^p(t,T;\mathbb{R}^d)$.

Following \citet{li2011stopping}, for $p\geq 1$ and
$\eta\in \mathbb{H}^p(t,T)$, we define the integrals
$\int_t^T\eta_u\,dB_u$ and $\int_t^T\eta_u\,d \langle B\rangle_u$.
We next introduce $G$-martingales and recall some of their properties.
\begin{defi}
    A process $M=(M_s)_{s\ge0}$ is called a $G$-martingale if
$M_s\in \mathbb{L}^1_G(\Omega_s)$ for $s\ge 0$ and
$\hat{\mathbb{E}}_s[M_t]=M_s$ for $0\le s\le t$.
A process $M$ is called a symmetric $G$-martingale if both $M$
and $-M$ are $G$-martingales.
\end{defi}

\subsection{BMO martingales}
 This subsection introduces $G$-BMO martingale generators and recalls
some of their properties. We refer the reader to
\citet{hu2018quadratic} for further details.
\begin{defi}
	Let $0\le t<T$ and $Z\in \mathbb{H}^2(t,T;\mathbb{R}^d).$
	We say  $Z$ is a $G$-BMO martingale generator if $Z$ satisfies  
	\begin{align}
	|\!|Z|\!|^2_{\textnormal{BMO}_G}:=\sup_{P\in \mathcal{P}}\sup_{\tau\in 
		\mathcal{T}_{[t,T]}}\Big|\!\Big| \mathbb{E}^P_{\tau}\Big[\int_{\tau}^TZ^i_uZ_u^j\,d\langle B^i,B^j\rangle_u\Big]\Big|\!\Big|_{L^\infty(P)} <\infty
	\end{align}    
	where $\mathcal{T}_{[t,T]}$ denotes the set of all $\mathcal{F}$-stopping times taking values in $[t,T].$
\end{defi}

We can define a new sublinear expectation and sublinear conditional expectation on a $G$-expectation space $(\Omega, Lip(\Omega), \hat{\mathbb{E}})$ using a $G$-BMO martingale generator.
For fixed $T>0,$ let  $Z\in \mathbb{H}^2(0,T;\mathbb{R}^d)$ be a $G$-BMO martingale generator. Then the process 
 \begin{align}
 M_t^Z:=e^{\int_0^tZ^i_u\,dB^i_u-\frac{1}{2}\int_0^t Z_u^ iZ_u^j\,d\langle B^i,B^j \rangle _u}\,,\; 0\le t\le T
 \end{align}
is a symmetric $G$-martingale.  
 Define a sublinear expectation $\hat{\mathbb{E}}^Z$ and a sublinear conditional expectation $\hat{\mathbb{E}}^Z_t$ as
\begin{align}\label{eqn:change}
\hat{\mathbb{E}}^Z[X]=\hat{\mathbb{E}}[M^Z_T X] \,,\;\hat{\mathbb{E}}^Z_t[X]=(M_t^Z)^{-1}\hat{\mathbb{E}}_t[M^Z_T X] 
\end{align} 
for $X\in \textnormal{Lip}(\Omega)$, respectively.  
We say  $\hat{\mathbb{E}}^Z$ ($\hat{\mathbb{E}}^Z_t$, respectively) is the sublinear expectation (the sublinear conditional expectation, respectively)   induced by $Z.$
The sublinear expectations
$\hat{\mathbb{E}}$ and $\hat{\mathbb{E}}^Z$ are equivalent.
Moreover, the process
\begin{equation}
\label{new_GBM}
B^Z=(B^{Z,i})_{1\le i\le d}:=\Big(B^i-\int_0^\cdot Z^j_u\,d\langle B^i,B^j\rangle _u\Big)_{1\le i\le d}
\end{equation}  is a $G$-Brownian motion under the sublinear expectation $\hat{\mathbb{E}}^Z$. 
It follows that
the quadratic variations of $B$ and $B^Z$ coincide under $\hat{\mathbb{E}}$ and $\hat{\mathbb{E}}^Z.$
For further details, see
\citet{xu2011girsanov,hu2018quadratic}.
\begin{lemma}\label{reverse holder inequality} 
	Define a function $\phi(x)=\big( 1+\frac{1}{x^2} \log{\frac{2x-1}{2(x-1)}}\big)^\frac{1}{2}-1.$ If $Z\in \mathbb{H}^2(0,T;\mathbb{R}^d)$ is a $G$-BMO martingale generator with $|\!|Z|\!|_{\textnormal{BMO}_G}\leq \phi(q)$ for $q>1$, then there is a constant $C_q>0$ such that 
	\begin{align}
	\sup_{P\in \mathcal{P}}\sup_{\tau\in \mathcal{T}_{[0,T]}} \Big|\!\Big|\mathbb{E}^{P}_{\tau}\Big[\Big(\frac{M_T^Z}{M_{\tau}^Z}\Big)^q\Big]\Big|\!\Big|_{L^\infty(P)}\leq C_q\,.
	\end{align}
	The exponent $q$ is called the order of the reverse H\"older inequality for $M^Z$. In addition, if 
	$K$ is a decreasing $G$-martingale such that $K_0=0$ and  $K_{T}\in \mathbb{L}_G^p(0,T)$ for  $p>\frac{q}{q-1}$, then $K$ is a decreasing $G$-martingale under $\hat{\mathbb{E}}^Z$.
\end{lemma}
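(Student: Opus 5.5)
The reverse H\"older inequality will be reduced to the classical single-measure argument of Kazamaki, applied uniformly over $P\in\mathcal{P}$. The key observation is that under each $P\in\mathcal{P}$ the canonical process $B$ is a continuous martingale. Moreover, $\int Z\,dB$ coincides $P$-a.s.\ with the It\^o integral, and $\langle B^i,B^j\rangle$ is the $P$-quadratic covariation. Hence $M^Z$ is, $P$-a.s., the classical stochastic exponential $\mathcal{E}(N)$ of the $P$-local martingale $N=\int_0^\cdot Z_u\,dB_u$. The definition of $\|Z\|_{\textnormal{BMO}_G}$ gives, for every $P$, the bound $\|N\|_{\textnormal{BMO}(P)}\le \|Z\|_{\textnormal{BMO}_G}\le\phi(q)$, with the same constant for all $P$.

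First, for fixed $P$, I will invoke Kazamaki's criterion (Theorem 3.1 in Kazamaki's monograph). Its threshold function is exactly the $\phi$ of the statement. It states that $\|N\|_{\textnormal{BMO}(P)}<\phi(q)$ implies the reverse H\"older inequality $R_q$, namely
\[
E^P_\tau\big[(\mathcal{E}(N)_T/\mathcal{E}(N)_\tau)^q\big]\le C_q,
\]
where $C_q$ depends only on $q$ and the BMO norm. Since the BMO bound is uniform, $C_q$ is uniform in $P$, and taking the supremum over $P$ and $\tau$ yields the first claim. To recall the mechanism, one writes
\[
(\mathcal{E}_T/\mathcal{E}_\tau)^q=\mathcal{E}(r q N)_{\tau,T}^{1/r}\exp\Big(c\,(\langle N\rangle_T-\langle N\rangle_\tau)\Big)^{(r-1)/r}
\]
for a suitable $r>1$ and constant $c=c(q,r)$. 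Applying H\"older and the supermartingale property of $\mathcal{E}(rqN)$ reduces the problem to a John--Nirenberg bound $E^P_\tau[e^{c'(\langle N\rangle_T-\langle N\rangle_\tau)}]\le (1-c'\|N\|^2_{\textnormal{BMO}})^{-1}$, and optimizing over $r$ produces $\phi$. The boundary case of equality is handled by the same computation, since the paper's $\phi$ is set up so that the optimization still closes, or by accepting a strict-inequality convention.

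For the second claim, I need $\hat{\mathbb{E}}^Z_s[K_t]=K_s$, i.e.\ $\hat{\mathbb{E}}_s[M^Z_TK_t]=M^Z_sK_s$. The first step is integrability: by the representation theorem, $\hat{\mathbb{E}}[|M^Z_TK_t|]\le \sup_P E^P[(M^Z_T)^q]^{1/q}\hat{\mathbb{E}}[|K_t|^{q/(q-1)}]^{(q-1)/q}$, which is finite because $p>q/(q-1)$ and $K$ is decreasing with $K_t\ge K_T$. The same H\"older bound with exponent slightly above $1$ places $M^Z_TK_t$ in $\mathbb{L}^1_G$, using the $\mathbb{L}^p_b$ characterization recalled above. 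Next, by the symmetric $G$-martingale property of $M^Z$, $\hat{\mathbb{E}}_t[M^Z_TK_t]=K_t\hat{\mathbb{E}}_t[M^Z_T]=K_tM^Z_t$, using $K_t\le0$ together with symmetry. It therefore suffices to show that $M^Z K$ is a $G$-martingale. Since $K$ is a continuous finite-variation process, $d(M^ZK)=K\,dM^Z+M^Z\,dK$. The term $\int K\,dM^Z=\int KM^ZZ\,dB$ is a symmetric $G$-martingale, which follows from the integrability just established and an approximation of $K$ in $\mathbb{M}^2$. The term $\int M^Z\,dK$ is a $G$-martingale because $M^Z>0$ and, by the known fact of Hu--Ji--Peng--Song, $\int\eta\,dK$ is a $G$-martingale whenever $\eta\ge0$ is in a suitable $\mathbb{M}^{p'}_G$ space.

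The main obstacle is this last step. The required statement, that $\int \eta\,dK$ is a decreasing $G$-martingale for $\eta\ge0$, needs the regularity $\eta\in\mathbb{M}^{p'}_G$ with $p'$ conjugate to $p$. Reverse H\"older gives $M^Z$ moments of order $q$, and the exponent condition $p>q/(q-1)$ is exactly what makes the pairing work. If an $\mathbb{M}^{p'}_G$ approximation of $M^Z$ is awkward, I will instead truncate $M^Z\wedge n$ and pass to the limit using the uniform $L^q$ bounds from the first part.
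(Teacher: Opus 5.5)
The paper gives no proof of this lemma; it is quoted from \citet{hu2018quadratic}. Your outline is the standard route and is sound in its main line: apply Kazamaki's criterion under each $P\in\mathcal P$ using the $P$-uniform bound $\|N\|_{\mathrm{BMO}(P)}\le\|Z\|_{\textnormal{BMO}_G}$, then show via the product rule that $M^ZK$ is a $G$-martingale. Two steps, however, need repair.

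\textbf{The boundary case.} Kazamaki's Theorem~3.1 is stated for $\|N\|_{\mathrm{BMO}}<\phi(q)$. Citing it therefore does not give order $q$ when $\|Z\|_{\textnormal{BMO}_G}=\phi(q)$. Your appeal to the ``mechanism'' does not fill this gap, because that computation does not produce $\phi$:
\begin{itemize}
\item H\"older against $\mathcal E(rqN)$ plus John--Nirenberg needs $\frac{rq(rq-1)}{2(r-1)}\|N\|^2_{\mathrm{BMO}}<1$.
\item Optimizing over $r>1$ gives the threshold $\sqrt{2/q}\,/(\sqrt q+\sqrt{q-1})$, which stays below $\sqrt2$.
\item By contrast, $\phi(q)\to\infty$ as $q\downarrow1$, and that blow-up is exactly what makes Kazamaki's criterion useful.
\end{itemize}
For $q$ not too close to $1$, this simpler threshold actually exceeds $\phi(q)$, so equality is covered there, but not as $q\downarrow1$. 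The clean fix is either to read the hypothesis as a strict inequality, or to conclude the reverse H\"older inequality only for every order $q'<q$, using that $\phi$ is continuous and strictly decreasing. Either suffices for all later uses in the paper, which only need some order $q>1$.

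\textbf{The term $\int M^Z\,dK$.} Lemma~3.4 of \citet{hu2014backward} requires the nonnegative integrand to lie in $\mathbb S^\alpha_G(0,T)$, with $K_T\in\mathbb L_G^{\alpha/(\alpha-1)}$; an $\mathbb M$-space is not enough. Its proof approximates $\int X\,dK$ by Riemann sums $\sum_i X_{t_i}(K_{t_{i+1}}-K_{t_i})$. The error is bounded by the modulus of continuity of $X$ times $|K_T|$, and a $du$-integrated norm does not control that. Here is how to make the step work:
\begin{itemize}
\item Take $\alpha:=p/(p-1)$; this satisfies $\alpha<q$ exactly because $p>q/(q-1)$. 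It then suffices to prove $M^Z\in\mathbb S^\alpha_G(0,T)$.
\item The truncation $M^Z\wedge n$ is a bounded Lipschitz function of the $\mathbb S_G$-processes $\int_0^\cdot Z\,dB$ and $\int_0^\cdot Z^iZ^j\,d\langle B^i,B^j\rangle$, so it lies in $\mathbb S^\alpha_G$.
\item Doob's inequality under each $P$, combined with the first part, gives $\sup_{P\in\mathcal P}E^P[\sup_{t\le T}(M^Z_t)^q]\le (q/(q-1))^qC_q$.
\item Since $q>\alpha$, this yields $\hat{\mathbb E}[\sup_{t\le T}(M^Z_t)^\alpha\mathds 1_{\{\sup_{t\le T}M^Z_t>n\}}]\to0$, so $M^Z\wedge n\to M^Z$ in $\mathbb S^\alpha_G$.
\end{itemize}
Your truncation fallback is therefore the right idea, but it has to be run in the $\mathbb S^\alpha_G$-norm, not just with $L^q$ moment bounds. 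The same exponent bookkeeping ($1/p+1/q<1$) handles the symmetric term $\int KM^ZZ\,dB$.

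Finally, the $\mathbb L^p_b$ characterization only gives $M^Z_TK_t\in\mathbb L^1_b$. Membership in $\mathbb L^1_G$, which $\hat{\mathbb E}_s$ requires, also needs quasi-continuity. That holds here because $M^Z_T$ and $K_t$ are quasi-continuous, but it should be stated.
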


\begin{lemma}\label{K is MG in BMO}
   Let $Z$ be a $G$-BMO martingale generator, let $q>1$ be the order
of the reverse H\"older inequality for $M^Z$, and let $K$ be a
decreasing $G$-martingale such that $K_0=0$ and
$K_t\in \mathbb{L}^p(0,t)$ for $p>\frac{q}{q-1}$.
Then $K$ is a decreasing $G$-martingale under
$\hat{\mathbb{E}}^Z$.
\end{lemma}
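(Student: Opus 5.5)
The plan is to reduce this statement to the second part of Lemma~\ref{reverse holder inequality}, working on the horizon $[0,t]$ for each fixed $t$. First I fix $s\le t$ and restrict $Z$ to $[0,t]$. The restricted process is still a $G$-BMO martingale generator, with $\|Z\|_{\textnormal{BMO}_G}$ on $[0,t]$ no larger than on $[0,T]$. This holds because the supremum defining the norm runs over stopping times in $[0,t]$ with a shorter integration range, and the integrand $Z^iZ^j\,d\langle B^i,B^j\rangle$ is nonnegative. Hence $M^Z$ on $[0,t]$ satisfies the reverse H\"older inequality of the same order $q$ with the same constant $C_q$. By hypothesis $K_t\in\mathbb{L}^p_G(\Omega_t)$ with $p>q/(q-1)$.

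Applying Lemma~\ref{reverse holder inequality} on the horizon $t$ then shows that $(K_u)_{u\le t}$ is a decreasing $G$-martingale under the expectation induced by $Z|_{[0,t]}$. It remains to check that this induced expectation agrees with $\hat{\mathbb{E}}^Z$ on $\mathbb{L}^1_G(\Omega_t)$. For $X\in\mathbb{L}^1_G(\Omega_t)$, the symmetric $G$-martingale property of $M^Z$ together with the tower property gives
\[
\hat{\mathbb{E}}_s[M^Z_TX]=\hat{\mathbb{E}}_s\big[X\,\hat{\mathbb{E}}_t[M^Z_T]\big]=\hat{\mathbb{E}}_s[M^Z_tX].
\]
The first equality uses that $X$ is $\mathcal{F}_t$-measurable, which lets us pull it out, and that $\hat{\mathbb{E}}_t[\pm M^Z_T]=\pm M^Z_t$ because $M^Z$ is symmetric. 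So $\hat{\mathbb{E}}^Z_s[K_t]=K_s$ for all $s\le t$. Since $t$ was arbitrary, $K$ is a $G$-martingale under $\hat{\mathbb{E}}^Z$; it is decreasing path by path, which is a property unaffected by the change of expectation.

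The main obstacle is the pull-out step for a signed, possibly unbounded $X=K_t\le 0$ multiplying $M^Z_T>0$. Pulling such a factor out of a conditional $G$-expectation is only valid once integrability is in hand, namely $M^Z_TK_t\in\mathbb{L}^1_G$. I would get this from H\"older's inequality, pairing the reverse H\"older bound for the $q$-th moment of $M^Z$ with the $p$-th moment of $K_t$; the condition $1/p+1/q<1$ leaves room for this. I would also approximate $K_t$ by its bounded truncations $K_t\vee(-n)$ and pass to the limit. Convergence in the limit comes from uniform integrability, via the characterization of $\mathbb{L}^p_b$ quoted from \citet{denis2011function}. If the direct argument gets delicate, the fallback is to invoke the proof of Lemma~\ref{reverse holder inequality} from \citet{hu2018quadratic} verbatim with $T$ replaced by $t$.
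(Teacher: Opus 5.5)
Your reduction is correct and is essentially what the paper does. The paper gives no separate proof and treats this lemma as a restatement of the second assertion of Lemma~\ref{reverse holder inequality}, taken from \citet{hu2018quadratic}. One simplification: since $K$ is decreasing with $K_0=0$, you have $|K_t|\le|K_T|$, so the hypothesis already gives $K_T\in\mathbb{L}^p_G(\Omega_T)$ and that lemma applies directly on $[0,T]$. Your localization to $[0,t]$ and the identity $\hat{\mathbb{E}}_s[M^Z_TK_t]=\hat{\mathbb{E}}_s[M^Z_tK_t]$ (with its truncation and H\"older argument) are valid but not needed.
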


\section[Finite-horizon quadratic G-BSDEs]{Finite-horizon quadratic $G$-BSDEs}\label{section 3}

Consider the SDE and $G$-BSDE 
\begin{align}
\label{SDE} X_s^{t,\xi}&=\xi+\int_t^s b(u,X_u^{t,\xi})\,du+\int_t^s h_{ij}(u,X_u^{t,\xi})\,d\langle B^i,B^j\rangle_u+\int_t^s \sigma_j(u,X_u^{t,\xi})\,dB^j_u
\\
\label{QBSDE}Y_s^{t,\xi}&=\Phi(X_T^{t,\xi})+\int_s^Tf(u,X_u^{t,\xi},Y_u^{t,\xi},Z_u^{t,\xi})\,du+\int_s^Tg_{ij}(u,X_u^{t,\xi},Y_u^{t,\xi},Z_u^{t,\xi})\,d\langle B^i,B^j\rangle_u\\&-\int_s^TZ_u^{t,\xi}\,dB_u-(K_T^{t,\xi}-K_s^{t,\xi}) 
\end{align}
for $t\le s\le T$, $\xi\in \mathbb{L}_G^1(\Omega_t;\mathbb{R}^m),$ $b,h_{ij},\sigma_j:[0,T]\times\mathbb{R}^m\mapsto \mathbb{R}^m$, 
$\Phi:\mathbb{R}^m\mapsto \mathbb{R},$   $f,g_{ij}:[0,T]\times \mathbb{R}^m\times \mathbb{R}\times \mathbb{R}^d\mapsto 
\mathbb{R}.$
Occasionally, we write $X^{t,\xi},Y^{t,\xi},Z^{t,\xi},K^{t,\xi}$ as $X,Y,Z,K,$ respectively, omitting the superscripts $t,\xi.$

\begin{assume}\label{assumption} Assume the functions $b,h_{ij}, \sigma_j$ satisfy the following properties.  
	\begin{enumerate}
		\item For $1\leq i,j \leq d$, $h_{ij}=h_{ji}$.    \label{finite assumption first}
		\item The functions $b,h_{ij},\sigma_j $ are continuous in $s$.\label{finite assumption second}
		\item There is a constant $C_1>0$ such that 
		\begin{align}
		&|b(s,x)-b(s,x')|+\sum_{i,j=1}^d|h_{ij}(s,x)-h_{ij}(s,x')|+\sum_{j=1}^d|\sigma_{j}(s,x)-\sigma_{j}(s,x')|\leq C_1|x-x'|
		\end{align}
		for $s\in[t,T],$ $x,x'\in \mathbb{R}^m$.
		\label{finite assmption third}
		
	\end{enumerate}
\end{assume}

The following theorem states the existence, uniqueness and regularity of solutions to the SDE \eqref{SDE}.

\begin{thm}\label{SDE sublinear property} 
	Suppose Assumption~\ref{assumption} holds. Then for $t>0$ and $\xi\in \mathbb{L}_G^p(\Omega_t;\mathbb{R}^m)$, 
there exists a unique solution $X^{t,\xi}$ to \eqref{SDE}. 
	For any $G$-BMO martingale generator $Z$ and $p\geq 2$,
    there exists a constant $C$ depending only on $C_1$, $\overline{\sigma}$, $p$, $T$, and the order $q$ of the reverse H\"older inequality for $M^Z$ 
      such that 
	 	\begin{align}
	 &\hat{\mathbb{E}}^Z_t[|X_{t+\delta}^{t,\xi}-X_{t+\delta}^{t,\xi'}|^p]\leq C|\xi-\xi'|^p\,,\\
	 &\hat{\mathbb{E}}^Z_t[|X_{t+\delta}^{t,\xi}|^p]\leq C(1+|\xi|^p)\,,\\
	 &\hat{\mathbb{E}}^Z_t[\sup_{s\in [t,t+\delta]}|X_s^{t,\xi}-\xi|^p]\leq C(1+|\xi|^p)\delta^\frac{p}{2}\,
	 \end{align}
	 for all $\xi,\xi'\in \mathbb{L}_G^p(\Omega_t;\mathbb{R}^m)$ 
  and $\delta\in[0,T-t].$ 
\end{thm}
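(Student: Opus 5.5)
Existence and uniqueness of $X^{t,\xi}$ under Assumption~\ref{assumption} is the standard Lipschitz theory for $G$-SDEs (Peng, \citet{peng2019nonlinear}): a Picard iteration in $\mathbb{S}^p(t,T;\mathbb{R}^m)$ converges, using the $G$-BDG inequality $\hat{\mathbb{E}}_t[\sup_s|\int_t^s\eta\,dB|^p]\le C\,\hat{\mathbb{E}}_t[(\int_t^T|\eta_u|^2du)^{p/2}]$ and the bound $d\langle B^i,B^j\rangle_u\le \overline{\sigma}^2 du$ (in total variation, up to a dimensional constant). I cite this and concentrate on the estimates under $\hat{\mathbb{E}}^Z_t$.

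The plan is to rewrite the SDE in terms of the $\hat{\mathbb{E}}^Z$-$G$-Brownian motion $B^Z$ from \eqref{new_GBM}. Substituting $dB^j=dB^{Z,j}+Z^k\,d\langle B^j,B^k\rangle$ gives
\[
X_s=\xi+\int_t^s b(u,X_u)\,du+\int_t^s\bigl(h_{ij}(u,X_u)+\sigma_i(u,X_u)Z^j_u\bigr)\,d\langle B^i,B^j\rangle_u+\int_t^s\sigma_j(u,X_u)\,dB^{Z,j}_u ,
\]
and $\langle B\rangle=\langle B^Z\rangle$. The new drift is not Lipschitz-bounded, since $Z$ is only BMO. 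To handle it I will not work directly with $\hat{\mathbb{E}}^Z$. Instead I use the representation \eqref{eqn:change} together with Lemma~\ref{reverse holder inequality}. For any $\mathcal{F}_T$-measurable nonnegative $\Xi$, Hölder with conjugate exponents $q,q'$ gives, under each $P\in\mathcal{P}$,
\[
\hat{\mathbb{E}}^Z_t[\Xi]=\hat{\mathbb{E}}_t\Bigl[\tfrac{M^Z_T}{M^Z_t}\Xi\Bigr]\le \operatorname*{ess\,sup}_Q E^Q\Bigl[\bigl(\tfrac{M^Z_T}{M^Z_t}\bigr)^q\Big|\mathcal{F}_t\Bigr]^{1/q}\hat{\mathbb{E}}_t[\Xi^{q'}]^{1/q'}\le C_q^{1/q}\,\hat{\mathbb{E}}_t[\Xi^{q'}]^{1/q'} .
\]
This reduces every claimed bound under $\hat{\mathbb{E}}^Z_t$ to the corresponding bound under $\hat{\mathbb{E}}_t$ with exponent $pq'$ in place of $p$.

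Under $\hat{\mathbb{E}}_t$ the three estimates are classical. I prove them with exponent $r=pq'\ge2$, using the original equation with Lipschitz coefficients, BDG, Hölder in time, and Gronwall, applied to $\hat{\mathbb{E}}_t[\sup_{u\le s}|X^{t,\xi}_u-X^{t,\xi'}_u|^r]$:
\begin{align}
\hat{\mathbb{E}}_t\bigl[|X^{t,\xi}_{s}-X^{t,\xi'}_{s}|^r\bigr]&\le C|\xi-\xi'|^r,\\
\hat{\mathbb{E}}_t\bigl[\sup_{u\le s}|X^{t,\xi}_u|^r\bigr]&\le C(1+|\xi|^r),
\end{align}
and, using linear growth $|b(u,x)|\le C(1+|x|)$,
\[
\hat{\mathbb{E}}_t\Bigl[\sup_{s\in[t,t+\delta]}|X_s-\xi|^r\Bigr]\le C(1+|\xi|^r)\delta^{r/2}.
\]
Here linear growth follows from Lipschitz continuity plus continuity in $s$ on the compact $[0,T]$. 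The conditioning on $\mathcal{F}_t$ with $\xi\in\mathbb{L}^p_G(\Omega_t)$ is handled in the usual way: prove the bounds for deterministic $\xi=x$ and simple $\xi$, then pass to the limit by density. Taking the $1/q'$ power turns $C|\xi-\xi'|^{pq'}$ into $C|\xi-\xi'|^p$, and likewise $\delta^{pq'/2}$ into $\delta^{p/2}$. This gives exactly the stated bounds, with $C$ depending on $C_1,\overline{\sigma},p,T,q$ and on $\sup_s|b(s,0)|$ and the like, which are absorbed as constants.

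The main obstacle is the conditional Hölder step. It needs the ess-sup representation of $\hat{\mathbb{E}}_t$ from the representation theorem in Section~\ref{section 2}, applied under each fixed $P$ to the product $\frac{M^Z_T}{M^Z_t}\Xi$. One must check that $\operatorname*{ess\,sup}$ of a product is bounded by the product of the ess-sups of the $q$- and $q'$-moments, which follows from conditional Hölder under each $Q\in\mathcal{P}(t,P)$. One must also check that the stopping-time bound of Lemma~\ref{reverse holder inequality} at $\tau=t$ controls $E^Q[(M_T^Z/M_t^Z)^q\mid\mathcal{F}_t]$ uniformly over $Q\in\mathcal{P}$. A secondary issue is that $X^{t,\xi}_{t+\delta}$ need not be quasi-continuous enough for \eqref{eqn:change}, which is defined on $\mathrm{Lip}(\Omega)$. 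I would extend \eqref{eqn:change} to $\mathbb{L}^{pq'}_G$ by the same Hölder bound, which shows it is continuous in that norm.
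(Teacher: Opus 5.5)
Your proposal is correct and follows the paper's route: the paper cites Peng's Theorem~5.1.3 for well-posedness, then combines the classical conditional estimates from Chapter~5 of \citet{peng2019nonlinear} with the reverse H\"older inequality of Lemma~\ref{reverse holder inequality}. Your conditional H\"older step at exponent $pq'$, followed by taking the $1/q'$ power, is exactly the detail the paper leaves implicit. One routine point: for $\xi$ only in $\mathbb{L}^p_G(\Omega_t)$, you get the exponent-$pq'$ bounds by localizing on $\{|\xi|\le N\}\in\mathcal{F}_t$, not by $\mathbb{L}^p$-density.
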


\begin{proof}
Existence and uniqueness follow from \cite[Theorem 5.1.3]{peng2019nonlinear}. The three inequalities are consequences of Lemma~\ref{reverse holder inequality} and the results in Chapter~5 of \cite{peng2019nonlinear}.
\end{proof}

We now focus on the BSDE \eqref{QBSDE}. 
We first state the assumptions on the drivers $f$ and $g_{ij}$ and the terminal condition $\Phi$.

\begin{assume}\label{assumption2} Assume the functions $\Phi, f,g_{ij}$ satisfy the following properties.  
	\begin{enumerate}
		\item For $1\leq i,j \leq d$, $g_{ij}=g_{ji}.$    
		\item The functions $f$ and $g$ are continuous in $s$.
		\item There is a constant $C_1>0$ such that 
		\begin{align}
    	&|\Phi(x)-\Phi(x')|\leq C_1|x-x'|\,,\\
		&|f(s,x,y,z)-f(s,x',y',z')|+\sum_{i,j=1}^d|g_{ij}(s,x,y,z)-g_{ij}(s,x',y',z')|\\ &\hspace{2cm}\leq C_1(|x-x'|+|y-y'|+(1+|z|+|z'|)|z-z'|)
		\end{align}
		for $s\in[t,T],$ $x,x'\in \mathbb{R}^m$, $y,y'\in \mathbb{R}$,   $z,z'\in \mathbb{R}^d.$
		
		\item There is a constant $C_2>0$ such that
		\begin{align}
		|f(s,x,0,0)|+\sum_{i,j=1}^d|g_{ij}(s,x,0,0)|+|\Phi(x)|\leq C_2    
		\end{align}
		for $s\in[t,T],$ $x\in \mathbb{R}^m$.
	\end{enumerate}
\end{assume}
\noindent Under Assumption~\ref{assumption2}, it follows that
	\begin{align}
	|f(s,x,y,z)|+\sum_{i,j=1}^d|g_{ij}(s,x,y,z)|\leq C_2+2C_1(1+|y|+|z|^2) 
	\end{align}
		for $s\in[0,T],$ $x\in \mathbb{R}^m$, $y\in \mathbb{R}$,   $z\in \mathbb{R}^d.$

\begin{defi}
A triple $(Y,Z,K)$ is called a solution of \eqref{QBSDE} on $[t,T]$ if
$(Y,Z)\in\mathbb{S}^{2}(t,T)\times
\mathbb{H}^{2}(t,T;\mathbb{R}^{d})$,
$K$ is a decreasing $G$-martingale satisfying
$K_t=0$ and $K_T\in\mathbb{L}_G^{2}(\Omega_T)$,
and \eqref{QBSDE} holds for all $s\in[t,T]$, quasi-surely.
\end{defi}

The remainder of this section studies the main properties and
existence of solutions to the BSDE. We first establish uniqueness,
a comparison principle, regularity, and stability in
Subsection~\ref{sec:prior}. We then prove the existence of solutions
in Subsection~\ref{sec:exist}. The financial application to bond
pricing with endogenous short-rate feedback is developed separately
in Section~\ref{sec:app_1}.

\subsection{A priori estimates}
\label{sec:prior}
\begin{thm}\label{Z is BMO and K is intgrable}
	Suppose Assumptions~\ref{assumption} and~\ref{assumption2} hold,
and let the triple $(Y,Z,K)=(Y^{t,\xi},Z^{t,\xi},K^{t,\xi})$
be a solution to \eqref{QBSDE}. If $|Y_s|$ is bounded by a constant for $t\leq s \leq T$, then $\big|\!\big|\sup_{t\leq u \leq T}|Y_u|\big|\!\big|_{\mathbb{L}^{\infty}}+|\!|Z|\!|_{\textnormal{BMO}_G} \leq C$ where $C$ is a constant depending only on $C_1,$ $C_2$,  $\overline{\sigma}$ and $T$. Moreover, $K\in \mathbb{L}_G^p(\Omega_T)$ for all $p\geq1$.
\end{thm}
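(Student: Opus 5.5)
We prove the three claims in order: the sup-norm bound on $Y$, then the BMO bound on $Z$, then integrability of $K$. The only input is the growth bound
\[
|f(s,x,y,z)|+\sum_{i,j}|g_{ij}(s,x,y,z)|\le C_2+2C_1(1+|y|+|z|^2),
\]
which follows from Assumption~\ref{assumption2}.

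\textbf{Bound on $Y$.} We use a linearization in $(y,z)$ combined with the Girsanov change of measure from Section~\ref{section 2}. We write
\[
f(u,X_u,Y_u,Z_u)=f(u,X_u,0,0)+a_uY_u+\beta_u\cdot Z_u,\qquad |a_u|\le C_1,\quad |\beta_u|\le C_1(1+|Z_u|),
\]
and decompose each $g_{ij}$ in the same way, with coefficients $a^{ij}_u$ and $\beta^{ij}_u$.

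The terms $\beta_u\cdot Z_u\,du$ and $\beta^{ij}_u\cdot Z_u\,d\langle B^i,B^j\rangle_u$ must be absorbed into a new $G$-Brownian motion. By strong ellipticity, $du$ is dominated by $\underline\sigma^{-2}\,d\langle B\rangle$, so the $du$-term can be rewritten as a $d\langle B\rangle$-term. This is the usual trick from \citet{hu2014comparison}. It yields a process $\tilde\beta$ with $|\tilde\beta|\le C(1+|Z|)$.

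At this stage we do not yet know that $\tilde\beta$ is $G$-BMO. We therefore bound $|Y|$ first by an exponential-transform argument, which avoids the Girsanov step entirely. Apply $G$-It\^o's formula to $e^{\gamma e^{\kappa s}|Y_s|}$, or more simply to $\exp(\gamma Y_s)$ and $\exp(-\gamma Y_s)$ separately, with $\gamma=2C_1\underline\sigma^{-2}+2C_1+1$ chosen large enough to absorb the $|z|^2$ growth. The It\^o correction $\tfrac12\gamma^2 e^{\gamma Y}Z^iZ^j\,d\langle B^i,B^j\rangle$ then dominates the quadratic part of the generator.

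The remaining terms are handled as follows. The term linear in $|y|$ is treated with a time weight $e^{\kappa s}$, which is a Gronwall step. The $K$-term has the correct sign: $-dK\ge0$ enters $e^{\gamma Y}$ with the right sign for the upper estimate. For the lower estimate we use instead that $\int e^{-\gamma Y}\,dK$ is a $G$-martingale-type term with nonpositive expectation; this is valid because $e^{-\gamma Y}$ is bounded by the a priori boundedness hypothesis.

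Taking $\hat{\mathbb E}_s$ removes $\int Z\,dB$, since the integrand is bounded times $Z\in\mathbb H^2$, so the stochastic integral is a symmetric $G$-martingale. This gives
\[
e^{\pm\gamma Y_s}\le C(C_1,C_2,\overline\sigma,T),
\]
and hence $\|\sup_u|Y_u|\|_{\mathbb L^\infty}\le C$.

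\textbf{BMO bound on $Z$.} Apply $G$-It\^o's formula to $\psi(Y_s)=(e^{\gamma(Y_s+M)}-1-\gamma(Y_s+M))/\gamma^2$, where $M$ is the bound on $Y$ just obtained. This is the standard choice with $\psi''-2C'|\psi'|\ge1$, for which the quadratic growth is absorbed. Integrate from a stopping time $\tau$ to $T$ and take $E^P_\tau$ for each $P\in\mathcal P$. Under each $P$, $K$ is decreasing and $\psi'\ge0$, so the $\int\psi'(Y)\,dK$ term has the favourable sign. The $dB$-integral is a true $P$-martingale because its integrand is bounded. Together with the bound on $Y$, this yields
\[
E^P_\tau\!\left[\int_\tau^T Z^iZ^j\,d\langle B^i,B^j\rangle\right]\le C
\]
uniformly in $P$ and $\tau$.

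\textbf{Integrability of $K$.} Solve the equation for $K_T$:
\[
K_T=Y_0-\Phi(X_T)-\int_t^T f\,du-\int_t^T g_{ij}\,d\langle B^i,B^j\rangle+\int_t^T Z\,dB .
\]
Hence $|K_T|\le C+C\int_t^T|Z|^2\,du+|\int_t^T Z\,dB|$. For a $G$-BMO generator, the energy inequality bounds $\hat{\mathbb E}[(\int|Z|^2\,d\langle B\rangle)^p]$ by $p!\,\|Z\|_{\mathrm{BMO}_G}^{2p}$. The Burkholder--Davis--Gundy inequality under each $P$ then gives every moment of $\int Z\,dB$. It follows that $K_T\in\mathbb L^p_G$ for all $p$.

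\textbf{Main obstacle.} The delicate point is the lower bound for $Y$, that is, the $e^{-\gamma Y}$ estimate, where $dK$ enters with the unfavourable sign. I would handle it through the representation $\hat{\mathbb E}_s[\int_s^T\eta\,dK]\le0$ for bounded nonnegative $\eta$. I would apply this in the It\^o formula before taking sublinear expectations, exploiting the sub-additivity of $\hat{\mathbb E}_s$.
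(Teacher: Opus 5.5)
\textbf{The BMO step fails as written, because of a sign error.} In the paper's convention $Y_s=\cdots-(K_T-K_s)$, you have $dY_u=\cdots+dK_u$ with $dK\le0$. For $d=1$ and $f=0$, write $g_u:=g(u,X_u,Y_u,Z_u)$. It\^o's formula for your $\psi$ then gives
\[
\frac12\int_\tau^T\psi''(Y_u)|Z_u|^2\,d\langle B\rangle_u=\psi(\Phi(X_T))-\psi(Y_\tau)+\int_\tau^T\psi'(Y_u)\,g_u\,d\langle B\rangle_u-\int_\tau^T\psi'(Y_u)Z_u\,dB_u-\int_\tau^T\psi'(Y_u)\,dK_u .
\]
Since $\psi'\ge0$, the last term is \emph{nonnegative}, so it cannot be dropped. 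It also cannot be controlled:
\begin{itemize}
\item Under a fixed $P$, the $G$-martingale property of $K$ only gives $\operatorname{ess\,sup}_Q E^Q_\tau[K_T-K_\tau]=0$. It says nothing about $E^P_\tau[K_\tau-K_T]$.
\item Estimating $E^P_\tau[K_\tau-K_T]$ through the equation returns $2C_1\|\psi'\|_\infty E^P_\tau[\int_\tau^T|Z_u|^2d\langle B\rangle_u]$. The constant $2C_1\|\psi'\|_\infty$ cannot be absorbed.
\item Passing to $\hat{\mathbb E}_\tau$ does not help either: $\hat{\mathbb E}_\tau[\int_\tau^T\eta_u\,dK_u]=0$ does not give $\hat{\mathbb E}_\tau[-\int_\tau^T\eta_u\,dK_u]=0$.
\end{itemize}
The remedy is a transform that is \emph{decreasing} in $y$. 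This is what the paper does with $e^{-\gamma Y}$, $\gamma=6C_1$. There the $K$-contribution is $\gamma\int_\tau^Te^{-\gamma Y_u}dK_u\le0$ pathwise and can be discarded under every $P$. Your $\psi$ reflected, $y\mapsto\psi(-y)$, would also work.

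\textbf{The same sign is reversed in your bound on $Y$.} You have the roles of the two exponentials backwards. In the identity for $e^{-\gamma Y_s}$ the $K$-term is $+\gamma\int_s^Te^{-\gamma Y_u}dK_u\le0$, and it can simply be dropped. In the identity for $e^{\gamma Y_s}$ it is $-\gamma\int_s^Te^{\gamma Y_u}dK_u\ge0$. That is the delicate direction, and sub-additivity of $\hat{\mathbb E}_s$ points the wrong way there.

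What works is the following. Keep $N:=\gamma\int_s^Te^{\gamma Y_u}dK_u$ on the same side as $e^{\gamma Y_s}$. Then use three facts:
\begin{itemize}
\item $\hat{\mathbb E}_s[N]=0$, by Lemma~3.4 of Hu, Ji, Peng and Song (2014), since $e^{\gamma Y}$ is bounded and positive;
\item translation invariance for the $\mathcal F_s$-measurable $e^{\gamma Y_s}$;
\item symmetry of the $dB$-integral.
\end{itemize}
Together these give the exact identity $e^{\gamma Y_s}=\hat{\mathbb E}_s[R]$, where $R$ collects the remaining terms. You then bound $R$ pathwise.

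With this correction your first step goes through. Apply it to $\exp(\pm\gamma e^{\kappa s}Y_s)$, which handles the linear $|y|$-term and avoids a $G$-Tanaka formula for $|Y|$. Take $\gamma$ large enough to absorb the $|z|^2$-growth of both $f$ and $g$. This step is then a valid and more elementary alternative to the paper's route. The paper obtains the sup-bound last, via linearization with $l^\epsilon$, Girsanov by $b^\epsilon$ and reverse H\"older estimates uniform in $\epsilon$, so it needs the BMO bound and $K\in\mathbb L^p_G$ first.

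Absorbing the $|z|^2$-term of $f$ forces a dependence on $\underline\sigma$. The paper avoids this only by taking $f=0$. Your moment bound for $K$ is the same as the paper's.
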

\begin{proof}
	For simplicity, assume $d=1,m=1,$ $t=0$ and $f=0.$
	We first prove that $Z$ is a $G$-BMO martingale generator and that the norm $|\!|Z|\!|_{\textnormal{BMO}_G}$ is bounded by a constant depending only on $C_1,$ $C_2$, $\overline{\sigma}$, $T$ and  $\big|\!\big|\sup_{0\leq u \leq T}|Y_u|\big|\!\big|_{\mathbb{L}^{\infty}}$, and then show that $K\in \mathbb{L}^p$ for every $p\geq 1.$  Finally, we prove that the norm $\big|\!\big|\sup_{0\leq u \leq T}|Y_u|\big|\!\big|_{\mathbb{L}^{\infty}}$ is bounded by a constant depending only on $C_1$, $C_2$, $\overline{\sigma}$ and $T.$

    Applying It\^{o}'s formula to $e^{-\gamma Y}$ 	 
    for a constant $\gamma>0$ which will be specified later, we have 
	\begin{align}
	\frac{\gamma^2}{2}\int_{\tau}^T   e^{-\gamma Y_u}|Z_u|^2\,d\langle B \rangle _u= &\; e^{-\gamma \Phi(X_T)}-e^{-\gamma Y_{\tau}}-\gamma \int_{\tau}^T e^{-\gamma Y_u}g(u,X_u,Y_u,Z_u)\,d\langle B \rangle _u \\
	&+\int_{\tau}^T\gamma e^{-\gamma Y_u}Z_u\,dB_u+\int_{\tau}^T \gamma e^{-\gamma Y_u}\,dK_u\quad P\textnormal{-almost surely}
	\end{align}
    where $P\in \mathcal{P}$ and $\tau\in\mathcal{T}_{[0,T]}.$  
	Since $|g(s,x,y,z)|\leq C_2+2C_1(1+|y|+|z|^2),$  
	\begin{align}
	\frac{\gamma^2}{2}\int_{\tau}^T   e^{-\gamma Y_u}|Z_u|^2\,d\langle B \rangle _u&\leq e^{-\gamma \Phi(X_T)}+\gamma\int_{\tau}^T  e^{-\gamma Y_u}(C_2+2C_1(1+|Y_u|+|Z_u|^2) \,d\langle B \rangle _u \\
	&\quad+\gamma\int_{\tau}^T  e^{-\gamma Y_u}Z_u\,dB_u \quad P\textnormal{-almost surely}.
	\end{align}
	Setting $\gamma=6C_1$, 
	\begin{align}
	6C_1^2 \int_{\tau}^T  e^{-6C_1 Y_u}|Z_u|^2\,d\langle B \rangle _u&\leq e^{-6C_1 \Phi(X_T)}+6C_1\int_{\tau}^T  e^{-6C_1 Y_u}(C_2+2C_1(1+|Y_u|)\, d\langle B \rangle _u \\
	&+6C_1\int_{\tau}^T e^{-6C_1 Y_u}Z_u\,dB_u\quad P\textnormal{-almost surely}.
	\end{align}
	This implies that $Z$ is a $G$-BMO martingale generator and that $|\!|Z|\!|_{\textnormal{BMO}_G}$ is bounded by a constant depending on $C_1, C_2$, $\overline{\sigma}$, $T$ and the norm $\big|\!\big|\sup_{0\leq u \leq T}|Y_u|\big|\!\big|_{\mathbb{L}^{\infty}}.$
	 
	In order to show   $K\in \mathbb{L}^p$ for every $p\geq 1$, we first observe that  $\hat{\mathbb{E}}[(\int_0^T|Z_u|^2 d\langle B \rangle _u)^n]<\infty$ for all $n \in \mathbb{N}$. Applying It\^{o}'s formula, we have
	\begin{align}
	\Big(\int_0^T|Z_u|^2\,d\langle B \rangle _u\Big)^n&=n!\int_0^T\int_0^{t_1}\cdots\int_0^{t_{n-1}} |Z_{t_1}|^2\cdots|Z_{t_n}|^2\,d\langle B \rangle_{t_n}\cdots d\langle B \rangle_{t_1}\\
	&=n!\int_0^T\int_{t_n}^T\cdots\int_ {t_2}^T |Z_{t_1}|^2\cdots|Z_{t_n}|^2\, d\langle B \rangle_{t_1}\cdots d\langle B \rangle_{t_n}\,,
	\end{align}
	which implies that $\hat{\mathbb{E}}[(\int_0^T|Z_u|^2 d\langle B \rangle _u)^n]\leq n!(|\!|Z|\!|_{\textnormal{BMO}_G})^n<\infty$.
	Since $(Y,Z,K)$ is a solution  to \eqref{QBSDE}, there exists a constant $C>0$ such that
	\begin{align}
	|K_s|^p\leq C(|Y_s|^p+|Y_0|^p+\Big(\int_0^s(C_2+2C_1(1+|Y_u|+|Z_u|^2)\,d\langle B \rangle _u\Big)^p +\Big(\sup_{0\leq v \leq s}\Big|\int_0^vZ_udB_u\Big|\Big)^p \,.
	\end{align}
	Using \cite[Theorem 2.1]{gao2009pathwise} and $\hat{\mathbb{E}}[(\int_0^T|Z_u|^2\,d\langle B \rangle _u )^p]<\infty$, we obtain that  $\hat{\mathbb{E}}[|K_s|^p]$ is bounded by a constant  depending on  
	$C_1, C_2,\overline{\sigma}$ and the norms $\big|\!\big|\sup_{0\leq u \leq T}|Y_u|\big|\!\big|_{\mathbb{L}^{\infty}}$, $|\!|Z|\!|_{\textnormal{BMO}_G}$. 
	
	Finally, we show that the norm $\big|\!\big|\sup_{0\leq s \leq T}|Y_s|\big|\!\big|_{\mathbb{L}^{\infty}}$ is bounded by a constant depending on $C_1,C_2,\overline{\sigma}$ and $T.$ We use the linearization argument from the proof of
\cite[Proposition~3.5]{hu2018quadratic}. For   $\epsilon>0$ let $l^\epsilon$ be a Lipschitz function  satisfying $\mathds{1}_{[-\epsilon,\epsilon]}(x)\leq l^\epsilon(x)\leq \mathds{1}_{[-2\epsilon,2\epsilon]}(x)$. Define three processes 
	\begin{align}
	a_s^{\epsilon}&=(1-l^\epsilon(Y_s))\frac{g(s,X_s,Y_s,Z_s)-g(s,X_s,0,Z_s)}{Y_s}\mathds{1}_{|Y_s|\geq 0}\,,\\
	b_s^{\epsilon}&=(1-l^\epsilon(Z_s))\frac{g(s,X_s,0,Z_s)-g(s,X_s,0,0)}{|Z_s|^2}Z_s\mathds{1}_{|Z_s|\geq 0}\,,\\
	m_s^{\epsilon}&=l^\epsilon(Y_s)(g(s,X_s,Y_s,Z_s)-g(s,X_s,0,Z_s))+l^\epsilon(Z_s)(g(s,X_s,0,Z_s)-g(s,X_s,0,0))\,.
	\end{align}
	Then $|a_s^{\epsilon}|\leq C_1$, $|b_s^{\epsilon}|\leq C_1(1+|Z_s|)$ and $m_s^{\epsilon}\leq 4\epsilon C_1(1+\epsilon+|Z_s|).$ 
	It follows that 
	\begin{align}    Y_s=\Phi(X_T)+\int_s^T(a_u^{\epsilon}Y_u+b_u^{\epsilon}Z_u+m_u^{\epsilon}+g(u,X_u,0,0))\,d\langle B \rangle _u-\int_s^TZ_u\,dB_u-(K_T-K_s)\,.
	\end{align}
	By \cite[Lemma~3.6]{hu2018quadratic}, the process $b^{\epsilon}$
is a $G$-BMO martingale generator. Moreover,  $a^{\epsilon}$ and $b^{\epsilon}$ belong to $\mathbb{H}^p(0,T)$. Let $\hat{\mathbb{E}}^{b^{\epsilon}}$ be the sublinear expectation induced by $b^{\epsilon}$. By \eqref{new_GBM}, the process $B^{\epsilon}:=B-\int_0^\cdot  b_u^{\epsilon}\,d\langle B \rangle_u$  is a $G$-Brownian motion under  $\hat{\mathbb{E}}^{b^{\epsilon}}.$ We have
	\begin{align}
	Y_s=\Phi(X_T)+\int_s^T(a_u^{\epsilon}Y_u+m_u^{\epsilon}+g(u,X_u,0,0))\,d\langle B^{\epsilon} \rangle _u-\int_s^TZ_udB_u^{\epsilon}-(K_T-K_s)\,.
	\end{align}
	Since $K\in \mathbb{L}_G^p(\Omega_T)$ for all $p\geq1$, by Lemma~\ref{K is MG in BMO}, 
	the process $K$ is a decreasing $G$-martingale under  $\hat{\mathbb{E}}^{b^{\epsilon}}$. Let  $X^{\epsilon}$ be a solution to the SDE
	\begin{align}
	dX_s^{\epsilon}=a_s^{\epsilon}X_s^{\epsilon}\,d\langle B^{\epsilon} \rangle _s \,,\;X_0=1\,.
	\end{align}
	It is easy to show that $X_s^{\epsilon}$ is bounded.
	Applying It\^{o}'s formula to $X^{\epsilon}Y$, we have
	\begin{align}
	Y_s&=(X_s^{\epsilon})^{-1}\tilde{\mathbb{E}}_s^{b^{\epsilon}}\Big[X_T^{\epsilon}\Phi(X_T)+\int_s^TX_u^{\epsilon}(m_u^{\epsilon}+g(u,X_u,0,0))\,d\langle B^{\epsilon} \rangle_u\Big]\,.
	\end{align} 
    Since $X_s^{\epsilon}, \Phi(X_T),$ and $g(s,X_s,0,0)$ are bounded
by a constant depending only on $C_1$, $C_2$, and $T$, under
$\tilde{\mathbb{E}}^{b^{\epsilon}}$ we have
 \begin{align}
     Y_s&\leq C(1+\tilde{\mathbb{E}}_s^{b^{\epsilon}}[\int_s^T(4C_1\epsilon(1+\epsilon+|Z_u|)\,d\langle B^\epsilon \rangle _u]   )=C(1+\epsilon+\hat{\mathbb{E}}_s^{b^{\epsilon}}[\int_s^T\epsilon|Z_u|\,du])\\
     &\leq
     C(1+\epsilon+\epsilon \hat{\mathbb{E}}_s[\frac{M_T^{b^\epsilon}}{M_s^{b^\epsilon}}\int_s^T|Z_u|\,du])\,.
 \end{align}
 for some constant $C>0$ depending only on $C_1$, $C_2$, $\overline{\sigma}$ and $T$. Since $|b^\epsilon|\leq C_1(1+|Z|)$,
$|\!|b^\epsilon|\!|_{BMO_G}$ is bounded uniformly in $\epsilon$.
Hence Lemma~\ref{reverse holder inequality} applies with a constant
independent of $\epsilon$.
 Applying H\"older's inequality and Lemma~\ref{reverse holder inequality},
using $\hat{\mathbb{E}}_s[(\int_s^T |Z_u|^2\,d\langle B \rangle _u)^p ]<\infty$,
and letting $\epsilon\rightarrow 0$, we conclude that
$\big|\!\big|\sup_{0\leq u \leq T}|Y_u|\big|\!\big|_{\mathbb{L}^{\infty}}$
is bounded by a constant depending only on $C_1$, $C_2$,
$\overline{\sigma}$, and $T$. 
\end{proof}

The following theorem establishes a comparison principle for
quadratic $G$-BSDEs.

\begin{thm}\label{in QBSDE comparison theorem}
    Suppose Assumption~\ref{assumption} holds and, for $k=1,2$,
let the triple
$(Y^k,Z^k,K^k)= (Y^{t,x,k},Z^{t,x,k},K^{t,x,k})$
be a solution to the following $G$-BSDE:
    \begin{align}
    Y_t^k=\Phi^k(X_T^{t,x})&+\int_t^Tf^k(u,X_u^{t,x},Y_u^k,Z_u^k)\,du+\int_t^Tg^k_{ij}(u,X_u^{t,x},Y_u^k,Z_u^k)\,d\langle B^i,B^j\rangle_u\\
    &-\int_t^TZ_u^k\,dB_u-(K_T^k-K_t^k)
    \end{align}
    where $f^k,g^k_{ij}$ and $\Phi^k$ satisfy Assumption~\ref{assumption2}. If $\Phi(x)^1\geq \Phi^2(x)$, $f^1(s,x,y,z)\geq f^2(s,x,y,z)$, $g^1_{ij}(s,x,y,z)\geq g^2_{ij}(s,x,y,z)$ for $s\in[t,T], x\in \mathbb{R}^m, y\in \mathbb{R}, z\in \mathbb{R}^d$ and $Y^k$ is bounded for each $k=1,2$, then $Y^1_s\geq Y^2_s$ for $(s,x)\in [t,T]\times \mathbb{R}^m$.
\end{thm}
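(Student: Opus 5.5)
The plan is to follow the linearization argument used in the proof of Theorem~\ref{Z is BMO and K is intgrable}, applied to the difference of the two solutions. As there, assume $d=m=1$, $t=0$, $f^k=0$. Set $\hat Y=Y^1-Y^2$, $\hat Z=Z^1-Z^2$. Both $Y^1$ and $Y^2$ are bounded, so Theorem~\ref{Z is BMO and K is intgrable} gives that $Z^1,Z^2$ are $G$-BMO martingale generators and that $K^1_T,K^2_T\in\mathbb{L}_G^p(\Omega_T)$ for all $p\ge1$. Subtracting the two equations gives
\begin{align}
\hat Y_s=\hat\Phi+\int_s^T\big(g^1(u,X_u,Y^1_u,Z^1_u)-g^2(u,X_u,Y^2_u,Z^2_u)\big)\,d\langle B\rangle_u-\int_s^T\hat Z_u\,dB_u-(K^1_T-K^1_s)+(K^2_T-K^2_s),
\end{align}
where $\hat\Phi=\Phi^1(X_T)-\Phi^2(X_T)\ge0$. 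Write the driver difference as $\delta g_u+[g^2(u,X_u,Y^1_u,Z^1_u)-g^2(u,X_u,Y^2_u,Z^2_u)]$, where $\delta g_u:=g^1(u,X_u,Y^1_u,Z^1_u)-g^2(u,X_u,Y^1_u,Z^1_u)\ge0$.

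Next, linearize the bracket with the cutoffs $l^\epsilon$ from the earlier proof:
\begin{align}
a^\epsilon_u&=(1-l^\epsilon(\hat Y_u))\frac{g^2(u,X_u,Y^1_u,Z^1_u)-g^2(u,X_u,Y^2_u,Z^1_u)}{\hat Y_u},\\
b^\epsilon_u&=(1-l^\epsilon(\hat Z_u))\frac{g^2(u,X_u,Y^2_u,Z^1_u)-g^2(u,X_u,Y^2_u,Z^2_u)}{\hat Z_u},
\end{align}
with remainder $m^\epsilon$. These satisfy $|a^\epsilon|\le C_1$, $|b^\epsilon|\le C_1(1+|Z^1|+|Z^2|)$ and $|m^\epsilon|\le C\epsilon(1+|Z^1|+|Z^2|)$. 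Then $b^\epsilon$ is a $G$-BMO generator with norm bounded uniformly in $\epsilon$, by \cite[Lemma~3.6]{hu2018quadratic}. Change to $\hat{\mathbb{E}}^{b^\epsilon}$, under which $B^\epsilon=B-\int b^\epsilon\,d\langle B\rangle$ is a $G$-Brownian motion. Introduce the bounded positive discount factor $\Gamma^\epsilon_s=\exp(\int_0^s a^\epsilon_u\,d\langle B\rangle_u)$, and apply It\^o's formula to $\Gamma^\epsilon\hat Y$.

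The main obstacle is the $K$-terms, since $-(K^1_T-K^1_s)+(K^2_T-K^2_s)$ has no sign. The term $-\int\Gamma^\epsilon\,dK^1$ is nonnegative because $K^1$ is decreasing, and it only helps the lower bound. The term $\int_s^T\Gamma^\epsilon_u\,dK^2_u$ is handled differently. By Lemma~\ref{K is MG in BMO}, $K^2$ is a decreasing $G$-martingale under $\hat{\mathbb{E}}^{b^\epsilon}$, so $\int\Gamma^\epsilon\,dK^2$ is a $G$-martingale there as well ($\Gamma^\epsilon$ is bounded, positive and adapted; this fact for $G$-martingales is used in \cite{hu2014comparison}). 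Together with $\int\Gamma^\epsilon\hat Z\,dB^\epsilon$ being a symmetric martingale, we write
\begin{align}
\Gamma^\epsilon_s\hat Y_s+\int_s^T\Gamma^\epsilon_u\,dK^2_u+\int_s^T\Gamma^\epsilon_u\hat Z_u\,dB^\epsilon_u=\Gamma^\epsilon_T\hat\Phi+\int_s^T\Gamma^\epsilon_u(\delta g_u+m^\epsilon_u)\,d\langle B\rangle_u-\int_s^T\Gamma^\epsilon_u\,dK^1_u.
\end{align}
Taking $\hat{\mathbb{E}}^{b^\epsilon}_s$ of both sides (the left side's martingale parts have zero conditional expectation) gives
\begin{align}
\Gamma^\epsilon_s\hat Y_s=\hat{\mathbb{E}}^{b^\epsilon}_s\Big[\Gamma^\epsilon_T\hat\Phi+\int_s^T\Gamma^\epsilon_u(\delta g_u+m^\epsilon_u)\,d\langle B\rangle_u-\int_s^T\Gamma^\epsilon_u\,dK^1_u\Big].
\end{align}
Since $\hat\Phi\ge0$, $\delta g\ge0$ and $-dK^1\ge0$, and since $\hat{\mathbb{E}}^{b^\epsilon}_s[\xi+\eta]\ge\hat{\mathbb{E}}^{b^\epsilon}_s[\xi]-\hat{\mathbb{E}}^{b^\epsilon}_s[-\eta]$, we get
\begin{align}
\hat Y_s\ge-C\,\hat{\mathbb{E}}^{b^\epsilon}_s\Big[\int_s^T|m^\epsilon_u|\,d\langle B\rangle_u\Big]\ge-C\epsilon\,\hat{\mathbb{E}}_s\Big[\frac{M^{b^\epsilon}_T}{M^{b^\epsilon}_s}\int_s^T(1+|Z^1_u|+|Z^2_u|)\,du\Big].
\end{align}
By H\"older's inequality, the reverse H\"older inequality (Lemma~\ref{reverse holder inequality}, with constant uniform in $\epsilon$) and the moment bounds on $\int|Z^k|^2\,d\langle B\rangle$ from the proof of Theorem~\ref{Z is BMO and K is intgrable}, the right-hand side is bounded below by $-C'\epsilon$. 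Letting $\epsilon\to0$ yields $Y^1_s\ge Y^2_s$ quasi-surely.
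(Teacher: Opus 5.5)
Your overall route is the paper's: linearize the difference, pass to $\hat{\mathbb E}^{b^\epsilon}$, discount, and let $\epsilon\to0$ using a reverse H\"older bound that is uniform in $\epsilon$. Linearizing around $g^2$ instead of $g^1$ makes no difference. The one delicate step, the $K^2$-term, contains a sign error, and the identity you derive from it is false. Since $d\hat Y_u=-(\cdots)\,d\langle B\rangle_u+\hat Z_u\,dB_u+dK^1_u-dK^2_u$, It\^o's formula gives
\begin{align}
\Gamma^\epsilon_s\hat Y_s-\int_s^T\Gamma^\epsilon_u\,dK^2_u+\int_s^T\Gamma^\epsilon_u\hat Z_u\,dB^\epsilon_u=\Gamma^\epsilon_T\hat\Phi+\int_s^T\Gamma^\epsilon_u(\delta g_u+m^\epsilon_u)\,d\langle B\rangle_u-\int_s^T\Gamma^\epsilon_u\,dK^1_u .
\end{align}
So the left side carries $-\int\Gamma^\epsilon\,dK^2$, not $+\int\Gamma^\epsilon\,dK^2$. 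Because $\int\Gamma^\epsilon\,dK^2$ is a \emph{decreasing} $G$-martingale, only $\hat{\mathbb E}^{b^\epsilon}_s[\int_s^T\Gamma^\epsilon_u\,dK^2_u]=0$ holds. By contrast, $\hat{\mathbb E}^{b^\epsilon}_s[-\int_s^T\Gamma^\epsilon_u\,dK^2_u]$ is nonnegative and in general strictly positive. Conditioning the correct identity therefore gives only an upper bound on $\Gamma^\epsilon_s\hat Y_s$.

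To see that your displayed formula for $\Gamma^\epsilon_s\hat Y_s$ cannot hold, apply it to two copies of the same solution. Then $\hat Y=\hat Z=0$, so $a^\epsilon=b^\epsilon=m^\epsilon=0$ and $\Gamma^\epsilon=1$. Your formula would read $0=\hat{\mathbb E}_s[-(K^1_T-K^1_s)]$, forcing $K^1_T=K^1_s$ quasi-surely. This fails whenever $K^1$ is nontrivial, which is the typical case when $\underline{\sigma}<\overline{\sigma}$.

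The repair uses only your own ingredients. Keep $\int\Gamma^\epsilon\,dK^2$ on the right, inside the conditional expectation:
\begin{align}
\Gamma^\epsilon_s\hat Y_s=\hat{\mathbb E}^{b^\epsilon}_s\Big[\Gamma^\epsilon_T\hat\Phi+\int_s^T\Gamma^\epsilon_u\delta g_u\,d\langle B\rangle_u-\int_s^T\Gamma^\epsilon_u\,dK^1_u+\int_s^T\Gamma^\epsilon_u m^\epsilon_u\,d\langle B\rangle_u+\int_s^T\Gamma^\epsilon_u\,dK^2_u\Big].
\end{align}
Drop the first three (nonnegative) terms by monotonicity. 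Then set $A=\int_s^T\Gamma^\epsilon_u\,dK^2_u$ and $\eta=\int_s^T\Gamma^\epsilon_u m^\epsilon_u\,d\langle B\rangle_u$, and split by subadditivity:
\begin{align}
\hat{\mathbb E}^{b^\epsilon}_s[\eta+A]\geq\hat{\mathbb E}^{b^\epsilon}_s[A]-\hat{\mathbb E}^{b^\epsilon}_s[-\eta]=-\hat{\mathbb E}^{b^\epsilon}_s[-\eta].
\end{align}
The point is that you may use $\hat{\mathbb E}^{b^\epsilon}_s[A]=0$, but never $\hat{\mathbb E}^{b^\epsilon}_s[-A]=0$. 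This yields exactly the lower bound you state, and your $\epsilon\to0$ argument then finishes the proof.

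This is what the paper does in different form. It works with $\hat Y+K^2$ and obtains
\begin{align}
X^\epsilon_sK^2_s=\hat{\mathbb E}^{b^\epsilon}_s\Big[X^\epsilon_TK^2_T-\int_s^TX^\epsilon_ua^\epsilon_uK^2_u\,d\langle B^\epsilon\rangle_u\Big]
\end{align}
from the auxiliary linear $G$-BSDE solved by $(K^2,0,K^2)$. By It\^o's formula, this identity is just $\hat{\mathbb E}^{b^\epsilon}_s[\int_s^TX^\epsilon_u\,dK^2_u]=0$. The paper likewise keeps the $K^2$-part together with the $m^\epsilon$-term before applying subadditivity.
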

\begin{proof}
For simplicity, assume $d=1\,,m=1\,, t=0$ and $f = 0$. 
    First, Theorem~\ref{Z is BMO and K is intgrable} implies that $Z_1$ and $Z_2$ are $G$-BMO martingale generators. Let $(\hat{Y},\hat{Z},\hat{K})=(Y^1-Y^2,Z^1-Z^2,K^1-K^2)$ and $\hat{\Phi}(X_T)=\Phi^1(X_T)-\Phi^2(X_T)$. For $\epsilon>0$, let $l^\epsilon(x)$ be a Lipschitz function satisfying $\mathds{1}_{[-\epsilon,\epsilon]}(x)\leq l^\epsilon(x)\leq \mathds{1}_{[-2\epsilon,2\epsilon]}(x)$. Define four processes
\begin{align}
    a_s^{\epsilon}&=(1-l^\epsilon(\hat{Y}_s)\frac{g^1(s,X_s,Y^1_s,Z^1_s)-g^1(s,X_s,Y_s^2,Z^1_s)}{\hat{Y}_s}\mathds{1}_{|\hat{Y}_s|\geq 0}\,,\\
    b_s^{\epsilon}&=(1-l^\epsilon(\hat{Z}_s)\frac{g^1(s,X_s,Y^2_s,Z^1_s)-g^1(s,X_s,Y_s^2,Z_s^2)}{|\hat{Z}_s|^2}\hat{Z}_s\mathds{1}_{|\hat{Z}_s|\geq 0}\,,\\
    m_s^{\epsilon}&=l^\epsilon(\hat{Y}_s)(g^1(s,X_s,Y^1_s,Z^1_s)-g^1(s,X_s,Y_s^2,Z^1_s))+l^\epsilon(\hat{Z}_s)(g^1(s,X_s,Y_s^2,Z^1_s)-g^1(s,X_s,Y_s^2,Z_s^2))\,,\\
    h_s&=g^1(s,X_s,Y_s^2,Z_s^2)-g^2(s,X_s,Y_s^2,Z_s^2)\,.
\end{align}
Then $|a_s^{\epsilon}|\leq C_1$, $|b_s^{\epsilon}|\leq C_1(1+|Z^1_s|+|Z^2_s|)$ and $m_s^{\epsilon}\leq 4\epsilon C_1(1+\epsilon+|Z^1_s|)$. It follows that
\begin{align}
\hat{Y_s}=\hat{\Phi}(X_T)+\int_s^T(a_u^{\epsilon}\hat{Y_u}+b_u^{\epsilon}\hat{Z}_u+m_u^{\epsilon}+h_u)\,d\langle B\rangle _u -\int_s^T\hat{Z}_ud\,B_u-(K^1_T-K^1_s)+(K^2_T-K^2_s)\,.
\end{align}
By \cite[Lemma 3.6]{hu2018quadratic}, $b^{\epsilon}$ is a $G$-BMO martingale generator. Then we have
\begin{align}
(\hat{Y_s}+K^2_s)=(\hat{\Phi}(X_T)+K^2_T)+\int_s^T(a_u^{\epsilon}\hat{Y_u}+m_u^{\epsilon}+h_u)\,d\langle B\rangle _u -\int_s^T\hat{Z}_u\,(dB_u-b_u^{\epsilon}\,d\langle B \rangle_u)-(K^1_T-K^1_s)\,.
\end{align}
 Let $\hat{\mathbb{E}}^{b^{\epsilon}}$ be the sublinear expectation induced by $b_s^{\epsilon}$. By \eqref{new_GBM}, $B^{\epsilon}:=B-\int_0^{\cdot} b_u^{\epsilon}\,d\langle B \rangle _u$ is a $G$-Brownian motion under $\hat{\mathbb{E}}^{b^\epsilon}$. 
 Let $X^{\epsilon}$ be a solution to the  SDE
\begin{align}
    dX_s^{\epsilon}=a_s^{\epsilon}X_s^{\epsilon}\,d\langle B^{\epsilon}\rangle _s\hspace{1cm} X^{\epsilon}_0=1\,.
\end{align}
Applying It\^{o}'s formula to $(\hat{Y}-K^2)X^{\epsilon}$ under $\hat{\mathbb{E}}^{b^{\epsilon}}$, we have
\begin{align}\label{in the proof of theorem 3.8 eqn}
X_s^{\epsilon}(\hat{Y_s}+K^2_s)=&X_T^{\epsilon}(\hat{\Phi}(X_T)+K^2_T)+\int_s^TX_u^{\epsilon}( m_u^{\epsilon}+h_u-a_u^{\epsilon}K_u^2)\,d\langle B^{\epsilon}\rangle _u\\&-\int_s^TX_u^{\epsilon}\hat{Z}_u\,dB_u^{\epsilon}-\int_s^TX_u^{\epsilon}\,dK_u^1\,.
\end{align}
Consider the following $G$-BSDE under $\hat{\mathbb{E}}^{b^\epsilon}$:
\begin{align}
    Y_s=K_T^2+\int_s^T(a_u^{\epsilon} Y_u-a_u^\epsilon K_u^2)\,d\langle B^{\epsilon} \rangle _u-\int_s^TZ_u\,dB_u^{\epsilon} -(K_T-K_s)\,.
\end{align}
It follows that $(K^2,0,K^2)$ is a solution. By \cite[Theorem~4.1]{hu2014backward} and
\cite[Theorem~3.2]{hu2014comparison}, we obtain 
\begin{align}\label{E.q in comparison thm }
    X_s^{\epsilon}K_s^2=\tilde{\mathbb{E}}^{b^{\epsilon}}_s[X_T^{\epsilon}K_T^2-\int_s^TX_u^{\epsilon}a_u^{\epsilon}K_u^2\,d\langle B^{\epsilon}\rangle_u].
\end{align}
Applying $\tilde{\mathbb{E}}_s^{b^{\epsilon}}$ to both sides of
\eqref{in the proof of theorem 3.8 eqn} and using
Lemma~\ref{K is MG in BMO} and
Theorem~\ref{Z is BMO and K is intgrable}, we obtain
\begin{align}
    X_s^{\epsilon}(\hat{Y_s}+K^2_s)=&\tilde{\mathbb{E}}_s^{b^{\epsilon}}[X_T^{\epsilon}(\hat{\Phi(X_T)}+K^2_T)+\int_s^TX_u^{\epsilon}( m_u^{\epsilon}+h_u-a_u^{\epsilon}K_u^2)\,d\langle B^{\epsilon}\rangle _u]\\
    \geq&\tilde{\mathbb{E}}_s^{b^{\epsilon}}[X_T^{\epsilon}K^2_T+\int_s^TX_u^{\epsilon}( m_u^{\epsilon}-a_u^{\epsilon}K_u^2)\,d\langle B^{\epsilon}\rangle _u]\,.
    \end{align}
By \eqref{E.q in comparison thm }, we have
\begin{align}
     X_s^{\epsilon}\hat{Y_s}\geq-\tilde{\mathbb{E}}_s^{b^{\epsilon}}[\int_s^T -X_u^{\epsilon} m_u^{\epsilon}\,d\langle B^{\epsilon}\rangle _u]\,.
\end{align}
    Letting $\epsilon\rightarrow 0$, we obtain $\hat{Y}\geq 0$.   
\end{proof}

\begin{thm}\label{thm:uniqueness}
Suppose Assumptions~\ref{assumption} and~\ref{assumption2} hold. Then, the BSDE \eqref{QBSDE} has at most one solution
$(Y,Z,K)$ such that $Y$ is bounded.
\end{thm}

\begin{proof}
For simplicity, we assume that $d=1$ and $t=0$. Let
$(Y^1,Z^1,K^1)$ and $(Y^2,Z^2,K^2)$ be two bounded solutions to
\eqref{QBSDE}. Applying
Theorem~\ref{in QBSDE comparison theorem} yields for $s\in [0,T]$, 
\begin{align}
Y_s^1=Y_s^2
\quad\text{quasi-surely. }
\end{align}
Hence, for every $P\in\mathcal{P}$, the processes $Y^1$ and $Y^2$
are modifications of each other under $P$. Since both processes have
continuous paths $P$-almost surely, they are indistinguishable under
$P$. Therefore, for every $p\geq1$,
\begin{align}
E^P\left[
\sup_{0\leq u\leq T}|Y_u^1-Y_u^2|^p
\right]=0,
\qquad P\in\mathcal{P}.
\end{align}
Taking the supremum over $P\in\mathcal{P}$, we obtain
\begin{align}
\hat{\mathbb{E}}\left[
\sup_{0\leq u\leq T}|Y_u^1-Y_u^2|^p
\right]=0.
\label{eqn:uni}
\end{align}
By \cite[Proposition 3.7]{hu2018quadratic} and \eqref{eqn:uni}, we obtain
\begin{align}
\hat{\mathbb{E}}\left[
\int_0^T|Z_u^1-Z_u^2|^2\,du
\right]=0.
\end{align}
Thus, $Z^1=Z^2$ in $\mathbb{H}^2(0,T;\mathbb{R})$. Finally, we have
$K^1=K^2$. Thus, \eqref{QBSDE} has at most one
bounded solution. 
\end{proof}

We now establish regularity and growth estimates with respect to
the initial condition $\xi$.
\begin{prop}\label{in QBSDE solution is Lipschitz and linear growth}
    Suppose Assumptions~\ref{assumption} and~\ref{assumption2} hold,
and let $(Y^{t,\xi},Z^{t,\xi},K^{t,\xi})$ and
$(Y^{t,\xi'},Z^{t,\xi'},K^{t,\xi'})$ be solutions to \eqref{QBSDE}. If $Y^{t,\xi},Y_s^{t,\xi'}$ are bounded, then there exists a constant $C$ depending only on $C_1,$ $\overline{\sigma}$, $T$ and the order $q$ of the reverse H\"older inequality for $M_Z$ such that
    \begin{align}
          |&Y_t^{t,\xi}-Y_t^{t,\xi'}|\leq  C|\xi-\xi'|\,,\\
        |&Y_t^{t,\xi}|\leq C(1+|\xi|)
    \end{align}
    for all $\xi, \xi'\in \mathbb{L}^1(\Omega_t;\mathbb{R}^m)$.
    \end{prop}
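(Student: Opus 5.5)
We argue by linearization, exactly as in the proofs of Theorems~\ref{Z is BMO and K is intgrable} and~\ref{in QBSDE comparison theorem}. We take $d=m=1$ and $f=0$, and write $X=X^{t,\xi}$, $X'=X^{t,\xi'}$, $\hat Y=Y^{t,\xi}-Y^{t,\xi'}$, $\hat Z=Z^{t,\xi}-Z^{t,\xi'}$. By Theorem~\ref{Z is BMO and K is intgrable}, both $Z^{t,\xi}$ and $Z^{t,\xi'}$ are $G$-BMO martingale generators with norms bounded by a constant depending only on the data. Moreover, $K^{t,\xi}$ and $K^{t,\xi'}$ lie in $\mathbb{L}^p_G$ for every $p$.

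\textbf{Linearization.} Fix $\epsilon>0$ and decompose
\begin{align}
g(u,X_u,Y_u,Z_u)-g(u,X'_u,Y'_u,Z'_u)=c_u+a^\epsilon_u\hat Y_u+b^\epsilon_u\hat Z_u+m^\epsilon_u,
\end{align}
where
\begin{align}
c_u=g(u,X_u,Y_u,Z_u)-g(u,X'_u,Y_u,Z_u),\qquad |c_u|\le C_1|X_u-X'_u|.
\end{align}
The terms $a^\epsilon$ and $b^\epsilon$ are the truncated difference quotients in $y$ and $z$, with $|a^\epsilon|\le C_1$ and $|b^\epsilon|\le C_1(1+|Z|+|Z'|)$, and $|m^\epsilon|\le 4\epsilon C_1(1+\epsilon+|Z|+|Z'|)$. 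By \cite[Lemma~3.6]{hu2018quadratic}, $b^\epsilon$ is a $G$-BMO generator whose norm is bounded uniformly in $\epsilon$. We then pass to $\hat{\mathbb{E}}^{b^\epsilon}$ and $B^\epsilon=B-\int b^\epsilon\,d\langle B\rangle$, introduce the bounded discount factor $dX^\epsilon=a^\epsilon X^\epsilon\,d\langle B^\epsilon\rangle$ with $X^\epsilon_t=1$, and treat the two decreasing martingales as in the comparison proof. That is, we move $K'$ to the other side and use that it is a $G$-martingale under $\hat{\mathbb{E}}^{b^\epsilon}$ (Lemma~\ref{K is MG in BMO}). Doing this in both directions, with the roles of the two solutions exchanged, should yield
\begin{align}
|\hat Y_t|\le C\,\hat{\mathbb{E}}^{b^\epsilon}_t\Big[|X_T-X'_T|+\int_t^T|X_u-X'_u|\,du+\int_t^T|m^\epsilon_u|\,du\Big].
\end{align}
Here $C$ depends on $C_1$, $\overline\sigma$ and $T$ through the bound on $X^\epsilon$ and the Lipschitz constant of $\Phi$.

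\textbf{Estimating the terms.} The first two terms are handled directly by Theorem~\ref{SDE sublinear property}, which is stated precisely under $\hat{\mathbb{E}}^Z_t$ for a BMO generator $Z$. With $p=2$ and Jensen's inequality, this gives the bound $C|\xi-\xi'|$, where $C$ depends on $q$. For the $m^\epsilon$ term, H\"older's inequality combined with the reverse H\"older inequality (Lemma~\ref{reverse holder inequality}) and the moment bound on $\int|Z|^2d\langle B\rangle$ obtained in the proof of Theorem~\ref{Z is BMO and K is intgrable} gives a bound of order $\epsilon$. Letting $\epsilon\to0$ then yields the Lipschitz estimate.

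\textbf{Growth estimate.} For $|Y_t^{t,\xi}|\le C(1+|\xi|)$, we repeat the argument comparing with the zero solution. We linearize $g(u,X_u,Y_u,Z_u)-g(u,X_u,0,0)$ exactly as in the proof of Theorem~\ref{Z is BMO and K is intgrable}. This leaves the terms $\Phi(X_T)$ and $g(u,X_u,0,0)$, which we bound by $|\Phi(0)|+C_1|X_T|$ and $|g(u,0,0,0)|+C_1|X_u|$. The second inequality of Theorem~\ref{SDE sublinear property} then gives $C(1+|\xi|)$; a fortiori this also follows from the uniform bound in Theorem~\ref{Z is BMO and K is intgrable}.

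\textbf{Main obstacle.} The delicate step is the handling of $K-K'$. A difference of decreasing $G$-martingales has no sign, so each one-sided inequality requires moving the appropriate $K$ across and using that it remains a $G$-martingale under the new sublinear expectation. This in turn requires the $\mathbb{L}^p$ integrability of $K$, with $p$ large relative to the reverse H\"older order $q$, and the constants must be uniform in $\epsilon$. A related subtlety is that $\xi$ is $\mathcal F_t$-measurable rather than deterministic. Since all the estimates are conditional at time $t$ and Theorem~\ref{SDE sublinear property} is already formulated for random $\xi\in\mathbb{L}^p_G(\Omega_t)$, this should cause no extra difficulty.
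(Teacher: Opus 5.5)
Your proposal is correct and follows the paper's proof essentially step for step. The shared steps are:
\begin{itemize}
\item the $\epsilon$-truncated linearization in $y$ and $z$, with an extra remainder from the state difference;
\item the change to $\hat{\mathbb E}^{b^\epsilon}$ together with the bounded discount factor $X^\epsilon$;
\item moving $K^{t,\xi'}$ across and using the identity for $X^\epsilon K^{t,\xi'}$ (via Lemma~\ref{K is MG in BMO});
\item Theorem~\ref{SDE sublinear property} for $|X^{t,\xi}-X^{t,\xi'}|$, then $\epsilon\to0$;
\item the growth bound obtained by linearizing around zero.
\end{itemize}
The only differences are cosmetic, such as where you place the $x$-difference term and whether both directions are stated at once. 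As your growth argument implicitly shows, that constant necessarily also depends on $C_2$ (through $|\Phi(0)|$ and $|g(\cdot,0,0,0)|$). This is consistent with Remark~\ref{order of reverse holder inquality rmk}.
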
 
    \begin{proof}
    The proof is similar to that of
Theorem~\ref{in QBSDE comparison theorem}, so we provide only
an outline.
    For simplicity, we assume $d=1,m=1$ and $f=0$. By Theorem~\ref{Z is BMO and K is intgrable}, $Z^{t,\xi}$ and $ Z^{t,\xi'} $ are $G$-BMO martingale generators. Let $(\hat{Y},\hat{Z},\hat{K})=(Y^{t,\xi}-Y^{t,\xi'},Z^{t,\xi}-Z^{t,\xi'},K^{t,\xi}-K^{t,\xi'})$ and $\hat{\Phi}(X_T)=\Phi(X_T^{t,\xi})-\Phi(X_T^{t,\xi'})$. As in the proof of Theorem~\ref{in QBSDE comparison theorem}, for $\epsilon>0$,
    there exist $a_s^{\epsilon},b_s^{\epsilon},m_s^{\epsilon}$ and $h_s$ such that
\begin{align}
\hat{Y_t}=\hat{\Phi}(X_T)+\int_t^T(a_u^{\epsilon}\hat{Y_u}+b_u^{\epsilon}\hat{Z}_u+m_u^{\epsilon}+h_u)\,d\langle B\rangle _u -\int_t^T\hat{Z}_u\,dB_u-(K^{t,\xi}_T-K^{t,\xi}_t)+(K^{t,\xi'}_T-K^{t,\xi'}_t)\,.
\end{align} 
Let $\hat{\mathbb{E}}^{b^{\epsilon}}$ be the sublinear expectation induced by $b^{\epsilon}$. By \eqref{new_GBM}, the process
$B^\epsilon=B-\int_0^\cdot b_u^\epsilon\,d\langle B
\rangle _u$ is a $G$-Brownian motion under
$\hat{\mathbb{E}}^{b^{\epsilon}}$. Define a process $X^{\epsilon}$ similarly. Then we have
\begin{align}
    X_t^{\epsilon}(\hat{Y_t}+K^{t,\xi'}_t)=&\hat{\mathbb{E}}_t^{b^{\epsilon}}[X_T^{\epsilon}(\hat{\Phi}(X_T)+K^{t,\xi'}_T)+\int_t^TX_u^{\epsilon}( m_u^{\epsilon}+h_u-a_u^{\epsilon}K_u^{t,\xi'})\,d\langle B^{\epsilon}\rangle _u]\\
    \leq&\hat{\mathbb{E}}_t^{b^{\epsilon}}[X_T^{\epsilon}K^{t,\xi'}_T-\int_t^TX_u^{\epsilon}a_u^{\epsilon}K_u^{t,\xi'}\,d\langle B^{\epsilon}\rangle _u]\\
    &+\hat{\mathbb{E}}_t^{b^{\epsilon}}[X_T^\epsilon|\Phi(X_T)|+\int_t^TX_u^{\epsilon}(m_u^{\epsilon}+ |h_u|)\,d\langle B^{\epsilon} \rangle _u]\,.
    \end{align}
Similarly
\begin{align}
    X_t^{\epsilon}K_t^{t,\xi'}=\hat{\mathbb{E}}^{b^{\epsilon}}_t[X_T^{\epsilon}K_T^{t,\xi'}-\int_t^TX_u^{\epsilon}a_u^{\epsilon}K_u^{t,\xi'}\,d\langle B^{\epsilon}\rangle_u].
\end{align}
Then we have 
    \begin{align}
     \hat{Y_t}
    \leq&(X_t^{\epsilon})^{-1} \hat{\mathbb{E}}_t^{b^{\epsilon}}[X_T^\epsilon|\Phi(X_T)|+\int_t^TX_u^{\epsilon}(m_u^{\epsilon}+ |h_u|)\,d\langle B^{\epsilon} \rangle _u]\\
    \leq& C\hat{\mathbb{E}}_t^{b^{\epsilon}}[|X_T^{t,\xi}-X_T^{t,\xi'}| +\int_t^T(m_u^{\epsilon}+ |X_u^{t,\xi}-X_u^{t,\xi'}|)\,d\langle B^{\epsilon} \rangle _u]\,.
    \end{align}
    Letting $\epsilon \rightarrow 0$ and using Theorem~\ref{SDE sublinear property}, we obtain $Y_t^{t,\xi}-Y_t^{t,\xi'}\leq C|\xi-\xi'|$. 
    Moreover, in the same way, we have $Y_t^{t,\xi'}-Y_t^{t,\xi}\leq C|\xi-\xi'|$. Combining these inequalities gives $|Y_t^{t,\xi}-Y_t^{t,\xi'}|\leq C|\xi-\xi'|$.
Similarly, we obtain $Y_t^{t,\xi}\leq C(1+|\xi|)$.
    \end{proof}
    \begin{remark}\label{order of reverse holder inquality rmk}
        The order of the reverse H\"older inequality for $M^Z$ depends on
$|\!|Z|\!|_{BMO_G}$. Therefore, the constant $C$ in
Proposition~\ref{in QBSDE solution is Lipschitz and linear growth}
also depends on $C_1$ and $C_2$.
    \end{remark}
    Let $(Y^{t,x},Z^{t,x},K^{t,x})$ be a solution to \eqref{QBSDE} under Assumptions~\ref{assumption} and \ref{assumption2}. Define a function 
    \begin{align}
     u(t,x)=Y_t^{t,x},\hspace{1cm} (t,x)\in [0,T]\times\mathbb{R}^m.   
    \end{align}
    Since $b$, $h_{ij}$, $\sigma_j$, $\Phi$, $f$, and $g_{ij}$
are deterministic functions and
$\tilde{B}_\cdot:= B_{t+\cdot}-B_t$ is a $G$-Brownian motion,
it follows that $u(t,x)$ is deterministic. If the process $Y^{t,x}$ is bounded for all $x\in \mathbb{R}^m$, then by Proposition~\ref{in QBSDE solution is Lipschitz and linear growth}, we get 
\begin{align}\label{deterministc U is lipschitz}
    |u(t,x)-u(t,x')|\leq C|x-x'|\,,\\
    |u(t,x)|\leq C(1+|x|)\,.
\end{align}
The following result plays a key role in proving the existence of
solutions to the $G$-BSDE.
\begin{thm}\label{QSDE sol is represented by u(t,X_t)}
     Suppose Assumptions~\ref{assumption} and~\ref{assumption2} hold,
and let the triple $(Y,Z,K)$ = $(Y^{t,\xi},Z^{t,\xi},K^{t,\xi})$
be a solution to \eqref{QBSDE}. If the process $Y^{t,\xi}$ is bounded for all $\xi\in \mathbb{L}^1(\Omega_t;\mathbb{R}^m)$, then
     \begin{align}
         u(t,\xi)=Y_t^{t,\xi}.
     \end{align}
\end{thm}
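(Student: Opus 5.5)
The plan is the standard three-step argument for identifying a Markovian BSDE solution with its value function, as in the Lipschitz $G$-BSDE theory of \citet{hu2014backward} and in Chapter~5 of \citet{peng2019nonlinear}. First I treat simple initial conditions, then pass to the limit using the Lipschitz estimate of Proposition~\ref{in QBSDE solution is Lipschitz and linear growth}.

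\emph{Step 1: simple random variables.} Take $\xi=\sum_{k=1}^N x_k\mathds{1}_{A_k}$, where $\{A_k\}$ is an $\mathcal{F}_t$-measurable partition of $\Omega$ and $x_k\in\mathbb{R}^m$. Uniqueness of the SDE (Theorem~\ref{SDE sublinear property}) gives
\[
X^{t,\xi}=\sum_k \mathds{1}_{A_k}X^{t,x_k}.
\]
The triple
\[
\Big(\sum_k \mathds{1}_{A_k}Y^{t,x_k},\ \sum_k \mathds{1}_{A_k}Z^{t,x_k},\ \sum_k\mathds{1}_{A_k}K^{t,x_k}\Big)
\]
then solves \eqref{QBSDE} with initial datum $\xi$. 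Two facts justify this. The drivers are evaluated pointwise, so they commute with the indicator sum. Moreover, $\mathds{1}_{A_k}$ can be pulled inside $\int_s^T\cdot\,dB_u$ and $\int_s^T\cdot\,d\langle B\rangle_u$ because $A_k\in\mathcal{F}_t$.

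The $K$-component is still a decreasing $G$-martingale, since $\hat{\mathbb{E}}_s\big[\sum_k\mathds{1}_{A_k}\eta_k\big]=\sum_k\mathds{1}_{A_k}\hat{\mathbb{E}}_s[\eta_k]$ for $s\ge t$. This step needs $A_k$ to be approximable in the $G$-framework. I would therefore work with $\xi$ of the form $\sum_k x_k\mathds{1}_{A_k}$ with $\mathds{1}_{A_k}\in \mathbb{L}^2_G(\Omega_t)$, or more robustly with $\xi=\varphi(B_{t_1},\dots,B_{t_n})$ for Lipschitz $\varphi$ approximated by step functions.

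The constructed $Y$ is bounded because each $Y^{t,x_k}$ is bounded by the constant of Theorem~\ref{Z is BMO and K is intgrable}, which does not depend on $x$. Theorem~\ref{thm:uniqueness} then gives $Y^{t,\xi}_t=\sum_k\mathds{1}_{A_k}u(t,x_k)=u(t,\xi)$.

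\emph{Step 2: approximation.} For general $\xi\in\mathbb{L}^p_G(\Omega_t;\mathbb{R}^m)$, choose simple $\xi_n$ with $\hat{\mathbb{E}}[|\xi_n-\xi|]\to0$, passing to a quasi-surely convergent subsequence if needed. By Proposition~\ref{in QBSDE solution is Lipschitz and linear growth},
\[
|Y_t^{t,\xi}-Y_t^{t,\xi_n}|\le C|\xi-\xi_n|.
\]
By \eqref{deterministc U is lipschitz},
\[
|u(t,\xi_n)-u(t,\xi)|\le C|\xi_n-\xi|.
\]
Combining these with Step~1,
\[
|Y^{t,\xi}_t-u(t,\xi)|\le 2C|\xi-\xi_n|\to0,
\]
which proves the claim quasi-surely.

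\emph{Main obstacle.} The delicate point is that the constant $C$ in Proposition~\ref{in QBSDE solution is Lipschitz and linear growth} must be uniform in $n$. By Remark~\ref{order of reverse holder inquality rmk} it depends on the order of the reverse Hölder inequality, hence on $\|Z\|_{BMO_G}$. Theorem~\ref{Z is BMO and K is intgrable} bounds that norm by a constant depending only on $C_1$, $C_2$, $\overline{\sigma}$ and $T$, for every bounded solution. So I would invoke it first to fix a single $q$ for all $\xi_n$, which makes the final limit legitimate.
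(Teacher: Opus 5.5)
Your Step~2 estimate and your use of Theorem~\ref{Z is BMO and K is intgrable} to fix a single reverse H\"older order are fine. The argument, however, hinges on Step~1, and there the gap is real.

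Gluing $\sum_k\mathds{1}_{A_k}(Y^{t,x_k},Z^{t,x_k},K^{t,x_k})$ into an admissible solution requires $\mathds{1}_{A_k}\in\mathbb{L}^2_G(\Omega_t)$, that is, quasi-continuous indicators. Step~2 then needs simple variables of that kind to be dense in $\mathbb{L}^1_G(\Omega_t;\mathbb{R}^m)$. You assert this density without proof, and the natural construction fails: discretizing $\xi$ is a discontinuous operation, and indicators of level sets of elements of $\mathbb{L}^1_G$ generally lie outside $\mathbb{L}^1_G$.

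For instance, take $d=1$, $\underline{\sigma}<\overline{\sigma}$ and $\underline{\sigma}^2t<a<\overline{\sigma}^2t$. Then $\mathds{1}_{\{\langle B\rangle_t\le a\}}$ equals $1$ almost surely under the law of $\sigma W$ with $\sigma^2t=a$, and $0$ almost surely under the laws of $\sigma' W$ with $\sigma'>\sigma$. Those laws converge weakly to the first, so the $\hat{\mathbb{E}}[|\cdot|]$-distance of this indicator to $\textnormal{Lip}(\Omega_t)$ is at least $1/2$. Your fallback, step-function approximation of $\varphi(B_{t_1},\dots,B_{t_n})$, would require showing that a dense set of levels has capacity-null level sets. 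You do not do this, and it is not obvious under volatility uncertainty.

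The missing observation, which is exactly the paper's proof, is that Step~1 is unnecessary. Proposition~\ref{in QBSDE solution is Lipschitz and linear growth} already compares a random initial datum with a deterministic one. Take a bounded $\xi$, a partition $\{I_k\}$ of its range with mesh $\delta$, and points $x_k\in I_k$. Using \eqref{deterministc U is lipschitz} for the second estimate, you have, quasi-surely,
\[
|Y_t^{t,\xi}-u(t,x_k)|\,\mathds{1}_{\{\xi\in I_k\}}=|Y_t^{t,\xi}-Y_t^{t,x_k}|\,\mathds{1}_{\{\xi\in I_k\}}\le C\delta,\qquad |u(t,\xi)-u(t,x_k)|\,\mathds{1}_{\{\xi\in I_k\}}\le C\delta .
\]
Summing over $k$ gives $|Y_t^{t,\xi}-u(t,\xi)|\le 2C\delta$ quasi-surely.

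The indicators enter only a pointwise inequality between two random variables that already exist. No BSDE is solved from a simple datum, so neither quasi-continuity of indicators nor density of simple variables is needed. The passage from bounded to general $\xi$ then uses truncation, a continuous operation that stays inside $\mathbb{L}^1_G(\Omega_t)$, together with the same two Lipschitz estimates you use in Step~2.
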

\begin{proof}
For simplicity, we assume $d=1,m=1$ and $f=0$. We first prove the result for bounded $\xi$.
For $n\in \mathbb{N}$ and $\rho>0$, define the simple random variable
    \begin{align}
        \eta^n:=\sum_{i=-n}^n\frac{i\rho}{n}\mathds{1}_{[\frac{i\rho}{n},\frac{(i+1)\rho}{n}]}{(\xi)}\,.
    \end{align}
 Using Proposition~\ref{in QBSDE solution is Lipschitz and linear growth}, we have
 \begin{align}
     |Y_t^{t,\xi}-u(t,\eta^n)|&=|Y_t^{t,\xi}-\sum_{i=-n}^n u(t,\frac{i\rho}{n}) \mathds{1}_{[\frac{i\rho}{n},\frac{(i+1)\rho}{n}]}(\xi)|\leq \sum_{i=-n}^n| Y_t^{t,\xi}-Y_t^{t,\frac{i\rho}{N}}| \mathds{1}_{[\frac{i\rho}{n},\frac{(i+1)\rho}{n}]}(\xi)\\
     &\leq C \sum_{i=-n}^n|\xi-\frac{i\rho}{n}|\mathds{1}_{[\frac{i\rho}{n},\frac{(i+1)\rho}{n}]}(\xi)\leq C\frac{\rho}{n}\,.
 \end{align}
 Using \eqref{deterministc U is lipschitz} and the same argument, we obtain
 \begin{align}
     |u(t,\xi)-u(t,\eta^n)|\leq C\frac{\rho}{n}\,.
 \end{align}
Then we have $|Y_t^{t,\xi}-u(t,\xi)|\leq 2C \frac{\rho}{n}$. Letting $n\rightarrow \infty$, we obtain $Y_t^{t,\xi}=u(t,\xi)$ for bounded $\xi\in \mathbb{L}^1(\Omega_t)$.
For general $\xi \in \mathbb{L}^1(\Omega_t)$, there exists a
sequence of bounded random variables $\xi^n$ such that
$\lim_{n\rightarrow\infty}\hat{\mathbb{E}}_t[|\xi-\xi^n|]=0$.
Passing to the limit yields $Y_t^{t,\xi}=u(t,\xi)$.
    
\end{proof}

\begin{thm}\label{u is 1/2 holder contin wrt t}
     Suppose Assumptions~\ref{assumption} and~\ref{assumption2} hold,
and let the triple $(Y^{t,x},Z^{t,x},K^{t,x})$
be a solution to \eqref{QBSDE}. Let $u(t,x):=Y_t^{t,x}$. If for each $x\in \mathbb{R}^m$, the process $Y^{t,x}$ is bounded, then there exists a constant $C$ depending only on $C_1$, $C_2$, $\underline{\sigma}$, $\overline{\sigma}$ and $T$ such that 
     \begin{align}
         |u(t,x)-u(t+\delta,x)|\leq C(1+|x|)\delta^{\frac{1}{2}}
     \end{align}
     for all  $(\delta,x)\in[0,T-t]\times \mathbb{R}^m$.
\end{thm}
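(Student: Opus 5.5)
We use the flow (Markov) property together with Theorem~\ref{QSDE sol is represented by u(t,X_t)}. By uniqueness (Theorem~\ref{thm:uniqueness}), the solution started at $(t,x)$, restricted to $[t+\delta,T]$, coincides with the solution started at $(t+\delta,X^{t,x}_{t+\delta})$. Hence $Y^{t,x}_{t+\delta}=u(t+\delta,X^{t,x}_{t+\delta})$. We then split
\begin{align}
u(t,x)-u(t+\delta,x)=\big(Y^{t,x}_t-Y^{t,x}_{t+\delta}\big)+\big(u(t+\delta,X^{t,x}_{t+\delta})-u(t+\delta,x)\big).
\end{align}

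\textbf{Second term.} By \eqref{deterministc U is lipschitz}, this term is bounded by $C|X^{t,x}_{t+\delta}-x|$.

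\textbf{First term.} On $[t,t+\delta]$ the BSDE gives
\begin{align}
Y_t-Y_{t+\delta}=\int_t^{t+\delta}f\,du+\int_t^{t+\delta}g_{ij}\,d\langle B^i,B^j\rangle_u-\int_t^{t+\delta}Z_u\,dB_u-(K_{t+\delta}-K_t).
\end{align}
Since $Y_t$ is deterministic, it suffices to take sublinear conditional expectations at time $t$, which kills the martingale parts. To absorb the quadratic $Z$ terms, we linearize as in the proof of Theorem~\ref{Z is BMO and K is intgrable}. Write
\begin{align}
g(u,X_u,Y_u,Z_u)=g(u,X_u,Y_u,0)+b^\epsilon_uZ_u+m^\epsilon_u,
\end{align}
with $|b^\epsilon|\le C_1(1+|Z|)$, which is a $G$-BMO martingale generator by \cite[Lemma 3.6]{hu2018quadratic}. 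Passing to $\hat{\mathbb{E}}^{b^\epsilon}$ turns the $Z$ part into a stochastic integral against the $G$-Brownian motion $B^\epsilon$. By Lemma~\ref{K is MG in BMO} and the integrability of $K$ in Theorem~\ref{Z is BMO and K is intgrable}, $K$ remains a decreasing $G$-martingale under $\hat{\mathbb{E}}^{b^\epsilon}$. Applying $\hat{\mathbb{E}}^{b^\epsilon}_t$ and $-\hat{\mathbb{E}}^{b^\epsilon}_t[-\cdot]$ to the identity
\begin{align}
Y_t=u(t+\delta,X_{t+\delta})+\int_t^{t+\delta}(\dots)\,d\langle B^\epsilon\rangle_u-\int_t^{t+\delta} Z_u\,dB^\epsilon_u-(K_{t+\delta}-K_t)
\end{align}
gives
\begin{align}
|u(t,x)-u(t+\delta,x)|\le \hat{\mathbb{E}}^{b^\epsilon}_t\Big[|u(t+\delta,X_{t+\delta})-u(t+\delta,x)|\Big]+C\delta+\hat{\mathbb{E}}^{b^\epsilon}_t\Big[\int_t^{t+\delta}|m^\epsilon_u|\,du\Big].
\end{align}
The two one-sided bounds are handled as follows.
\begin{itemize}[label={}]
\item For the upper bound, the decreasing $K$ only helps.
\item For the lower bound, $K$ must be handled as in Theorem~\ref{in QBSDE comparison theorem}: we use the auxiliary representation of $K$ itself, or equivalently the fact that for a $G$-martingale $\hat{\mathbb{E}}_t[K_{t+\delta}-K_t]=0$, after rearranging so that $-K$ appears with the supremum.
\end{itemize}
The integrand $g(u,X_u,Y_u,0)$ is bounded because $Y$ is bounded by Theorem~\ref{Z is BMO and K is intgrable}. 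This is the source of the $C\delta$ term.

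\textbf{Letting $\epsilon\to0$.} The $m^\epsilon$ term vanishes as $\epsilon\to0$, exactly as in Theorem~\ref{Z is BMO and K is intgrable}, using reverse H\"older with BMO norm bounded uniformly in $\epsilon$ (Remark~\ref{order of reverse holder inquality rmk}). For the main term, the third estimate of Theorem~\ref{SDE sublinear property} under $\hat{\mathbb{E}}^{b^\epsilon}$ gives
\begin{align}
\hat{\mathbb{E}}^{b^\epsilon}_t\big[|X_{t+\delta}-x|\big]\le C(1+|x|)\delta^{1/2}.
\end{align}
Combining these with $\delta\le T^{1/2}\delta^{1/2}$ yields the claim.

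\textbf{Main obstacle.} The delicate step is the lower bound, where the decreasing $G$-martingale $K$ has the wrong sign. We plan to treat it exactly as in Theorem~\ref{in QBSDE comparison theorem}. Under the new expectation, $K$ is a decreasing $G$-martingale, so $\hat{\mathbb{E}}^{b^\epsilon}_t[-(K_{t+\delta}-K_t)+\xi]$ can be compared with $-\hat{\mathbb{E}}^{b^\epsilon}_t[-\xi]$ via the identity $\hat{\mathbb{E}}_t[\xi+K_{t+\delta}-K_t]\ge -\hat{\mathbb{E}}_t[-\xi]$. This relies on sublinearity: $\hat{\mathbb{E}}[\xi]\ge \hat{\mathbb{E}}[\xi+\eta]-\hat{\mathbb{E}}[\eta]$ with $\eta=K_{t+\delta}-K_t$ and $\hat{\mathbb{E}}_t[\eta]=0$.

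We also need to justify applying Theorem~\ref{SDE sublinear property} under $\hat{\mathbb{E}}^{b^\epsilon}$ with constants independent of $\epsilon$. This holds since the reverse H\"older order depends only on the uniform BMO bound, which is where the dependence on $\underline{\sigma}$ may enter.
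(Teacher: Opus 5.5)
Your argument is correct and has the same skeleton as the paper's proof. Both use:
\begin{itemize}
\item the flow identity $Y^{t,x}_{t+\delta}=u(t+\delta,X^{t,x}_{t+\delta})$ from Theorems~\ref{thm:uniqueness} and~\ref{QSDE sol is represented by u(t,X_t)};
\item the change to $\hat{\mathbb{E}}^{b^\epsilon}$, under which $K$ remains a decreasing $G$-martingale;
\item the Lipschitz bound \eqref{deterministc U is lipschitz};
\item the third estimate of Theorem~\ref{SDE sublinear property}, with constants uniform in $\epsilon$.
\end{itemize}
The one genuine difference is the linearization. The paper linearizes $g$ in both $y$ and $z$ and removes $a^\epsilon Y$ with the exponential weight $X^\epsilon$. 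It must then control the extra term $u(t+\delta,x)\bigl(X^\epsilon_{t+\delta}/X^\epsilon_t-1\bigr)$ by $C(1+|x|)\delta$. You linearize only in $z$ and bound $g(u,X_u,Y_u,0)$ directly through the a priori estimate on $Y$ from Theorem~\ref{Z is BMO and K is intgrable}. No weight is needed, so your argument is slightly shorter. Also, with the cutoff used there, $m^\epsilon$ lives on $\{|Z|\le 2\epsilon\}$. Hence $|m^\epsilon|\le 2\epsilon C_1(1+2\epsilon)$ quasi-surely, and you do not need reverse H\"older for it.

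The step you call the main obstacle is routine, and the device of Theorem~\ref{in QBSDE comparison theorem} is unnecessary here. It is used there only because a difference of two decreasing $G$-martingales appears. Set $\xi_\delta:=u(t+\delta,X_{t+\delta})+\int_t^{t+\delta}(\cdots)\,d\langle B^\epsilon\rangle_u$ and $\eta:=K_{t+\delta}-K_t\le 0$.
\begin{itemize}
\item Monotonicity gives $u(t,x)\ge\hat{\mathbb{E}}^{b^\epsilon}_t[\xi_\delta]$.
\item Your sublinearity inequality, applied through $-\hat{\mathbb{E}}^{b^\epsilon}_t[-\,\cdot\,]$ with $\hat{\mathbb{E}}^{b^\epsilon}_t[\eta]=0$, gives $u(t,x)\le\hat{\mathbb{E}}^{b^\epsilon}_t[\xi_\delta]$.
\end{itemize}
So it is the upper bound on $u(t,x)$, not the lower one, that uses the martingale property. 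Together the two bounds are just the identity $u(t,x)=\hat{\mathbb{E}}^{b^\epsilon}_t[\xi_\delta]$, which the paper writes down directly. Equivalently, $\int Z\,dB^\epsilon+K$ is a $G$-martingale under $\hat{\mathbb{E}}^{b^\epsilon}$. Your estimate then follows from $\bigl|\hat{\mathbb{E}}^{b^\epsilon}_t[\xi_\delta-u(t+\delta,x)]\bigr|\le\hat{\mathbb{E}}^{b^\epsilon}_t\bigl[|\xi_\delta-u(t+\delta,x)|\bigr]$.
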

\begin{proof}
For simplicity, we assume $d=1,m=1$ and $f=0$.
We use an argument similar to that in the proof of
Theorem~\ref{Z is BMO and K is intgrable}.
For $\epsilon>0$, let $l^\epsilon(x)$ be a Lipschitz function satisfying $\mathds{1}_{[-\epsilon,\epsilon]}(x)\leq l^\epsilon(x)\leq \mathds{1}_{[-2\epsilon,2\epsilon]}(x)$. Define three processes 
\begin{align}
    a_s^{\epsilon}&=(1-l^\epsilon(Y_s^{t,x}))\frac{g(s,X_s^{t,x},Y_s^{t,x},Z_s^{t,x})-g(s,X_s^{t,x},0,Z_s^{t,x})}{Y_s^{t,x}}\mathds{1}_{|Y_s^{t,x}|\geq 0}\,,\\
    b_s^{\epsilon}&=(1-l^\epsilon(Z_s^{t,x}))\frac{g(s,X_s^{t,x},0,Z_s^{t,x})-g(s,X_s^{t,x},0,0)}{|Z_s^{t,x}|^2}Z_s^{t,x}\mathds{1}_{|Z_s^{t,x}|\geq 0}\,,\\
    m_s^{\epsilon}&=l^\epsilon(Y_s^{t,x})(g(s,X_s^{t,x},Y_s^{t,x},Z_s^{t,x})-g(s,X_s^{t,x},0,Z_s^{t,x}))\\
    &\quad+l^\epsilon(Z_s^{t,x})(g(s,X_s^{t,x},0,Z_s^{t,x})-g(s,X_s^{t,x},0,0))\,.
\end{align}
Then $|a_s^{\epsilon}|\leq C_1$, $|b_s^{\epsilon}|\leq C_1(1+|Z_s^{t,x}|)$ and $m_s^{\epsilon}\leq 4\epsilon C_1(1+\epsilon+|Z_s^{t,x}|)$\,.
It follows that
\begin{align}    Y_t^{t,x}=\Phi(X_T^{t,x})+\int_t^T(a_u^{\epsilon}Y_u^{t,x}&+b_u^{\epsilon}Z_u^{t,x}+m_u^{\epsilon}+g(u,X_u^{t,x},0,0))\,d\langle B \rangle _u\\
&-\int_t^TZ_u^{t,x}\,dB_u-(K_T^{t,x}-K_t^{t,x})\,.
\end{align}
 Let $\hat{\mathbb{E}}^{b^{\epsilon}}$ be the sublinear expectation induced by $b^{\epsilon}$. By \eqref{new_GBM}, $B^{\epsilon}:=B-\int_0^{\cdot} b_u^{\epsilon}\,d\langle B \rangle _u$ is a $G$-Brownian motion under $\hat{\mathbb{E}}^{b^\epsilon}$. We have 
\begin{align}
    Y_t^{t,x}=\Phi(X_T^{t,x})+\int_t^T(a_u^{\epsilon}Y_u^{t,x}+m_u^{\epsilon}+g(u,X_u^{t,x},0,0))\,d\langle B^{\epsilon} \rangle _u-\int_t^TZ_u^{t,x}\,dB_u
    ^{\epsilon}-(K_T^{t,x}-K_t^{t,x})\,.
\end{align}
 Since $K_T^{t,x}\in \mathbb{L}^p(\Omega_T)$ for all $p\geq 1$, by Lemma~\ref{K is MG in BMO},  the process $K$ is a decreasing $G$-martingale under $\hat{\mathbb{E}}^{b^{\epsilon}}$. Let  $X^{\epsilon}$ be a solution to the  SDE
\begin{align}
    dX_s^{\epsilon}=a_s^{\epsilon}X_s^{\epsilon}\,d\langle B^{\epsilon} \rangle _s \hspace{1cm} X_0=1\,.
\end{align}
Applying It\^{o}'s formula to $X^\epsilon Y$, we have
\begin{align}
    Y_t^{t,x}=(X_t^{\epsilon})^{-1}\hat{\mathbb{E}}_t^{b^{\epsilon}}[X_{t+\delta}^{\epsilon}Y_{t+\delta}^{t,x}+\int_t^{t+\delta}X_u^{\epsilon}(m_u^{\epsilon}+g(u,X_u^{t,x},0,0))\,d\langle B^{\epsilon} \rangle _u].
\end{align}
Since $Y_t^{t,x}$ is deterministic, for $\delta\in[0,T-t]$ we have
\begin{align}
u(t,x)=\hat{\mathbb{E}}^{b^{\epsilon}}[\frac{X_{t+\delta}^{\epsilon}}{X_t^{\epsilon}}Y_{t+\delta}^{t,x}+\int_t^{t+\delta}\frac{X_{u}^{\epsilon}}{X_t^{\epsilon}}(m_u^{\epsilon}+g(u,X_u^{t,x},0,0))\,d\langle B^{\epsilon}\rangle_u]\,.
\end{align}
From $X_s^{t,x}=X_s^{t+\delta, X_{t+\delta}^{t,x}}$ for $(s,x)\in [t+\delta,T]\times \mathbb{R}$ and Theorem~\ref{thm:uniqueness}, we have $Y_{t+\delta}^{t,x}=Y_{t+\delta}^{t+\delta,X_{t+\delta}^{t,x}}$. Then by Theorem~\ref{QSDE sol is represented by u(t,X_t)}, it follows that $Y_{t+\delta}^{t,x}=u(t+\delta, X_{t+\delta}^{t,x})$. It follows that 
\begin{align}
    |u(t,x)-u(t+\delta,x)|=&\Big|\hat{\mathbb{E}}^{b^{\epsilon}}\Big[\frac{X_{t+\delta}^{\epsilon}}{X_t^{\epsilon}}u(t+\delta,X_{t+\delta}^{t,x})-u(t+\delta,x)\\
    &+\int_t^{t+\delta}\frac{X_{u}^{\epsilon}}{X_t^{\epsilon}}(m_u^{\epsilon}+g(u,X_u^{t,x},0,0))\, d \langle B^{\epsilon}\rangle _u\Big]\Big|\\
    =&\Big|\hat{\mathbb{E}}^{b^{\epsilon}}\Big[\frac{X_{t+\delta}^{\epsilon}}{X_t^{\epsilon}}\big(u(t+\delta,X_{t+\delta}^{t,x})-u(t+\delta,x)\big)+u(t+\delta,x)(\frac{X_{t+\delta}^{\epsilon}}{X_t^{\epsilon}}-1)\\
    &+\int_t^{t+\delta}\frac{X_{u}^{\epsilon}}{X_t^{\epsilon}}(m_u^{\epsilon}+g(u,X_u,0,0))\,d\langle B^{\epsilon}\rangle _u    \Big]\Big|\,.
\end{align}
By Proposition~\ref{in QBSDE solution is Lipschitz and linear growth},
Theorem~\ref{SDE sublinear property}, and the boundedness of
$\frac{X_{t+\delta}^{\epsilon}}{X_t^{\epsilon}}$, we obtain
\begin{align}
 &\hat{\mathbb{E}}^{b^{\epsilon}}[|\frac{X_{t+\delta}^{\epsilon}}{X_t^{\epsilon}}(u(t+\delta,X_{t+\delta}^{t,x})-u(t+\delta,x))|]\leq C(1+|x|)\delta^{\frac{1}{2}}\,,\\
 &\hat{\mathbb{E}}^{b^\epsilon}[|u(t+\delta,x) (\frac{X_{t+\delta}^{\epsilon}}{X_t^{\epsilon}}-1)|]\leq C(1+|x|)\delta\,,\\
 &\hat{\mathbb{E}}^{b^\epsilon}[|\int_t^{t+\delta}\frac{X_{u}^{\epsilon}}{X_t^{\epsilon}}g(u,X_u,0,0)\,d\langle B^{\epsilon}\rangle _u|]\leq C\delta
\end{align}
where the constant $C$ depends on $C_1$, $C_2$, $\underline{\sigma}$, $\overline{\sigma}$ and $T$.
Letting $\epsilon \rightarrow 0$ yields the desired result.
    
\end{proof}

We now establish a stability result.
\begin{thm}\label{in QBSDE convergence stability}
     Suppose Assumption~\ref{assumption} holds and, for each $n\geq 1$,
let the triple $(Y^n,Z^n,K^n)= (Y^{t,x,n},Z^{t,x,n},K^{t,x,n})$
be a solution to the following $G$-BSDE:
    \begin{align}
    Y_s^{n,t,x}=&\Phi^n(X_T^{t,x})+\int_s^Tf^n(u,X_u^{t,x},Y_u^{n,t,x},Z_u^{n,t,x})\,du+\int_s^Tg^n_{ij}(u,X_u^{t,x},Y_u^{n,t,x},Z_u^{n,t,x})\,d\langle B^i,B^j\rangle_u\\
    &-\int_s^TZ_u^{n,t,x}\,dB_u-(K_T^{n,t,x}-K_s^{n,t,x})\,,\;t\le s\le T
    \end{align}
    where $f^n,g^n_{ij},\Phi^n$ satisfy Assumption~\ref{assumption2} uniformly. If the process $Y_s^{n,t,x}$ is bounded and $f^n \rightarrow f$, \,$g^n_{ij}\rightarrow g_{ij},\, \Phi^n\rightarrow \Phi$, then there exists a solution $(Y,Z,K)$ to the following equation
    \begin{align}
        Y_s^{t,x}=&\Phi(X_T^{t,x})+\int_s^Tf(u,X_u^{t,x},Y_u^{t,x},Z_u^{t,x})\,du+\int_s^Tg_{ij}(u,X_u^{t,x}Y_u^{t,x},Z_u^{t,x})\,d\langle B^i,B^j\rangle_u\\
        &-\int_s^TZ_u^{t,x}\,dB_u-(K_T^{t,x}-K_s^{t,x})\,,\;t\le s\le T\,.
    \end{align}
\end{thm}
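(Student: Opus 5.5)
We will show that $(Y^n,Z^n,K^n)$ is Cauchy in the appropriate spaces and identify the limit as a solution. The first step is to obtain uniform bounds. Since $f^n,g^n_{ij},\Phi^n$ satisfy Assumption~\ref{assumption2} with the same constants $C_1,C_2$, Theorem~\ref{Z is BMO and K is intgrable} gives
\[
\sup_n\Big(\big\|\sup_{t\le u\le T}|Y^n_u|\big\|_{\mathbb{L}^\infty}+\|Z^n\|_{\textnormal{BMO}_G}\Big)\le C .
\]
It also gives $\sup_n\hat{\mathbb{E}}[|K^n_T|^p]<\infty$ for every $p\ge1$. Consequently the BMO norms of all linearization generators built below are bounded uniformly in $n,m,\epsilon$. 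Lemma~\ref{reverse holder inequality} therefore supplies a common order $q>1$ and a common constant $C_q$ for the reverse H\"older inequality.

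The second step is to estimate $Y^n-Y^m$. I would set $\hat Y=Y^n-Y^m$ and linearize exactly as in the proof of Theorem~\ref{in QBSDE comparison theorem}, using $g^n$ as the reference driver. This produces processes $a^\epsilon,b^\epsilon,m^\epsilon$ and a residual $h_u=g^n(u,X_u,Y^m_u,Z^m_u)-g^m(u,X_u,Y^m_u,Z^m_u)$, with the analogous terms for $f$. Running the argument of Proposition~\ref{in QBSDE solution is Lipschitz and linear growth} in both directions, with the $K^m$-absorption trick from the comparison proof, should give
\[
|\hat Y_s|\le C\,\hat{\mathbb{E}}^{b^\epsilon}_s\Big[|\Phi^n-\Phi^m|(X_T)+\int_s^T\big(|h_u|+|\tilde h_u|+m^\epsilon_u\big)\,du\Big].
\]
Here $\tilde h$ is the corresponding residual for $f$, and the $d\langle B\rangle$ integrals are absorbed via $\overline\sigma^2$. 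I would then use H\"older's inequality with the uniform reverse H\"older bound to pass back to $\hat{\mathbb{E}}_s$ and let $\epsilon\to0$.

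The main obstacle is showing that the right-hand side tends to $0$ uniformly, which is needed for $\|\sup_s|Y^n_s-Y^m_s|\|_{\mathbb{S}^p}\to0$. The convergence $f^n\to f$ is only pointwise, and the residual is evaluated at the random arguments $(X_u,Y^m_u,Z^m_u)$, which change with $m$. I would handle this by truncation. On $\{|X_u|\le R,\ |Z^m_u|\le R\}$ the functions $g^n$ are uniformly Lipschitz, with constant $C_1(1+2R)$ in $z$ and $C_1$ in $(x,y)$, so pointwise convergence upgrades to uniform convergence on compacts (Arzel\`a--Ascoli/Dini type). On the complement, the quadratic growth bound $C_2+2C_1(1+|y|+|z|^2)$ applies. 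The measures of the sets $\{|X_u|>R\}$ and $\{|Z^m_u|>R\}$ are controlled, uniformly in $m$, by the moment bounds of Theorem~\ref{SDE sublinear property} and by $\hat{\mathbb{E}}[(\int|Z^m|^2d\langle B\rangle)^p]\le p!\,C^p$. The terminal term is treated in the same way, using the uniform Lipschitz property of $\Phi^n$. If the uniform-in-$s$ estimate proves delicate, I would first prove it at fixed $s$ and then upgrade using Doob-type estimates for $G$-martingales.

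The final step identifies the limit. Since the $Y^n$ are uniformly bounded, Cauchy in $\mathbb{S}^2$ implies Cauchy in $\mathbb{S}^p$ for all $p$, so I would invoke \cite[Proposition~3.7]{hu2018quadratic} as in Theorem~\ref{thm:uniqueness}. That estimate bounds $\|Z^n-Z^m\|_{\mathbb{H}^2}$ by the $\mathbb{S}^p$ distance of the $Y$'s, using the uniform BMO bounds, so $Z^n\to Z$ in $\mathbb{H}^2$. The driver terms then converge in $\mathbb{L}^1_G$, by the local Lipschitz estimate $(1+|z|+|z'|)|z-z'|$ combined with Cauchy--Schwarz and the uniform bounds, together with the pointwise convergence handled above. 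Define $K_s$ from the equation. It is the $\mathbb{L}^2_G$-limit of the decreasing $G$-martingales $K^n_s$, hence itself a decreasing $G$-martingale with $K_t=0$. Thus $(Y,Z,K)$ solves the limiting equation, and $Y$ is bounded.
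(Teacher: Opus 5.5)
Your second step has a genuine gap. On the complement of $\{|X_u|\le R,\ |Z^m_u|\le R\}$ you bound the residual $h_u=g^n(u,X_u,Y^m_u,Z^m_u)-g^m(u,X_u,Y^m_u,Z^m_u)$ by the quadratic growth bound. You then argue that the sets $\{|Z^m_u|>R\}$ are small uniformly in $m$. Small sets are not enough, because on them the integrand is of order $|Z^m_u|^2$. What you actually need is $\hat{\mathbb{E}}^{b^\epsilon}_s\bigl[\int_s^T|Z^m_u|^2\mathds{1}_{\{|Z^m_u|>R\}}\,du\bigr]\to0$ as $R\to\infty$, uniformly in $m$ and $s$. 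That is uniform integrability of $|Z^m|^2$ with respect to $du\otimes\hat{\mathbb{E}}$.

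Uniform BMO bounds and uniform moments of $\int_t^T|Z^m_u|^2\,d\langle B\rangle_u$ do not imply this. The deterministic family $Z^m_u=\sqrt{m}\,\mathds{1}_{[t,t+1/m]}(u)$ keeps all of these quantities bounded independently of $m$, yet $\int_t^T|Z^m_u|^2\mathds{1}_{\{|Z^m_u|>R\}}\,du=1$ for $m>R^2$. The residual genuinely lives on that set. For truncations like $g^n=g\wedge n$ (exactly the ones fed into this theorem in the proof of Theorem~\ref{thm:exist}), $g^n-g^m$ vanishes where $|z|$ is small. For instance, with $g^n(z)=c\,(|z|\wedge n)^2$ and $n>m$, it equals $c(|z|^2-m^2)$ on $\{m<|z|\le n\}$.

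The $X$-truncation is harmless: $\mathds{1}_{\{|X_u|>R\}}\le\mathds{1}_{\{\sup_v|X_v|>R\}}$ can be pulled out of the time integral and handled by Cauchy--Schwarz. Only the $Z$-truncation fails. Closing the gap along your route would need an extra estimate, such as a gradient bound $|Z^m_u|\le C(1+|X_u|)$ or a uniform $L^{2+\delta}(du\otimes\hat{\mathbb{E}})$ bound on $Z^m$. Nothing you cite provides either.

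The paper's proof never estimates driver differences along the moving arguments $(Y^m,Z^m)$. It works instead with the value functions $u^n(s,x)=Y^{n,s,x}_s$. By Theorem~\ref{Z is BMO and K is intgrable}, Proposition~\ref{in QBSDE solution is Lipschitz and linear growth} and Theorem~\ref{u is 1/2 holder contin wrt t}, these are uniformly bounded, uniformly Lipschitz in $x$ and uniformly $\frac12$-H\"older in $s$, with constants independent of $n$. Arzel\`a--Ascoli then yields a subsequence converging locally uniformly to some $u$. Since $Y^n_s=u^n(s,X_s)$ (Theorem~\ref{QSDE sol is represented by u(t,X_t)} together with uniqueness), splitting on $\{\sup_s|X_s|\le N\}$ gives $Y^n\to Y:=u(\cdot,X)$ in $\mathbb{S}^p$ along that subsequence.

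From there your third step matches the paper. $Z^n$ is Cauchy by It\^o's formula for $|Y^n-Y^m|^2$, where the drivers enter only through their growth bound multiplied by the small factor $\sup_s|Y^n_s-Y^m_s|$. Pointwise convergence of $g^n$ is needed only at the fixed limit $(X,Y,Z)$, where dominated convergence suffices. If the missing uniform integrability were supplied, your approach would give convergence of the whole sequence rather than a subsequence. As it stands, though, the compactness argument is the idea you are missing.
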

\begin{proof}   For simplicity, assume $d=1,m=1$ and $f=0$.
We first show that, for some $p\geq 1$,
    \begin{align}
        \lim_{n.m\rightarrow \infty}\hat{\mathbb{E}}[\sup_{t\leq u \leq T}|Y_u^{n,t,x}-Y_u^{m,t,x}|^p]=0.
    \end{align}
    Let $u^n(s,x):=Y_s^{n,s,x}$. By Theorem~\ref{u is 1/2 holder contin wrt t}, Theorem~\ref{Z is BMO and K is intgrable} and \eqref{deterministc U is lipschitz}, we have 
    \begin{align}
        |u^n(s,x)-u^n(s+\delta, x')|&\leq C\big(|x-x'|+(1+|x|)\delta^{\frac{1}{2}}\big) \text{ for }(s,x)\in[t,T]\times \mathbb{R}\,,\\
        \sup_{(s,x)\in [t,T]\times \mathbb{R}}|u^n(s,x)|&\leq C\,,
    \end{align}
where the constant $C$ depends only on $C_1$, $C_2$,
$\underline{\sigma}$, $\overline{\sigma}$, and $T$, and is independent of the sequence index. By the Arzel\`a--Ascoli theorem, there exist a function $u(s,x)$
and a subsequence $u^{n_\ell}(s,x)$ such that $u^{n_\ell}(s,x)$
converges to $u(s,x)$ uniformly on every compact subset of
$[t,T]\times\mathbb{R}$.
For notational simplicity, we relabel $n_\ell$ as $n$.
It follows that
\begin{align}
    \lim_{n\rightarrow \infty}\hat{\mathbb{E}}[\sup_{t\leq s \leq T} |u^n(s,X^{t,x}_s)-u(s,X^{t,x}_s)|^p]\leq & \lim_{n\rightarrow \infty}\hat{\mathbb{E}}[\sup_{t\leq s \leq T} |u^n(s,X^{t,x}_s)-u(s,X^{t,x}_s)|^p\mathds{1}_{|X^{t,x}_s|\leq N}]\\
    &+\lim_{n\rightarrow \infty}\hat{\mathbb{E}}[\sup_{t\leq s \leq T} |u^n(s,X^{t,x}_s)-u(s,X^{t,x}_s)|^p\mathds{1}_{|X^{t,x}_s|\geq N}]\\
    &\leq (2C)^{p-1}\lim_{n\rightarrow \infty}\sup_{t\leq s\leq T,\, |x|\leq N}|u^n(s,x)-u(s,x)|\\
    &+(2C)^p\frac{\hat{\mathbb{E}}[\sup_{t\leq s\leq T}|X^{t,x}_s|]}{N} 
\end{align}
where $p\geq 1$, $N>0$ and the constant $C$ depends on $C_1$, $C_2$, $\underline{\sigma}$, $\overline{\sigma}$ and $T$.
Sending $N\rightarrow \infty$, we have \begin{align}\label{eqn:sta}
    \lim_{n\rightarrow \infty}\hat{\mathbb{E}}[\sup_{t\leq s \leq T} |u^n(s,X^{t,x}_s)-u(s,X^{t,x}_s)|^p]=0\,.
\end{align}
For $(s,x)\in [t,T]\times\mathbb{R}$, define the process
$Y^{t,x}_s=u(s,X^{t,x}_s)$.

Next, we show that, for every $p\geq 2$, 
\begin{align}
 \lim_{n,m \rightarrow \infty}\hat{\mathbb{E}}[(\int_t^T |Z_u^{n,t,x}-Z_u^{m,t,x}|^2\,d\langle B\rangle _u )^\frac{p}{2}]=0\,.   
\end{align}
 Define the triple $(\hat{Y},\hat{Z},\hat{K}):=(Y^{n,t,x}-Y^{m,t,x},Z^{n,t,x}-Z^{m,t,x},K^{n,t,x}-K^{m,t,x})$ and $\hat{\Phi}(X^{t,x}_T):=\Phi^n(X^{t,x}_T)-\Phi^m(X^{t,x}_T)$. Applying It\^{o}'s formula to $(\hat{Y}_t)^2$, we have 
\begin{align}
    \int_t^T |\hat{Z}_u|^2\,d\langle B \rangle _u\leq |\hat{\Phi}(X^{t,x}_T)|^2+2\int_t^T \hat{Y}_u&(g^n(u,X^{t,x}_u,Y^{n,t,x}_u,Z^{n,t,x}_u)-g^m(u,X^{t,x}_u,Y_u^{m,t,x},Z^{m,t,x}_u))\,d\langle B\rangle_u\\&-2\int_t^T \hat{Y}_u\hat{Z}_u\,dB_u-2\int_t^T \hat{Y}_u\,d\hat{K}_u\,.
\end{align}
 By \cite[Theorem 2.1]{gao2009pathwise}, we have 
 \begin{align}
     \hat{\mathbb{E}}[(\int_t^T |\hat{Z}_u|^2\,d\langle B \rangle _u)^\frac{p}{2}]\leq &C\big(\hat{\mathbb{E}}[{|\hat{\Phi}(X^{t,x}_T)|}^p]\\
     &+\hat{\mathbb{E}}[\sup_{t\leq u \leq T}{|\hat{Y}_u|}^{\frac{p}{2}}{(\int_t^T 1+|Y_u^{n,t,x}|+|Y_u^{m,t,x}|+{|Z_u^{n,t,x}|}^2+{|Z_u^{m,t,x}|}^2\,d\langle B\rangle_u)}^{\frac{p}{2}}]\\&+\hat{\mathbb{E}}[\sup_{t\leq u\leq T}|{\hat{Y}_u|}^{\frac{p}{2}}{(\int_t^T {|\hat{Z}_u|}^2\,d\langle B\rangle _u)}^{\frac{p}{4}}]+\hat{\mathbb{E}}[\sup_{t\leq u \leq T}{|\hat{Y}_u|}^{\frac{p}{2}}\, |\hat{K}_T|^{\frac{p}{2}}|]\big)
 \end{align}
where the constant $C$ depends on $C_1$, $C_2$, $\underline{\sigma}$, $\overline{\sigma}$.
 By Theorem~\ref{Z is BMO and K is intgrable}, for every $p\geq 1$, $\big|\!\big|\sup_{t\leq u \leq T}|Y_u^{n,t,x}|^p\big|\!\big|_{\mathbb{L}^{\infty}}$, $\hat{\mathbb{E}}[(\int_t^T |Z_u^{n,t,x}|^2\,d\langle B\rangle_u)^p]$ and $\hat{\mathbb{E}}[|K_T^{n,t,x}|^p]$ are bounded uniformly. From \eqref{eqn:sta}, it follows that  $\lim_{n,m\to \infty}\hat{\mathbb{E}}[(\int_t^T |Z_u^{n,t,x}-Z_u^{m,t,x}|^2\,d\langle B\rangle_u )^\frac{p}{2}]=0$. We then define the process
 \begin{align}
     Z_s^{t,x}=\lim_{n\rightarrow \infty}Z_s^{n,t,x}
 \end{align}
 Next, we establish the following convergence result: 
 \begin{align}
 \lim_{n\rightarrow \infty}\hat{\mathbb{E}}[\int_t^T|g^n(u,X^{t,x}_u,Y_u^{n,t,x},Z_u^{n,t,x})-g(u,X^{t,x}_u,Y^{t,x}_u,Z^{t,x}_u)|d\langle B \rangle_u]=0\,.
 \end{align}
A direct calculation gives
\begin{align}
    \hat{\mathbb{E}}[\int_t^T&|g^n(u,X^{t,x}_u,Y_u^{n,t,x},Z_u^{n,t,x})-g(u,X^{t,x}_u,Y^{t,x}_u,Z^{t,x}_u)|\, d\langle B\rangle_u]\\
     \leq &\hat{\mathbb{E}}[\int_t^T|g^n(u,X^{t,x}_u,Y^{t,x}_u,Z^{t,x}_u)-g(u,X^{t,x}_u,Y^{t,x}_u,Z^{t,x}_u)| \,d\langle B\rangle_u]\\&+C_1\hat{\mathbb{E}}[(\int_t^T |Y_u^{n,t,x}-Y^{t,x}_u|+(1+|Z_u^{n,t,x}|+|Z^{t,x}_u|)|Z_u^{n,t,x}-Z^{t,x}_u|\,d\langle B\rangle_u]\,.
\end{align}
We have $\int_t^T|g^n(u,X^{t,x}_u,Y^{t,x}_u,Z^{t,x}_u)-g(u,X^{t,x}_u,Y^{t,x}_u,Z^{t,x}_u)| \,d\langle B\rangle_u\in \mathbb{L}^1_G(\Omega_T)$, and this random variable converges to $0$ quasi-surely. Also $\int_t^T|g^n(u,X^{t,x}_u,Y^{t,x}_u,Z^{t,x}_u)-g(u,X^{t,x}_u,Y^{t,x}_u,Z^{t,x}_u)| \,d\langle B\rangle_u\leq \int_t^T C(1+|Y^{t,x}_u|+|Z^{t,x}_u|^{2})\,d\langle B\rangle_u \in \mathbb{L}^1_b(\Omega_T)$ where the constant $C$ depends on $C_1$ and $C_2$. Applying \cite[Theorem 3.2]{hu2019convergences} and H\"older's inequality, we have  \begin{align}
 \lim_{n\rightarrow \infty}\hat{\mathbb{E}}[(\int_t^T|g^n(u,X^{t,x}_u,Y_u^{n,t,x},Z_u^{n,t,x})-g(u,X^{t,x}_u,Y^{t,x}_u,Z^{t,x}_u)| \,d\langle B\rangle_u]=0\,.
 \end{align}
 We now show that $K_s^{n,t,x}$ forms a Cauchy sequence in
$\mathbb{L}^p(\Omega_s)$ for every $p\geq 1$ and $s\in[t,T]$.
Since $(Y^{n,t,x},Z^{n,t,x},K^{n,t,x})$ is a solution, we have 
  \begin{align}
    K_s^{n,t,x}=Y_s^{n,t,x}-Y_t^{n,t,x}+\int_t^sg^n(u,X^{t,x}_u,Y_u^{n,t,x},Z_u^{n,t,x})\,\langle B\rangle_u-\int_t^sZ_u^{n,t,x}\,dB_u
    \end{align}
Thus, $K_s^{n}$ forms a Cauchy sequence in
$\mathbb{L}^p(\Omega_s)$. Denote the limit by $K$.
Finally, we show that $K$ is a decreasing $G$-martingale.
For each $s\leq u$, we have 
\begin{align}
    \hat{\mathbb{E}}[|\hat{\mathbb{E}}_s[K_u]-K_s|]=\hat{\mathbb{E}}[|\hat{\mathbb{E}}_s[K_u]-\hat{\mathbb{E}}_s[K_u^{n,t,x}]+K^{n,t,x}_s-K_s|]
    \leq\hat{\mathbb{E}}[|K_u-K_u^n|]+\hat{\mathbb{E}}[|K_s^n-K_s|]
\end{align}
Thus, the triple $(Y,Z,K)$ is a solution.
    \end{proof}
    An argument similar to that used in the proof of
Theorem~\ref{in QBSDE convergence stability} yields the following
corollary.
    \begin{cor}\label{cor needed in Erogidc}
         Suppose Assumption~\ref{assumption} holds and, for each $n\geq 1$,
let the triple $(Y^n,Z^n,K^n)= (Y^{t,x,n},Z^{t,x,n},K^{t,x,n})$
be a solution to the following $G$-BSDE:
    \begin{align}
    Y_s^{t,x,n}=&\Phi^n(X_T^{t,x})+\int_s^Tf^n(u,X_u^{t,x},Y_u^{t,x,n},Z_u^{t,x,n})\,du+\int_s^Tg^n_{ij}(u,X_u^{t,x},Y_u^{t,x,n},Z_u^{t,x,n})\,d\langle B^i,B^j\rangle_u\\
    &-\int_s^TZ_u^{t,x,n}\,dB_u-(K_T^{t,x,n}-K_s^{t,x,n}) \text{   where } (s,x)\in[t,T]\times \mathbb{R}^m
    \end{align}
    where $f^n,g^n,\Phi^n$ satisfy Assumption~\ref{assumption2} uniformly. If $f^n \rightarrow f$, $g^n_{ij}\rightarrow g_{ij}, \Phi^n\rightarrow \Phi$ and $|\!|Y^n-Y^m|\!|_{\mathbb{S}^2}$ tends to zero as both indices tend to infinity and $|\!|Z^n_s|\!|^2_{BMO_G}$ is bounded uniformly, then there exists a solution $(Y,Z,K)$ to the following equation
    \begin{align}
        Y_s^{t,x}=&\Phi(X_T^{t,x})+\int_s^Tf(u,X_u^{t,x},Y_u^{t,x},Z_u^{t,x})\,du+\int_s^Tg_{ij}(u,X_u^{t,x},Y_u^{t,x},Z_u^{t,x})\,d\langle B^i,B^j\rangle_u\\
        &-\int_s^TZ_u^{t,x}\,dB_u-(K_T^{t,x}-K_s^{t,x}) 
    \end{align}
    \end{cor}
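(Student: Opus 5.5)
The plan is to repeat the limiting scheme of Theorem~\ref{in QBSDE convergence stability}. The difference is that the Arzel\`a--Ascoli step on the value functions $u^n$ is no longer needed: the Cauchy property of $Y^n$ in $\mathbb{S}^2$ is now an assumption. Integrability of $Z^n$ and $K^n$, which Theorem~\ref{Z is BMO and K is intgrable} supplied there, will come from the uniform $G$-BMO bound. As before, I take $d=1$, $m=1$, $t=0$ and $f=0$ for simplicity.

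\textbf{Step 1: the limit $Y$ and moment bounds.} Since $\mathbb{S}^2(0,T)$ is complete, $Y^n\to Y$ in $\mathbb{S}^2$. From $\sup_n\|Z^n\|_{\textnormal{BMO}_G}<\infty$, the iterated-integral identity in the proof of Theorem~\ref{Z is BMO and K is intgrable} gives, for every $p$,
\[
\sup_n\hat{\mathbb{E}}\Big[\Big(\int_0^T|Z^n_u|^2\,d\langle B\rangle_u\Big)^p\Big]<\infty .
\]
We also need $Y^n$ bounded in $\mathbb{S}^p$ for large $p$. If the $Y^n$ are bounded, this follows from Theorem~\ref{Z is BMO and K is intgrable}. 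Otherwise I would use the BSDE together with the BDG-type inequality of \cite[Theorem 2.1]{gao2009pathwise} and the growth bound $|g^n|\le C_2+2C_1(1+|y|+|z|^2)$, applied via a linear Gronwall step in $|y|$, to get uniform $\mathbb{S}^p$ bounds. The same representation of $K^n$ used in Theorem~\ref{Z is BMO and K is intgrable} then gives $\sup_n\hat{\mathbb{E}}[|K^n_T|^p]<\infty$. By interpolation with the $\mathbb{S}^2$ convergence, $Y^n\to Y$ in $\mathbb{S}^p$ for every $p$.

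\textbf{Step 2: convergence of $Z^n$.} Apply It\^o's formula to $|Y^n-Y^m|^2$ exactly as in Theorem~\ref{in QBSDE convergence stability}. Every term on the right-hand side carries a factor $\sup_u|Y^n_u-Y^m_u|^{p/2}$ or the terminal difference $|\Phi^n-\Phi^m|^p$. By H\"older's inequality and the uniform bounds from Step 1, these terms tend to zero. This yields
\[
\hat{\mathbb{E}}\Big[\Big(\int_0^T|Z^n_u-Z^m_u|^2\,d\langle B\rangle_u\Big)^{p/2}\Big]\to0 .
\]
Hence $Z^n\to Z$ in $\mathbb{H}^p$ for every $p$. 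Moreover $Z$ inherits the BMO bound, by Fatou's lemma under each $P\in\mathcal{P}$.

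\textbf{Step 3: the generator and $K$.} To pass the generator to the limit I split $g^n(Y^n,Z^n)-g(Y,Z)$ into two pieces. The first is $g^n(Y,Z)-g(Y,Z)$. It tends to zero pointwise and is dominated by $C(1+|Y|+|Z|^2)\in\mathbb{L}^1_b$, so it vanishes in $\hat{\mathbb{E}}$ by \cite[Theorem 3.2]{hu2019convergences}. For this I need the convergence $g^n\to g$ to be locally uniform, or at least pointwise, which I read the hypothesis as providing. The second piece is controlled by the Lipschitz bound $C_1(|Y^n-Y|+(1+|Z^n|+|Z|)|Z^n-Z|)$, and Cauchy--Schwarz together with Step 2 handles it. I then define
\[
K_s:=\lim_n K^n_s=\lim_n\Big(Y^n_s-Y^n_0+\int_0^s g^n\,d\langle B\rangle_u-\int_0^s Z^n_u\,dB_u\Big),
\]
with the limit taken in $\mathbb{L}^p$. $K$ is decreasing because each $K^n$ is. $K$ is a $G$-martingale by the same $\mathbb{L}^1$-contraction estimate for $\hat{\mathbb{E}}_s$ used at the end of the previous proof. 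Together these show that $(Y,Z,K)$ solves the limiting equation.

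\textbf{Main obstacle.} I expect the hardest step to be the uniform $\mathbb{S}^p$ and $K$-integrability bounds in Step 1 when boundedness of $Y^n$ is not assumed. I would handle it with the Gronwall/BDG argument sketched above, using the $\mathbb{S}^2$ Cauchy property only for identification of the limit.
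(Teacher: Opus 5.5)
Your proposal is correct and follows the paper's route. The paper proves this corollary only by pointing back to Theorem~\ref{in QBSDE convergence stability}, and your substitutions are exactly what that adaptation requires: the assumed $\mathbb{S}^2$-Cauchy property replaces the Arzel\`a--Ascoli step, and the uniform $G$-BMO bound replaces Theorem~\ref{Z is BMO and K is intgrable} as the source of uniform moments for $Z^n$ and $K^n$.

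One caution concerns the branch where the $Y^n$ are not assumed bounded. A pathwise BDG estimate there is circular, because $K^n$ is only controlled through $Y^n$. Use instead $|Y^n_s|\le\hat{\mathbb{E}}_s\bigl[|\Phi^n(X_T)|+\int_s^T|g^n|\,d\langle B\rangle_u\bigr]$, the bound $\hat{\mathbb{E}}_s\bigl[\int_s^T|Z^n_u|^2\,d\langle B\rangle_u\bigr]\le\|Z^n\|^2_{\textnormal{BMO}_G}$, and a conditional Gronwall argument; this even gives a uniform bound on $\sup_s|Y^n_s|$.
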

    
\subsection{Existence of solutions}
\label{sec:exist}

We now state the main result of this section.
\begin{thm}\label{thm:exist}
    Suppose Assumptions~\ref{assumption} and \ref{assumption2} hold. Then \eqref{QBSDE} has a unique solution $(Y^{t,x},Z^{t,x},K^{t,x})$ such that $Y^{t,x}$ is a bounded process.
\end{thm}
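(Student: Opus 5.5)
Uniqueness is already contained in Theorem~\ref{thm:uniqueness}, so the plan is to prove existence. I will do this by approximating the quadratic drivers with Lipschitz ones and then passing to the limit with the stability result, Theorem~\ref{in QBSDE convergence stability}.

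\textbf{Step 1: truncation in $z$.} For $n\ge1$, let $\pi_n:\mathbb{R}^d\to\mathbb{R}^d$ be the radial projection onto the ball of radius $n$. Set
\[
f^n(s,x,y,z):=f(s,x,y,\pi_n(z)),\qquad g^n_{ij}(s,x,y,z):=g_{ij}(s,x,y,\pi_n(z)),\qquad \Phi^n:=\Phi.
\]
Since $\pi_n$ is $1$-Lipschitz with $|\pi_n(z)|\le|z|$, these drivers satisfy Assumption~\ref{assumption2} with the same constants $C_1,C_2$, uniformly in $n$. They are also globally Lipschitz in $(y,z)$, with Lipschitz constant $C_1(1+2n)$.

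\textbf{Step 2: bounded solutions of the truncated equations.} By the Lipschitz theory of \cite[Theorem~4.1]{hu2014backward}, for each $n$ there is a solution $(Y^n,Z^n,K^n)$ of \eqref{QBSDE} with these drivers. Boundedness of $Y^n$ does not come directly from that theorem. I will obtain it by the linearization used in Theorem~\ref{Z is BMO and K is intgrable}. In the Lipschitz case the coefficient $b^\epsilon$ is bounded, so the Girsanov change of measure is harmless. The representation
\[
Y^n_s=(X^\epsilon_s)^{-1}\hat{\mathbb{E}}^{b^\epsilon}_s\Big[X^\epsilon_T\Phi(X_T)+\int_s^T X^\epsilon_u\big(m^\epsilon_u+g(u,X_u,0,0)\big)\,d\langle B^\epsilon\rangle_u\Big]
\]
then shows $|Y^n|\le C_n$ quasi-surely. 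The key point is that Theorem~\ref{Z is BMO and K is intgrable} then applies to each $(Y^n,Z^n,K^n)$. It improves the bound to $\|\sup_u|Y^n_u|\|_{\mathbb{L}^\infty}+\|Z^n\|_{\mathrm{BMO}_G}\le C$, with $C$ depending only on $C_1,C_2,\overline\sigma,T$ and not on $n$.

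\textbf{Step 3: passage to the limit.} Since $\pi_n(z)=z$ for $n\ge|z|$, we have $f^n\to f$, $g^n\to g$ and $\Phi^n\to\Phi$ pointwise. All hypotheses of Theorem~\ref{in QBSDE convergence stability} then hold: Assumption~\ref{assumption} holds, Assumption~\ref{assumption2} holds uniformly, and each $Y^n$ is bounded. That theorem yields a solution $(Y,Z,K)$ of \eqref{QBSDE} with $Y_s=u(s,X_s^{t,x})$, where $u$ is a locally uniform limit of functions $u^n$ satisfying $|u^n|\le C$. Hence $|Y|\le C$, so $Y$ is bounded, and uniqueness follows from Theorem~\ref{thm:uniqueness}.

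\textbf{Main obstacle.} I expect the hardest step to be Step 2: showing that $Y^n$ is bounded before the $n$-uniform estimates can be used. This is needed because Theorem~\ref{Z is BMO and K is intgrable} presupposes boundedness. If the linearization argument causes trouble, I will instead use comparison. For Lipschitz drivers, the $G$-BSDE comparison of \cite[Theorem~3.2]{hu2014comparison} lets me compare $Y^n$ with the solutions whose drivers are $\pm(C_2+2C_1(1+|y|+n^2))$, which carry no $z$-dependence. Those are explicit deterministic linear ODE bounds, which gives $|Y^n|\le C_n$.

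A secondary check is that the stability theorem really delivers the Cauchy property along the whole sequence and not just a subsequence. A subsequence suffices here, since only existence is required.
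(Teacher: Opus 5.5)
Your proposal is correct, and it reaches existence by a shorter route than the paper.

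\textbf{The paper's route.} The paper uses a three-parameter approximation: the value caps $g^n=g\wedge n$ and $g^{n,\ell}=g^n\vee(-\ell)$, followed by the cutoff $g^{n,\ell,k}=-(n+\ell)+\iota_k(z)\bigl(g^{n,\ell}+(n+\ell)\bigr)$. Each driver is then bounded and Lipschitz, so $|Y^{n,\ell,k}|\le C_2+(n+\ell)\overline{\sigma}T$ is immediate. The cost is that the $k$-limit is not covered by Theorem~\ref{in QBSDE convergence stability}, so the paper does it by hand. It uses monotonicity in $k$ from $G$-comparison, $\mathbb{S}^2$-convergence from a maximal inequality for $\sup_s\hat{\mathbb{E}}_s[\cdot]$, and BMO bounds uniform in $k$ from $|g^{n,\ell,k}|\le n+\ell$. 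It then applies the stability theorem twice, first $\ell\to\infty$ and then $n\to\infty$.

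\textbf{Your route.} The projection $z\mapsto\pi_n(z)$ keeps Assumption~\ref{assumption2} with the original $C_1,C_2$ and makes the driver globally Lipschitz. Theorem~\ref{Z is BMO and K is intgrable} then gives $n$-uniform bounds. A single application of the stability theorem finishes, and $|u|\le C$ makes the limit $Y$ bounded. The only price is that boundedness of each $Y^n$ is no longer automatic.

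Your linearization handles this, and in fact gives an $n$-independent bound directly. The constant comes only from $|a^\epsilon|\le C_1$, $|\Phi|\le C_2$ and $|g(\cdot,0,0)|\le C_2$. The bounded kernel $b^\epsilon$ matters only for keeping finite the $\epsilon$-term that vanishes as $\epsilon\to0$.

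\textbf{On the comparison fallback.} The majorant with $d\langle B\rangle$-driver $h(y)$ is deterministic only after $\int h\,d\langle B\rangle-\int 2G(h)\,du$ is absorbed into the decreasing $G$-martingale. The comparison ODE therefore involves $2G(h)$, which does not affect the conclusion.
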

\begin{proof}
For simplicity, assume $d=1,m=1,t=0$ and $f=0$.
Uniqueness follows from Theorem~\ref{thm:uniqueness}: if a triple
$(Y,Z,K)$ is a solution to \eqref{QBSDE} and $Y$ is bounded,
then it is the unique solution in this class.

We now prove existence. For each $n,\ell\geq1 $, define the following
two functions:
    \begin{align}
        g^n=g\wedge n,\hspace{0.5cm}
        g^{n,\ell}=g^n \vee (-\ell),.
    \end{align}    
Define the nonincreasing Lipschitz function
$\iota:[0,\infty)\rightarrow[0,1]$ by
\begin{align}
    \iota(r)
    :=
    \begin{cases}
        1, & 0\leq r\leq 1,\\
        2-r, & 1<r<2,\\
        0, & r\geq 2.
    \end{cases}
\end{align}
For each $k\ge 1$, let $\iota_k(z):=\iota\left(\frac{|z|}{k}\right),
    \qquad z\in\mathbb{R}^{d}\,.$ Then $\iota_k(z)$ is bounded and Lipschitz continuous in its argument
and nondecreasing with respect to $k$.
Define
\begin{align}
    g^{n,\ell,k}(s,x,y,z)
    &:=
    -(n+\ell)
    +
    \iota_k(z)
    \left(
        {g}^{n,\ell}(s,x,y,z)
        +
        (n+\ell)
    \right).
    \label{eq:cutoff_generator}
\end{align}
It follows that $g^n \uparrow g$, $g^{n,\ell}\downarrow g^n$ and $g^{n,\ell,k}\uparrow g^{n,\ell}$, as $n,\ell,k\rightarrow \infty$. Moreover, we have $g^n\leq n$, $-\ell\leq g^{n,\ell}\leq n,\;-(n+\ell)\leq g^{n,\ell,k}\leq n$ and $g^n , g^{n,\ell}$ satisfy Assumption~\ref{assumption2} uniformly. A direct calculation also gives, for every $x,x',y,y',z,z'\in \mathbb{R}$, 
\begin{align}
    &
    |
    {g}^{n,\ell,k}(s,x,y,z)
    -
    {g}^{n,\ell,k}(s,x',y',z')
    |\leq
    C_1\bigl(|x-x'|+|y-y'|\bigr)
    +
    \left(
        C_1(1+4k)
        +
        \frac{2(n+\ell)}{k}
    \right)|z-z'|.
\end{align}
    Then by \cite[Theorem 4.1]{hu2014backward}, for fixed $n,\ell,k\in \mathbb{N}$, the following $G$-BSDE has a unique solution $(Y^{n,\ell,k},Z^{n,\ell,k},K^{n,\ell,k})$
    \begin{align}
    Y_s^{n,\ell,k}=\Phi(X_T)+\int_s^Tg^{n,\ell,k}(u,X_u,Y_u^{n,\ell,k},Z_u^{n,\ell,k})\,d\langle B\rangle_u-\int_s^TZ_u^{n,\ell,k}\,dB_u-(K^{n,\ell,k}_T-K^{n,\ell,k}_s)\,.
    \end{align}
    It follows that  
    \begin{align}\label{E.q in the proof of existence}
        |Y_s^{n,\ell,k}|\leq C_2+(n+\ell)\overline{\sigma}T\,.
    \end{align}
    For fixed $n,\ell\in \mathbb{N}$ and $k>l$, let $(\hat{Y},\hat{Z},\hat{K}):=(Y^{n,\ell,k}-Y^{n,\ell,l},Z^{n,\ell,k}-Z^{n,\ell,l},K^{n,\ell,k}-K^{n,\ell,l})$. 
   By \cite[Theorem 3.6]{hu2014comparison}, the random variable $Y_s^{n,\ell,k}$ is increasing in $k$ for $s\in [0,T]$. Thus, we define the process $Y^{n,\ell}$ by
   \begin{align}
       Y^{n,\ell}=\lim_{k\rightarrow \infty}Y^{n,\ell,k}\quad \textnormal{quasi-surely}\,. 
   \end{align}
Applying It\^{o}'s formula to $|\hat{Y}|^2$, we have
    \begin{align}
        |\hat{Y}_s|^2=&\int_s^T2\hat{Y}_u(g^{n,\ell,k}(u,X_u,Y_u^{n,\ell,k},Z_u^{n,\ell,k})-g^{n,\ell,l}(u,X_u,Y_u^{n,\ell,l},Z_u^{n,\ell,l}))\,d\langle \,B \rangle_u\\&-\int_s^T|\hat{Z}_u|^2\,d\langle B\rangle_u-\int_s^T2\hat{Y}_u\hat{Z}_u\,dB_u-\int_s^T2\hat{Y}_u\,dK^{n,\ell,k}_u+\int_s^T2\hat{Y}_u\,dK^{n,\ell,l}_u \\
        \leq&
        \int_s^T2\hat{Y}_u(g^{n,\ell,k}(u,X_u,Y_u^{n,\ell,k},Z_u^{n,\ell,k})-g^{n,\ell,l}(u,X_u,Y_u^{n,\ell,l},Z_u^{n,\ell,l}))\,d\langle \,B \rangle_u\\&-\int_s^T2\hat{Y}_u\hat{Z}_u\,dB_u-\int_s^T2\hat{Y}_u\,dK^{n,\ell,k}_u\,.
    \end{align}
    Taking the conditional sublinear expectation $\hat{\mathbb{E}}_s$
on both sides, we obtain
    \begin{align}
        |\hat{Y}_s|^2 \leq &\hat{\mathbb{E}}_s[\int_s^T2|\hat{Y}_u|\,|(g^{n,\ell,k}(u,X_u,Y_u^{n,\ell,k},Z_u^{n,\ell,k})-g^{n,\ell,l}(u,X_u,Y_u^{n,\ell,l},Z_u^{n,\ell,l}))|\,d\langle B \rangle_u]\\
        &\leq \hat{\mathbb{E}}_s[\int_0^T4(\ell+n)\overline{\sigma}|\hat{Y}_u|\,du]\,.
    \end{align}
     By \cite[Theorem 2.4]{kim2024g}, we have for some $1\leq \gamma\leq 2$
    \begin{align}
       \hat{\mathbb{E}}[\sup_{0\leq s \leq T}|\hat{Y}_s|^2] &\leq \hat{\mathbb{E}}[\sup_{0\leq s \leq T}\hat{\mathbb{E}}_s[\int_0^T4(\ell+n)\overline{\sigma}|\hat{Y}_u|\,du]]\\
       &\leq C\{(\hat{\mathbb{E}}[(\int_0^T4(\ell+n)\overline{\sigma}|\hat{Y}_u|\,du)^2])^{\frac{1}{2}}+\hat{\mathbb{E}}[(\int_0^T4(\ell+n)\overline{\sigma}|\hat{Y}_u|\,du)^2]^\frac{1}{\gamma}\}\,.
    \end{align}
    Then by \cite[Theorem 3.2]{hu2019convergences} and \eqref{E.q in the proof of existence}, we have
    \begin{align}
       \lim_{k\rightarrow \infty} \hat{\mathbb{E}}[\sup_{0\leq s\leq T}|Y_s^{n,\ell}-Y_s^{n,\ell,k}|^2]=0\,.
    \end{align}
    We now show that, for each $k\geq 1$, $Z^{n,\ell,k}$ is a
$G$-BMO martingale generator. Applying It\^{o}'s formula to
$e^{-Y_s^{n,\ell,k}}$, we obtain
    \begin{align}
         \int_{\tau}^T \frac{1}{2}e^{- Y_u^{n,\ell,k}}|Z_u^{n,\ell,k}|^2\,d\langle B \rangle _u\leq e^{- \Phi(X_T)}&+\int_{\tau}^T- e^{-Y_u^{n,\ell,k}}g^{n,\ell,k}(u,X_u,Y_u^{n,\ell,k},Z_u^{n,\ell,k})\,d\langle B \rangle _u\\&+\int_{\tau}^Te^{-Y_u^{n,\ell,k}}Z_u^{n,\ell,k}\,dB_u\,.
    \end{align}
     Since $|g^{n,\ell,k}|\leq \ell+n$, the norm
$|\!|Z^{n,\ell,k}|\!|_{BMO_G}^2$ is bounded by a constant
independent of $k$. 
     By an argument similar to that used in the proof of
Theorem~\ref{Z is BMO and K is intgrable}, $\hat{\mathbb{E}}[|K^{n,\ell,k}|^p]$ is uniformly bounded.
    An argument similar to that used in the proof of
Theorem~\ref{in QBSDE convergence stability} yields the following
convergence results:
    \begin{align}
     &\lim_{k,l\rightarrow\infty}\hat{\mathbb{E}}[\int_0^T |Z_u^{n,\ell,k}-Z_u^{n,\ell,l}|^2\,d\langle B\rangle_u]=0\,,\\
     &\lim_{k,l\rightarrow\infty}\hat{\mathbb{E}}[(\int_0^T|g^{n,\ell,k}(u,X_u,Y_u^{n,\ell,k},Z_u^{n,\ell,k})-g^{n,\ell,l}(u,X_u,Y_u^{n,\ell,l},Z_u^{n,\ell,l})|\, d\langle B\rangle_u]=0\,,\\
     &\lim_{k,l\rightarrow\infty}\hat{\mathbb{E}}[|K_T^{n,\ell,k}-K_T^{n,\ell,l}]=0\,.
    \end{align}
These convergence results allow us to define the processes
$Z^{n,\ell}$ and $K^{n,\ell}$ by 
\begin{align}
    &Z^{n,\ell}=\lim_{k\rightarrow \infty} Z^{n,\ell,k}\,,\\
    &K^{n,\ell}=\lim_{k\rightarrow \infty}K^{n,\ell,k}\,.
\end{align}
It follows that the triple $(Y^{n,\ell},Z^{n,\ell},K^{n,\ell})$
is a solution to the following $G$-BSDE:
\begin{align}
    Y_s^{n,\ell}&=\Phi(X_T)+\int_s^Tg^{n,\ell}(u,X_u,Y_u^{n,\ell},Z_u^{n,\ell})\,d\langle B\rangle_u-\int_s^TZ_u^{n,\ell}\,dB_u-(K_T^{n,\ell}-K_s^{n,\ell}) \,.
\end{align}
By \eqref{E.q in the proof of existence}, the constructed process
$Y^{n,\ell}$ is bounded. Since $f^{n,\ell}$ and $g^{n,\ell}$
satisfy Assumption~\ref{assumption2}, for fixed $n$ we apply
Theorem~\ref{in QBSDE convergence stability} to obtain a triple
$(Y^n,Z^n,K^n)$ solving the following $G$-BSDE:
\begin{align}
     Y_s^{n}&=\Phi(X_T)+\int_s^Tg^{n}(u,X_u,Y_u^{n},Z_u^{n})\,d\langle B\rangle_u-\int_s^TZ_u^{n}\,dB_u-(K_T^{n}-K_s^{n}) \,.
\end{align}
 Finally, Theorem~\ref{in QBSDE convergence stability} yields a
solution to \eqref{QBSDE}.  
\end{proof}
\begin{thm}
    Suppose Assumptions~\ref{assumption} and \ref{assumption2} hold. Let $(Y^{t,x},Z^{t,x},K^{t,x})$ be the solution to \eqref{QBSDE}. Define $u(t,x)=Y_t^{t,x}$. Then $u(t,x)$ is a viscosity solution to the following PDE:
    \begin{align}
        \begin{cases}
 \partial_t u+F(D^2_xu, D_xu,u,x,t)=0 \\
u(T,x)=\Phi(x)
\end{cases}
\end{align}
where 
\begin{align}
F(D^2_xu, D_xu,u,x,t)=&G(H(D^2_xu,D_xu,u,x,t))+\langle b(t,x),D_xu\rangle\\
&+f(t,x,u,\langle\sigma_1(t,x),D_xu\rangle,\cdots,\langle\sigma_d(t,x),D_xu\rangle)\\
H_{ij}(D^2_xu,D_xu,t,x,t)=&\langle D^2_xu\sigma_i(t,x),\sigma_j(t,x)\rangle+2\langle D_xu,h_{ij}(t,x)\rangle\\
&+2\langle g_{ij}(t,x,u,\langle\sigma_1(t,x),D_xu\rangle,\cdots,\langle\sigma_d(t,x),D_xu\rangle) \rangle
    \end{align}
\end{thm}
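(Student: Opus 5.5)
The plan is to follow the standard nonlinear Feynman--Kac argument of \cite{hu2014comparison} (their Theorem~4.5), adapted to the quadratic setting through the BMO machinery of Section~\ref{section 3}. First I collect the needed properties of $u$. By \eqref{deterministc U is lipschitz} and Theorem~\ref{u is 1/2 holder contin wrt t}, $u$ is continuous on $[0,T]\times\mathbb{R}^m$; by Theorem~\ref{Z is BMO and K is intgrable} it is bounded; and $u(T,x)=\Phi(x)$ holds trivially. Next, the flow property $Y_s^{t,x}=u(s,X_s^{t,x})$ follows from Theorem~\ref{thm:uniqueness} together with Theorem~\ref{QSDE sol is represented by u(t,X_t)}, exactly as in the proof of Theorem~\ref{u is 1/2 holder contin wrt t}. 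This gives the dynamic programming identity
\begin{align}
u(t,x)=Y_t^{t,x}=\mathbb{Y}^{t,x}_{t}\big[t+\delta,\,u(t+\delta,X^{t,x}_{t+\delta})\big],
\end{align}
where $\mathbb{Y}^{t,x}_t[t+\delta,\eta]$ denotes the value at time $t$ of the bounded solution of \eqref{QBSDE} on $[t,t+\delta]$ with terminal value $\eta$. Such a solution exists and is unique by Theorem~\ref{thm:exist}; its terminal condition is of the form $\Phi'(X_{t+\delta})$ with $\Phi'$ bounded and Lipschitz.

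For the subsolution property, I fix $(t,x)$ and a test function $\varphi\in C^{2,3}_b$ with $\varphi\geq u$ and $\varphi(t,x)=u(t,x)$. I argue by contradiction, supposing $\partial_t\varphi+F(D^2\varphi,D\varphi,\varphi,x,t)\leq-\kappa<0$ at $(t,x)$, and hence, by continuity, on a small neighbourhood. Applying $G$-Itô's formula to $\varphi(s,X_s^{t,x})$ on $[t,t+\delta]$ expresses $\varphi$ along the path as the solution of a $G$-BSDE with $Z=\langle\sigma,D\varphi\rangle$ and an explicit drift. I then compare this with $(Y^1,Z^1,K^1)$, the solution with terminal value $\varphi(t+\delta,X_{t+\delta})\geq u(t+\delta,X_{t+\delta})$. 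The comparison principle, Theorem~\ref{in QBSDE comparison theorem}, which applies because all the $Y$'s are bounded, gives $Y^1_t\geq u(t,x)=\varphi(t,x)$. Writing $\hat Y_s=Y^1_s-\varphi(s,X_s)$, the difference solves a $G$-BSDE whose $d\langle B\rangle$ drift contains $G(H)$ minus the contracted term $\tfrac12 H:d\langle B\rangle$. By the definition of $G$, this excess generates a decreasing $G$-martingale after Hu--Ji--Peng--Song's representation: the term $\int (G(H_u)\,du-\tfrac12 H_u\,d\langle B\rangle_u)$ is a decreasing $G$-martingale.

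The main obstacle is showing $\hat Y_t\leq -c\delta+o(\delta)$, which contradicts $Y^1_t\geq\varphi(t,x)$. The generator difference is quadratic in $Z$, so I would linearize it with the truncation $l^\epsilon$ construction used in Theorem~\ref{in QBSDE comparison theorem}. This produces $a^\epsilon$, $b^\epsilon$, $m^\epsilon$ and moves to $\hat{\mathbb{E}}^{b^\epsilon}$ via the Girsanov transform. Two points need care here. First, the $G$-BMO norm of $b^\epsilon$ must be controlled uniformly, which follows from Theorem~\ref{Z is BMO and K is intgrable} and the boundedness of $D\varphi$. Second, the decreasing $G$-martingales must remain decreasing $G$-martingales under $\hat{\mathbb{E}}^{b^\epsilon}$, which is Lemma~\ref{K is MG in BMO}, using $K\in\mathbb{L}^p_G$ for all $p$. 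After this reduction, the error terms are controlled by $\hat{\mathbb{E}}[\sup_{s\in[t,t+\delta]}|X_s-x|]\leq C\delta^{1/2}$ from Theorem~\ref{SDE sublinear property}, and by $\hat{\mathbb{E}}[\int_t^{t+\delta}|\hat Z_u|^2d\langle B\rangle_u]=o(\delta)$. That second estimate I would obtain from Itô's formula applied to $|\hat Y|^2$, in the spirit of the proof of Theorem~\ref{in QBSDE convergence stability}, together with $|\hat Y_s|\leq C\delta^{1/2}$. The leading term is $-\kappa\delta$, with $\kappa$ scaled by $\underline{\sigma}^2$ where needed. This yields the contradiction after letting $\epsilon\to0$ and then taking $\delta$ small.

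The supersolution property follows symmetrically, with the inequalities reversed. In that case the decreasing martingale $K^1$ enters with the favorable sign, since $-K$ is increasing; the comparison step is therefore handled in the same way as in Theorem~\ref{in QBSDE comparison theorem}, where the roles of $K^1$ and $K^2$ were treated asymmetrically. Combining the two properties, $u$ is a viscosity solution.
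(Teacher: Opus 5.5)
Your argument takes a genuinely different route from the paper's. The paper's very short proof never works with test functions at the quadratic level. For the Lipschitz truncations $g^{n,\ell,k}$ built in the proof of Theorem~\ref{thm:exist}, \cite[Theorem~4.5]{hu2014comparison} already shows that $u^{n,\ell,k}(t,x):=Y^{n,\ell,k,t,x}_t$ is a viscosity solution of the truncated equation. These functions are monotone in each index by comparison and converge pointwise to $u$, which is continuous by \eqref{deterministc U is lipschitz} and Theorem~\ref{u is 1/2 holder contin wrt t}. Dini's theorem therefore makes the convergence locally uniform, and the stability lemma \cite[Lemma~6.2]{fleming2006controlled} passes the viscosity inequalities to the limit, since the truncated nonlinearities converge to $F$ locally uniformly. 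That route needs no change of measure in the viscosity step, but it is tied to this particular monotone approximation.

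Your route combines dynamic programming, Theorem~\ref{in QBSDE comparison theorem}, linearization and the measure change $\hat{\mathbb E}^{b^\epsilon}$. It verifies the inequalities directly, gives the expansion $\hat Y_t=\delta(\partial_t\varphi+F)(t,x)+o(\delta)$, and stays inside the quadratic theory of Section~\ref{section 3}. Once the first point below is fixed, the subsolution half works as you describe:
the increment of $\frac12\int H_u\,d\langle B\rangle_u-\int G(H_u)\,du$ is pathwise nonpositive, $a^\epsilon\hat Y$ is absorbed by $X^\epsilon$, and $b^\epsilon\hat Z$ is absorbed by the change of measure.

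The argument can be completed, but two steps fail as written.

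\textbf{The BMO bound for $b^\epsilon$.} You derive the uniform $G$-BMO bound for $b^\epsilon$ from Theorem~\ref{Z is BMO and K is intgrable} and the boundedness of $D\varphi$. However, $|b^\epsilon|\le C_1(1+|Z^1|+|\tilde Z|)$ with $\tilde Z_s=\langle D\varphi(s,X_s),\sigma(s,X_s)\rangle$, and Assumption~\ref{assumption} only makes $\sigma$ Lipschitz. So $\tilde Z$ grows linearly in $X_s$ and is in general not a $G$-BMO martingale generator (e.g.\ $\sigma(s,x)=x$ with $D\varphi$ not decaying at infinity). Without this, neither the measure change nor Lemma~\ref{reverse holder inequality} is available. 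The standard repair is to localize the test function. Take $M\ge\sup|u|$ and a smooth compactly supported cutoff $0\le\chi\le1$ equal to $1$ near $(t,x)$. Then $\chi\varphi+(1-\chi)M$ still dominates $u$, touches it at $(t,x)$ with the same derivatives there, and has compactly supported gradient. Consequently $\tilde Z$, $H$ and all drift terms become bounded.

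\textbf{The supersolution half.} This half is not symmetric, and $K^1$ is not where the asymmetry lies. Since $\hat Y_t$ is deterministic and $\int X^\epsilon\,dK^1$ is a $G$-martingale under $\hat{\mathbb E}^{b^\epsilon}$ (Lemma~\ref{K is MG in BMO} and \cite[Lemma~3.4]{hu2014backward}), $K^1$ drops out of the representation of $X^\epsilon_t\hat Y_t$ in both directions.

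The asymmetric term is the one produced by $G$. It is $\frac12\int H\,d\langle B\rangle-\int G(H)\,du$, not its negative, that is a decreasing $G$-martingale. Discarding it is exactly what yields the upper bound, but for the lower bound $\hat Y_t\ge\kappa\delta/2-o(\delta)$ it has the wrong sign. In that case you must:
\begin{itemize}
\item keep $\frac12\int_t^{t+\delta}(X^\epsilon_u/X^\epsilon_t)H_u\,d\langle B\rangle_u$;
\item replace it by $\frac12\operatorname{tr}\bigl(H(t,x)(\langle B\rangle_{t+\delta}-\langle B\rangle_t)\bigr)$, at a cost that is $o(\delta)$ under $\hat{\mathbb E}^{b^\epsilon}$ by continuity of $H$, $X^\epsilon_u/X^\epsilon_t=1+O(\delta)$, and Theorem~\ref{SDE sublinear property};
\item use that $B^\epsilon$ is a $G$-Brownian motion under $\hat{\mathbb E}^{b^\epsilon}$ with $\langle B^\epsilon\rangle=\langle B\rangle$, so the frozen term has conditional $\hat{\mathbb E}^{b^\epsilon}$-expectation exactly $\delta\,G(H(t,x))$.
\end{itemize}
This frozen-coefficient step is what makes the two halves symmetric in \cite[Theorem~4.5]{hu2014comparison}.

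\textbf{A minor point.} The estimate $\hat{\mathbb E}[\int_t^{t+\delta}|\hat Z_u|^2\,d\langle B\rangle_u]=o(\delta)$ is not needed. Moreover, the It\^o argument you sketch for it would additionally require $\hat{\mathbb E}[\int_t^{t+\delta}|Z^1_u|^2\,d\langle B\rangle_u]=o(\delta^{1/2})$, which the BMO bound does not provide.
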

\begin{proof}
    The result follows from
\cite[Lemma~6.2]{fleming2006controlled}, Dini's theorem, and
\cite[Theorem~4.5]{hu2014comparison}.
\end{proof}

\section[Infinite-horizon quadratic G-BSDEs]{Infinite-horizon quadratic $G$-BSDEs}\label{section 4}

Fix $0\le t< \infty,$ $\xi\in \mathbb{L}_G^1(\Omega_t;\mathbb{R}^m),$ and coefficient functions
$b,h_{ij},\sigma_j:[0,\infty)\mapsto \mathbb{R}^m$ and
$f,g_{ij}:[0,\infty)\times \mathbb{R}^m\times \mathbb{R}\times \mathbb{R}^d\mapsto 
\mathbb{R}.$
Consider the following forward SDE and $G$-BSDE: 
\begin{align} \label{infinite SDE1}
     X_s^{t,\xi}&=\xi+\int_t^s b(u,X_u^{t,\xi})\,du+\int_t^s h_{ij}(u,X_u^{t,\xi})\,\langle B^i,B^j\rangle_u+\int_t^s\sigma_j(u,X_u^{t,\xi})\,dB^j_u\,,\;s\ge t
 \\
\label{infinite QBSDE1} Y_s^{t,\xi}&=Y_T^{t,\xi}+\int_s^Tf(u,X_u^{t,\xi},Y_u^{t,\xi},Z_u^{t,\xi})\,du+\int_t^Tg_{ij}(u,X_u^{t,\xi},Y_u^{t,\xi},Z_u^{t,\xi})\,d\langle B^i,B^j\rangle_u\\&-\int_t^TZ_u^{t,\xi}\,dB_u-(K_T^{t,\xi}-K_t^{t,\xi}) \,,\;t\le s\le T<\infty\,.
\end{align} 
When no confusion can arise, we write $X^{t,\xi},Y^{t,\xi},Z^{t,\xi},K^{t,\xi}$ as $X,Y,Z,K,$ respectively, omitting the superscripts $t,\xi.$

\begin{assume}\label{infinite assumption 1} Assume that $b,h_{ij},\sigma_j$ satisfy the following conditions.
\begin{enumerate}
    \item For $1\leq i,j \leq d$, $h_{ij}=h_{ji}$.
    \item The functions $b,h_{ij},\sigma_j$ are continuous in $s$.
    \item There exists a constant $C_1$ such that 
    \begin{align}
        &|b(s,x)-b(s,x')|+\sum_{i,j=1}^d|h_{ij}(s,x)-h_{ij}(s,x')|+\sum_{j=1}^d|\sigma_{j}(s,x)-\sigma_{j}(s,x')|\leq C_1|x-x'|\,,
    \end{align}
    for $s\in[t,\infty),x,x'\in \mathbb{R}^m$.
    \end{enumerate}
\end{assume}

\begin{assume}\label{infinite assumption 2} Assume that $f,g_{ij}$ satisfy the following conditions.
\begin{enumerate}
    \item For $1\leq i,j \leq d$, $g_{ij}=g_{ji}$.
    \item The functions $ f,g$ are continuous in $s$.
    \item There exists a constant $C_1$ such that 
    \begin{align}
    &|f(s,x,y,z)-f(s,x',y',z')|+\sum_{i,j=1}^d|g_{ij}(s,x,y,z)-g_{ij}(s,x',y',z')|\\ &\hspace{1cm}\leq C_1(|x-x'|+|y-y'|+(1+|z|+|z'|)|z-z'|)
    \end{align}
    for $s\in[t,\infty),x,x'\in \mathbb{R}^m$, $y,y'\in \mathbb{R}$, $z,z'\in \mathbb{R}^d$,
    \item There exists a constant $C_2$ such that
    \begin{align}
    |f(s,x,0,0)|+\sum_{i,j=1}^d|g_{ij}(s,x,0,0)|\leq C_2 
    \end{align}
    for $s\in[t,\infty)$, $x\in \mathbb{R}^m$.
    \item There exists a constant $\mu>0$ such that
    \begin{align}
        (f(s,x,y,z)-f(s,x,y',z))(y-y')+2G((g(s,x,y,z)-g(s,x,y',z))(y-y'))\leq -\mu|y-y'|^2
    \end{align} 
    for $s\in[t,\infty),x\in \mathbb{R}^m$, $y,y'\in \mathbb{R}$, $z\in \mathbb{R}^d$.
    \end{enumerate}
\end{assume}

\begin{defi}
A triple $(Y,Z,K)$ is called a solution to \eqref{infinite QBSDE1} on
$[t,\infty)$ if
$(Y,Z)\in\mathbb{S}^2(t,\infty)\times
\mathbb{H}^2(t,\infty;\mathbb{R}^d)$,
$K$ is a decreasing $G$-martingale satisfying $K_t=0$ and
$K_T\in\mathbb{L}_G^2(\Omega_T)$ for every $T>t$, and
\eqref{infinite QBSDE1} holds quasi-surely for all
$t\leq s\leq T<\infty$.
\end{defi}
\begin{thm}\label{infinite horizon Quadartic BSDE exist and unique}
    Suppose Assumptions~\ref{infinite assumption 1} and \ref{infinite assumption 2} hold. Then \eqref{infinite QBSDE1} admits a unique solution $(Y^{t,x},Z^{t,x},K^{t,x})$ in the class for which $Y^{t,x}$ is bounded.
\end{thm}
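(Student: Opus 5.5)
We plan to follow the finite-horizon approximation scheme of \citet{hu2018ergodic}, adapted to quadratic growth through the $G$-BMO tools of Section~\ref{section 3}.

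\textbf{Step 1: truncated problems.} For each $n>t$, let $(Y^n,Z^n,K^n)$ be the unique bounded solution on $[t,n]$ of \eqref{infinite QBSDE1} with terminal condition $Y^n_n=0$. Existence and uniqueness come from Theorem~\ref{thm:exist} with $\Phi\equiv0$, since Assumption~\ref{infinite assumption 2} restricted to $[t,n]$ implies Assumption~\ref{assumption2}.

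\textbf{Step 2: horizon-independent bound.} The bounds of Theorem~\ref{Z is BMO and K is intgrable} depend on $T$, so they cannot be used directly; the strict monotonicity condition is what removes this dependence. We repeat the linearization in its proof, writing the $y$-difference quotient as a coefficient $a^\epsilon$. The monotonicity condition, combined with the strong ellipticity of $G$, gives $a^\epsilon\,du + a^\epsilon_{ij}\,d\langle B^i,B^j\rangle \le -\mu\,du$ in the appropriate sublinear sense, so that the discount factor satisfies $X^\epsilon_u/X^\epsilon_s\le e^{-\mu(u-s)}$. After the Girsanov change to $\hat{\mathbb E}^{b^\epsilon}$ (Lemma~\ref{K is MG in BMO}), this yields
\[
|Y^n_s|\le \hat{\mathbb E}^{b^\epsilon}_s\Big[\int_s^n e^{-\mu(u-s)}\big(C_2+|m^\epsilon_u|\big)(du+d\langle B\rangle_u)\Big]\le \frac{C(1+\overline\sigma^2)}{\mu}+O(\epsilon).
\]
To control the $O(\epsilon)$ term uniformly in $n$, we use a BMO bound for $Z^n$ on unit windows $[\tau,\tau+1]$. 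This bound follows from the exponential Itô argument of Theorem~\ref{Z is BMO and K is intgrable}, applied on each window together with the uniform bound on $Y^n$.

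\textbf{Step 3: Cauchy property.} For $m>n$, we linearize $Y^m-Y^n$ on $[t,n]$ as in the comparison proof (Theorem~\ref{in QBSDE comparison theorem}), keeping the $y$-coefficient, which again produces the factor $e^{-\mu(n-s)}$. Since $|Y^m_n-Y^n_n|=|Y^m_n|\le C/\mu$, we obtain
\[
|Y^m_s-Y^n_s|\le \frac{C}{\mu}e^{-\mu(n-s)},\qquad t\le s\le n.
\]
The $K$-terms are handled exactly as in that proof. Hence $Y^n$ is Cauchy in $\mathbb S^2(t,T)$ for every $T$. Corollary~\ref{cor needed in Erogidc} then applies on each $[t,T]$, using the uniform BMO bound, and produces $(Z,K)$ and a solution on $[t,T]$. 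Consistency across different $T$ follows from uniqueness (Theorem~\ref{thm:uniqueness}), so these patch together into a solution on $[t,\infty)$.

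\textbf{Step 4: uniqueness.} Given two bounded solutions $Y,Y'$, we apply the same linearization on $[t,T]$ to get $|Y_s-Y'_s|\le 2\sup(|Y|+|Y'|)\,e^{-\mu(T-s)}$. Letting $T\to\infty$ gives $Y=Y'$, and the equality of $Z$ and $K$ then follows as in Theorem~\ref{thm:uniqueness}.

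\textbf{Main obstacle.} The hard part is making the dissipativity estimate rigorous inside the sublinear linearization. The $y$-coefficient acts through both $du$ and $d\langle B\rangle$, and the condition is stated through $G$, so one must show that the product $X^\epsilon$ decays like $e^{-\mu(u-s)}$ quasi-surely under every $P$. We will verify this pathwise under each $P\in\mathcal P$, using $d\langle B\rangle_u=\gamma_u\,du$ with $\gamma_u\in\Gamma$ together with \eqref{eqn: gamma}. The same step also requires the $\epsilon$-truncation error to remain uniform in the horizon; for this we rely on the reverse Hölder constant from Lemma~\ref{reverse holder inequality} being controlled on windows of fixed length.
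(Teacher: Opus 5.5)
Your proposal is correct and follows essentially the same route as the paper: truncated problems on $[t,n]$ with zero terminal value, a horizon-free bound $|Y^n|\le C/\mu$ from the dissipative linearization combined with the Girsanov change, an exponential Cauchy estimate, passage to the limit via Corollary~\ref{cor needed in Erogidc}, and uniqueness by letting $T\to\infty$ in the same linearized estimate. The differences are cosmetic. The paper gets the Cauchy bound by extending $Y^n$ by zero beyond $n$ and comparing on $[0,\ell]$ with the extra forcing term $\mathds{1}_{u>n}\,g(u,X_u,0,0)$, rather than using $|Y^m_n|\le C/\mu$ as the terminal mismatch. Also, your unit-window control of the $\epsilon$-error uniformly in $n$ is unnecessary, since $\epsilon\to0$ can be taken separately for each fixed horizon.
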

\begin{proof}
     For simplicity, assume $d=1,m=1,t=0$ and $f=0$.
We first prove uniqueness. Suppose $(Y^1,Z^1,K^1)$ and $(Y^2,Z^2,K^2)$
are solutions to \eqref{infinite QBSDE1}. Since the processes $Y^1,Y^2$
are bounded by a common constant $C$, Theorem~\ref{Z is BMO and K is intgrable}
implies that, for every $0\leq T<\infty$,
     \begin{align}
        \sup_{P\in \mathcal{P}}\sup_{\tau\in 
		\mathcal{T}_{[{s},T]}}\Big|\!\Big| \mathbb{E}^P_{\tau}\Big[\int_{\tau}^T|Z_u^1|^2\,d\langle B \rangle_u\Big]\Big|\!\Big|_{L^\infty(P)} <\infty \,, \quad \sup_{P\in \mathcal{P}}\sup_{\tau\in 
		\mathcal{T}_{[{s},T]}}\Big|\!\Big| \mathbb{E}^P_{\tau}\Big[\int_{\tau}^T|Z_u^2|^2\,d\langle B \rangle_u\Big]\Big|\!\Big|_{L^\infty(P)} <\infty\,.
    \end{align}
 Let $(\hat{Y},\hat{Z},\hat{K}):=(Y^1-Y^2,Z^1-Z^2,K^1-K^2)$ and, for $\epsilon>0$,
define $l^\epsilon(x):=\mathds{1}_{|x|\geq \epsilon}+\frac{|x|}{\epsilon}\mathds{1}_{|x|< \epsilon}$.
Define the following three processes:
\begin{align}
    a_s^{\epsilon}&=l^\epsilon(\hat{Y_s})\frac{g(s,X_s,Y^1_s,Z^1_s)-g(s,X_s,Y_s^2,Z^1_s)}{\hat{Y}_s} -\frac{\mu}{1+\underline{\sigma}^2}(1-l^\epsilon(\hat{Y_s}))\,,\\
    \label{in the proof of Infinite QBSDE thm bmo}b_s^{\epsilon}&=l^\epsilon(\hat{Z}_s)\frac{g(s,X_s,Y^2_s,Z^1_s)-g(s,X_s,Y_s^2,Z_s^2)}{|\hat{Z}_s|^2}\hat{Z}_s+C_1(1-l^\epsilon(\hat{Z}))\,,\\
    m_s^{\epsilon}&=g(s,X_s,Y^1_s,Z^1_s)-g(s,X_s,Y_s^2,Z_s^2)-a^{\epsilon}_s\hat{Y}_s-b_s^{\epsilon}\hat{Z_s}\,.
\end{align}
A direct calculation gives $2G(a_s^{\epsilon})\leq -\mu$,
$|b_s^{\epsilon}|\leq C_1(1+|Z_s^1|+|Z_s^2|)$, and
$|m_s^{\epsilon}|\leq 4\epsilon C_1(1+\epsilon+|Z_s^1|)$.
By Lemma~\ref{lem:infinite-linearization-bmo} below, $b^{\epsilon}$ is a
$G$-BMO martingale generator on every finite time interval.
Let $X^{\epsilon}$ solve the SDE
$X_\cdot^{\epsilon}=1+\int_0^\cdot a_u^{\epsilon}X_u^{\epsilon}d\langle B \rangle _u$.
An argument similar to that used in the proof of
Theorem~\ref{in QBSDE comparison theorem} yields
\begin{align}
    X_s^{\epsilon}(\hat{Y_s}+K^2_s)&=\hat{\mathbb{E}}_s^{b^{\epsilon}}[X_T^{\epsilon}(\hat{Y}_T+K^2_T)+\int_s^TX_u^{\epsilon}( m_u^{\epsilon}-a_u^{\epsilon}K_u^2)\,d\langle B^{\epsilon}\rangle _u]\\
    &\leq \hat{\mathbb{E}}^{b^{\epsilon}}_s[X_T^{\epsilon}\hat{Y}_T+\int_s^TX_u^{\epsilon}m_u^{\epsilon}\,d\langle B^{\epsilon}\rangle_u]+ \hat{\mathbb{E}}_s^{b^{\epsilon}}[X_T^{\epsilon}K^2_T-\int_s^TX_u^{\epsilon}a_u^{\epsilon}K_u^2\,d\langle B^{\epsilon}\rangle _u]\,.
\end{align}
Using $X_s^{\epsilon}K^2_s=\hat{\mathbb{E}}_s^{b^{\epsilon}}[X_T^{\epsilon}K^2_T-\int_s^TX_u^{\epsilon}a_u^{\epsilon}K_u^2\,d\langle B^{\epsilon}\rangle _u]$, we obtain
\begin{align}
    \hat{Y}_s\leq& 2C\hat{\mathbb{E}}^{b^{\epsilon}}_s[e^{\int_s^Ta_u^{\epsilon}\,d\langle B^\epsilon \rangle_u-\int_s^T2G(a_u^{\epsilon})\,du +\int_s^T2G(a_u^{\epsilon})\,du}]\\
    &+\hat{\mathbb{E}}_s^{b^{\epsilon}}[\int_s^T e^{\int_s^ua_v^{\epsilon}\,d\langle B^\epsilon \rangle_v-\int_s^u2G(a_v^{\epsilon})\,dv+\int_s^u2G(a_v^{\epsilon})\,dv}m_u^{\epsilon}\,d\langle B^\epsilon \rangle_u]\\
    \leq& 2C\hat{\mathbb{E}}^{b^{\epsilon}}_s[e^{\int_s^T2G(a_u^{\epsilon})\,du}]+\hat{\mathbb{E}}_s^{b^{\epsilon}}[\int_s^T e^{\int_s^u2G(a_v^{\epsilon})\,dv}m_u^{\epsilon}\,d\langle B^\epsilon \rangle_u]\,.
\end{align}
Letting $\epsilon \rightarrow 0$, we obtain 
\begin{align}
    \hat{Y}_s\leq 2Ce^{-\mu(T-s)}\,.
\end{align}
Letting $T\rightarrow \infty$ gives $Y^1_s-Y^2_s\leq 0$.
Interchanging the two solutions gives $Y^1_s-Y^2_s\geq 0$.
Theorem~\ref{thm:uniqueness} then yields uniqueness of the full solution triple.

We now prove existence. For fixed $n\geq0$, let the triple
$(Y^n,Z^n,K^n)$ be the solution to the following finite-horizon $G$-BSDE:
\begin{align}\label{eqn:fineBS}
    Y_s^n=\int_s^ng(u,X_u,Y_u^n,Z_u^n)\,d\langle B \rangle _u-\int_s^nZ^n_u\,dB_u-(K^n_n-K^n_s)\,.
\end{align}
Here, $0\leq {s}\leq n$.
As in the uniqueness argument, there exist three processes
$a_s^{n,\epsilon}, b_s^{n,\epsilon}$ and $m_s^{n,\epsilon}$ such that 
\begin{align}
    Y_s^n=\int_s^na_u^{n,\epsilon}Y^n_u+b_u^{n,\epsilon} Z_u^n+m_u^{n,\epsilon}+g(u,X_u,0,0)\,d\langle B\rangle _u-\int_s^nZ^n_u\,dB_u-(K^n_T-K^n_s)\,.
\end{align}
Let $\hat{\mathbb{E}}^{b^{n,\epsilon}}$ be the sublinear expectation induced by $b^{n,\epsilon}$.
By \eqref{new_GBM}, the process $B^{n,\epsilon}:=B-\int_0^{\cdot}  b_u^{n,\epsilon}\, d\langle B\rangle _u$
is a $G$-Brownian motion under $\hat{\mathbb{E}}^{b^{n,\epsilon}}$.
It follows that
\begin{align}
    |Y_s^n|&\leq 
    \hat{\mathbb{E}}_s^{ b^{n,\epsilon}}[\int_s^n e^{\int_s^ua_v^{n,\epsilon}\,\langle B^{n,\epsilon} \rangle_v-\int_s^u2G(a_v^{n,\epsilon})\,dv +\int_s^u2G(a_v^{n,\epsilon})\,dv}(m_u^{n,\epsilon}+g(u,X_u,0,0))\,d\langle B^{n,\epsilon} \rangle_u]\\
    &\leq \frac{C_2}{\mu}\,.
\end{align}
We extend the triple $(Y^n,Z^n,K^n)$ to $[0,\infty)$ by setting 
\begin{align}
    Y_s^n=Z_s^n=0, \hspace{0.1cm} K_s^n=K_n^n,\hspace{0.3cm} \forall s>n\,.
\end{align}
For $n\leq \ell$, set $(\hat{Y},\hat{Z},\hat{K})=(Y^n-Y^\ell,Z^n-Z^\ell,K^n-K^\ell)$.
As in the uniqueness argument, there exist three processes
$a_s^{n,\ell,\epsilon}, b_s^{n,\ell,\epsilon}$ and $m_s^{n,\ell,\epsilon}$ such that
\begin{align}
\hat{Y_s}=&\int_s^\ell g(u,X_u,Y_u^n,Z_u^n)-g(u,X_u,Y_u^\ell,Z_u^\ell)-\mathds{1}_{u>n}g(u,X_u,0,0)\,d\langle B\rangle _u \\
&-\int_s^\ell \hat{Z}_u\,dB_u-(\hat{K}_\ell-\hat{K}_s)\\
=&\int_s^\ell a_u^{n,\ell,\epsilon}\hat{Y}_u+b_u^{n,\ell,\epsilon}\hat{Z_u}+m_u^{n,\ell,\epsilon}-\mathds{1}_{u>n}g(u,X_u,0,0)\,d\langle B\rangle_u-\int_s^\ell\hat{Z}_u\,dB_u-(\hat{K}_\ell-\hat{K}_s)\,.
\end{align}
Let $\hat{\mathbb{E}}^{b^{n,\ell,\epsilon}}$ be the sublinear expectation induced by $b^{n,\ell,\epsilon}$.
By \eqref{new_GBM}, the process $B^{n,\ell,\epsilon}:=B-\int_0^{\cdot}  b_u^{n,\ell,\epsilon}\, d\langle B\rangle _u$
is a $G$-Brownian motion under $\hat{\mathbb{E}}^{b^{n,\ell,\epsilon}}$.
It follows that
\begin{align}
    \hat{Y}_s\leq& \hat{\mathbb{E}}_s^{b^{n,\ell,\epsilon}}[\int_s^\ell e^{\int_s^ua_v^{n,\ell,\epsilon}\,d\langle B^{n,\ell,\epsilon} \rangle_v}(m_u^{n,\ell,\epsilon}+\mathds{1}_{u>n}|g(u,X_u,0,0)|)\,d\langle B^{n,\ell,\epsilon} \rangle_u]\\
    \leq&\frac{C_2}{\mu}e^{\mu s}(e^{-\mu n}-e^{-\mu \ell})\,.
\end{align}
Interchanging the two solutions in the same argument yields
\begin{align}
    -\hat{Y_s}\leq&\frac{C_2}{\mu}e^{\mu s}(e^{-\mu n}-e^{-\mu \ell})\,.
\end{align} Thus
\begin{equation}
    \label{eqn:hat_Y}
|\hat{Y}_s|\leq\frac{C_2}{\mu}e^{\mu s}(e^{-\mu n}-e^{-\mu \ell})\,.
\end{equation}
Therefore, for $0\leq T \leq n\leq \ell$, we obtain 
\begin{align}
    \lim_{n,\ell\rightarrow \infty}\hat{\mathbb{E}}[\sup_{0\leq s\leq T}|Y_s^n-Y_s^\ell|^2 ]=0\,.
\end{align}
For fixed $T$, the restriction of $(Y^n,Z^n,K^n)$ to this interval solves the following $G$-BSDE:
\begin{align}
    Y_s^n=Y_T^n+\int_s^Tg(u,Y^n_u,Z^n_u)d\langle B\rangle_u-\int_s^TZ^n_udB_u-(K^n_T-K^n_s)
\end{align}
Since there exists a function $u^n(s,x)$ such that $Y_T^n=u^n(T,X_T)$,
Corollary~\ref{cor needed in Erogidc} yields the existence of a solution to
\eqref{infinite QBSDE1}.   
\end{proof}
\begin{lemma}\label{lem:infinite-linearization-bmo}
    The process $b^{\epsilon}$ defined by \eqref{in the proof of Infinite QBSDE thm bmo} is a $G$-BMO martingale generator.
\end{lemma}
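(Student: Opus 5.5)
We need to show that $b^{\epsilon}$ is a $G$-BMO martingale generator on every finite interval $[0,T]$, with norm controlled independently of $\epsilon$. The plan is a pointwise domination argument.

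\emph{Step 1: pointwise bound.} On $\{|\hat Z_s|\geq\epsilon\}$ we have $l^\epsilon(\hat Z_s)=1$. The quadratic Lipschitz condition in Assumption~\ref{infinite assumption 2} gives
\[
|g(s,X_s,Y^2_s,Z^1_s)-g(s,X_s,Y^2_s,Z^2_s)|\leq C_1(1+|Z^1_s|+|Z^2_s|)|\hat Z_s|,
\]
so the first term of $b^\epsilon_s$ has modulus at most $C_1(1+|Z^1_s|+|Z^2_s|)$. On $\{|\hat Z_s|<\epsilon\}$ the first term is multiplied by $|\hat Z_s|/\epsilon\leq1$ and is bounded the same way, and the second term is bounded by $C_1$. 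Hence
\[
|b^\epsilon_s|\leq 2C_1(1+|Z^1_s|+|Z^2_s|)
\]
uniformly in $\epsilon$.

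\emph{Step 2: BMO bound for $b^\epsilon$.} For fixed $T$, any $P\in\mathcal P$ and any $\tau\in\mathcal T_{[0,T]}$, I would write
\[
\mathbb E^P_\tau\Big[\int_\tau^T|b^\epsilon_u|^2\,d\langle B\rangle_u\Big]\leq 12C_1^2\Big(\overline{\sigma}^2T+\mathbb E^P_\tau\Big[\int_\tau^T|Z^1_u|^2\,d\langle B\rangle_u\Big]+\mathbb E^P_\tau\Big[\int_\tau^T|Z^2_u|^2\,d\langle B\rangle_u\Big]\Big),
\]
using $d\langle B\rangle_u\leq\overline{\sigma}^2du$ quasi-surely. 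The last two terms are finite in $L^\infty(P)$, uniformly in $P$ and $\tau$. This is exactly the display at the start of the uniqueness proof, obtained from Theorem~\ref{Z is BMO and K is intgrable} applied on $[0,T]$ with terminal value $Y^k_T$. Taking suprema gives $\|b^\epsilon\|_{\mathrm{BMO}_G}<\infty$, with a bound depending on $T$ but not on $\epsilon$.

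\emph{Step 3: measurability.} We also need $b^\epsilon\in\mathbb H^2(0,T)$. The function $(y,z)\mapsto l^\epsilon(z)\frac{g(s,x,y,z_1)-g(s,x,y,z)}{|z_1-z|^2}(z_1-z)$ is continuous away from $z=z_1$ and tends to $0$ there, so it is jointly continuous. Composed with $(X,Y^2,Z^1,Z^2)$, it stays within the admissible process class, in the same way as in \cite[Lemma~3.6]{hu2018quadratic}. Combined with the integrability from Step 2, this gives membership in $\mathbb H^2$.

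\emph{Main obstacle.} The delicate point is Step 2. Theorem~\ref{Z is BMO and K is intgrable} is stated for a Lipschitz terminal function $\Phi$ of $X_T$, whereas on $[0,T]$ the terminal datum here is $Y^k_T$, which is only bounded. I would therefore check that the proof of that theorem (It\^o's formula for $e^{-\gamma Y}$) uses only boundedness of $Y$. If not, I would rerun that computation directly with $\gamma=6C_1$ on $[\tau,T]$.
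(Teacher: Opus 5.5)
Your proposal is correct. Steps~1--2 coincide with the last step of the paper's proof: the linear domination of $b^\epsilon$ by $1+|Z^1|+|Z^2|$ transfers the $G$-BMO property of $Z^1,Z^2$ on $[0,T]$ to $b^\epsilon$. Your remark that the $e^{-\gamma Y}$ computation in Theorem~\ref{Z is BMO and K is intgrable} uses only the boundedness of $Y^k$, not the form $\Phi(X_T)$ of the terminal value, is right and covers a point the paper passes over.

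The two routes differ on the membership $b^\epsilon\in\mathbb{H}^2(0,T)$, which is the part the paper actually works on. The paper truncates $Z^1,Z^2$ at level $n$, so that $g$ becomes Lipschitz in $z$. It then applies Lemma~3.1 of \citet{hu2018ergodic} to the truncated process $b^{n,\epsilon}$ and lets $n\to\infty$. You instead use that the Section~4 cutoff is continuous and vanishes at $0$, which makes $b^\epsilon$ a jointly continuous function of $(s,X_s,Y^2_s,Z^1_s,Z^2_s)$. For the first term to vanish at the diagonal, the cutoff must be evaluated at $z_1-z$, not at $z$ as you wrote.

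Continuity by itself does not place a process in the completion $\mathbb{H}^2(0,T)=\mathbb{M}^2(0,T)$, so you should name the tool you rely on. It is the Hu--Wang--Zheng characterization: $\mathbb{M}^2(0,T)$ consists of the progressively measurable processes that have a quasi-continuous version and satisfy $\lim_{N\to\infty}\hat{\mathbb{E}}[\int_0^T|\eta_u|^2\mathds{1}_{\{|\eta_u|>N\}}\,du]=0$. Continuous maps preserve quasi-continuity. The uniform integrability comes from your Step~1 bound together with the uniform integrability of $Z^1,Z^2$, not from the BMO estimate of Step~2.

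As for what each route buys: the paper's reduces everything to a published lemma for $z$-Lipschitz generators. Yours avoids truncation and needs only the linear bound. That linear bound is also what the paper's limit step implicitly needs. Its displayed truncation error $\frac{2C_1}{\epsilon}\sum_k|Z^k_s|^2\mathds{1}_{\{|Z^k_s|>n\}}$ is quadratic in $|Z^k|$, so squaring it to get $\mathbb{H}^2$-convergence requires more than $Z^k\in\mathbb{H}^2$. The fix is to bound $|b^{n,\epsilon}-b^\epsilon|$ linearly on $\{|Z^1_s|\vee|Z^2_s|>n\}$ instead.
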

\begin{proof}
For simplicity, assume $d=1$.
Since $b_s^{\epsilon}\leq C_1(1+|Z_s^1|+|Z_s^2|)$, it remains to show that,
for every $T\geq 0$, $b^{\epsilon}\in \mathbb{H}^2(0,T)$.
For each $n\in \mathbb{N}$, define the process 
    \begin{align}
        b_s^{n,\epsilon}=l^\epsilon(\hat{Z}_s)\frac{g^1(Y^2_s,\frac{|Z_s^1|\wedge n}{|Z_s^1|}Z^1_s)-g^1(Y_s^2,\frac{|Z_s^2|\wedge n}{|Z_s^2|}Z_s^2)}{|\hat{Z}_s|^2}\hat{Z}_s+C_1(1-l^\epsilon(\hat{Z}))\,.
    \end{align}
    By Lemma~3.1 of \citet{hu2018ergodic}, for every $T\geq 0$,
$b^{n,\epsilon}\in \mathbb{H}^2(0,T)$. Moreover,
    \begin{align}
        |b_s^{n,\epsilon}-b_s^{\epsilon}|\leq& l^\epsilon(\hat{Z}_s)\frac{|g^1(Y^2_s,Z^1_s)-g^1(Y^2_s,\frac{|Z_s^1|\wedge n}{|Z_s^1|}Z^1_s)|+|g^1(Y^2_s,Z^2_s)-g^1(Y_s^2,\frac{|Z_s^2|\wedge n}{|Z_s^2|}Z_s^2)| }{|\hat{Z}_s|}\\
        \leq&\frac{C_1}{\epsilon}((|Z_s^1|-n)(1+n+|Z_s^1|)\mathds{1}_{|Z_s^1|>n}+(|Z_s^2|-n)(1+n+|Z_s^2|)\mathds{1}_{|Z_s^2|>n})\\
        \leq&\frac{2C_1}{\epsilon}(|Z_s^1|^2\mathds{1}_{|Z_s^1|>n} +|Z_s^2|^2\mathds{1}_{|Z_s^2|>n})\,.
    \end{align}
Using $Z_s^1,Z_s^2\in \mathbb{H}^2(0,T)$ together with
Proposition~3.8 of \citet{li2011stopping}, we conclude that
$b^{\epsilon}\in \mathbb{H}^2(0,T)$. 
From $|b_s^\epsilon|\le C_1(1+|Z_s^1|+|Z_s^2|)$, it follows that 
\begin{align}
	\sup_{P\in \mathcal{P}}\sup_{\tau\in 
		\mathcal{T}_{[{s},T]}}\Big|\!\Big| \mathbb{E}^P_{\tau}\Big[\int_{\tau}^T|b_u^\epsilon|^2\,d\langle B \rangle_u\Big]\Big|\!\Big|_{L^\infty(P)} <\infty\,.
	\end{align}  
    This completes the proof.
\end{proof}

We now relate the finite- and infinite-horizon equations by showing that
finite-horizon solutions converge to the infinite-horizon solution as
the terminal horizon tends to infinity.
Consider the finite-horizon $G$-BSDE 
\begin{align}\label{eqn:appfBSDE} 
Y_s^{t,\xi,T}&=\int_s^Tf(u,X_u^{t,\xi},Y_u^{t,\xi}, Z_u^{t,\xi,T})\,du+\int_s^Tg_{ij}(u,X_u,Y_u^{t,\xi,T},Z_u^{t,\xi,T})\,d\langle B^i,B^j \rangle _u\\
&-\int_s^TZ^{t,\xi,T}_u\,dB_u-(K^{t,\xi,T}_T-K^{t,\xi,T}_s) \,,\;t\le s\le T
\end{align}
and the infinite-horizon $G$-BSDE
\begin{align}\label{eqn:appiBSDE}
Y_s^{t,\xi}&=Y_T^{t,\xi}+\int_s^Tf(u,X_u^{t,\xi},Y_u^{t,\xi},Z_u^{t,\xi})\,du+\int_s^Tg_{ij}(u,X_u^{t,\xi},Y_u^{t,\xi},Z_u^{t,\xi})\,d\langle B^i,B^j\rangle_u\\&-\int_s^TZ_u^{t,\xi}\,dB_u-(K_T^{t,\xi}-K_s^{t,\xi})\,,\;t\le s\le T<\infty
\end{align}

\begin{thm}\label{thm:convergence}
    Suppose Assumptions~\ref{infinite assumption 1} and~\ref{infinite assumption 2} hold.
Let $(Y^{t,\xi,T},Z^{t,\xi,T}, K^{t,\xi,T})$ solve \eqref{eqn:appfBSDE},
and let $(Y^{t,\xi},Z^{t,\xi},K^{t,\xi})$ solve \eqref{eqn:appiBSDE}. 
    Then 
    \begin{align}
        \sup_{t\le u\le s}|Y_u^{t,\xi,T}-Y_u^{t,\xi}|\le \frac{C_2}{\mu}e^{-\mu(T-s)} \quad\text{quasi-surely}
    \end{align}
    for $t\le s\le T<\infty$.  
\end{thm}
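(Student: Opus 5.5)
The plan is to rerun the linearization argument from the uniqueness part of the proof of Theorem~\ref{infinite horizon Quadartic BSDE exist and unique}, with the pair (finite-horizon solution, infinite-horizon solution) on $[t,T]$. As there, we take $d=1$, $m=1$, $t=0$ and $f=0$.

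\emph{Step 1: the difference equation.} Set $\hat Y:=Y^{T}-Y$, $\hat Z:=Z^T-Z$ on $[0,T]$. By Theorem~\ref{infinite horizon Quadartic BSDE exist and unique} and its existence proof, $|Y|\le C_2/\mu$. Applying the same argument to \eqref{eqn:appfBSDE}, which has zero terminal value, gives $|Y^T|\le C_2/\mu$. Hence $\hat Y_T=-Y_T$ satisfies $|\hat Y_T|\le C_2/\mu$, and Theorem~\ref{Z is BMO and K is intgrable} makes $Z^T$ and $Z$ $G$-BMO generators on $[0,T]$. We then form $a^\epsilon$, $b^\epsilon$, $m^\epsilon$ exactly as in that uniqueness proof, with $(Y^1,Z^1)=(Y^T,Z^T)$ and $(Y^2,Z^2)=(Y,Z)$. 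These satisfy $2G(a^\epsilon)\le-\mu$ (from the monotonicity in Assumption~\ref{infinite assumption 2}), $|b^\epsilon|\le C_1(1+|Z^T|+|Z|)$, and $|m^\epsilon|\le 4\epsilon C_1(1+\epsilon+|Z^T|)$. Lemma~\ref{lem:infinite-linearization-bmo} shows that $b^\epsilon$ is $G$-BMO.

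\emph{Step 2: representation and exponential bound.} We change to $\hat{\mathbb E}^{b^\epsilon}$, let $X^\epsilon$ solve $dX^\epsilon=a^\epsilon X^\epsilon\,d\langle B^\epsilon\rangle$, and handle the two decreasing $G$-martingales $K^T,K$ through the auxiliary identity $X^\epsilon_sK_s=\hat{\mathbb E}^{b^\epsilon}_s[\cdots]$ used in Theorem~\ref{in QBSDE comparison theorem}. This yields
\[
\hat Y_s\le \hat{\mathbb E}^{b^\epsilon}_s\Big[\tfrac{X^\epsilon_T}{X^\epsilon_s}\hat Y_T+\int_s^T\tfrac{X^\epsilon_u}{X^\epsilon_s}m^\epsilon_u\,d\langle B^\epsilon\rangle_u\Big].
\]
We bound $X^\epsilon_T/X^\epsilon_s=e^{\int_s^T a^\epsilon\,d\langle B^\epsilon\rangle}$ by $e^{\int_s^T2G(a^\epsilon)du}\le e^{-\mu(T-s)}$. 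Since $a^\epsilon$ is scalar and $d\langle B^\epsilon\rangle_u$ lies in $[\underline\sigma^2du,\overline\sigma^2du]$, this holds pathwise: $a^\epsilon\,d\langle B^\epsilon\rangle\le 2G(a^\epsilon)\,du$. Together with $|\hat Y_T|\le C_2/\mu$, this gives $\hat Y_s\le \frac{C_2}{\mu}e^{-\mu(T-s)}+R^\epsilon_s$. Here $R^\epsilon_s\le C\epsilon\,\hat{\mathbb E}^{b^\epsilon}_s[\int_s^T(1+|Z^T_u|)du]$, which tends to $0$ quasi-surely by the reverse H\"older inequality (Lemma~\ref{reverse holder inequality}) with $\epsilon$-uniform BMO bounds, as in Theorem~\ref{Z is BMO and K is intgrable}. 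Swapping the roles of the two solutions gives the lower bound, so $|\hat Y_s|\le \frac{C_2}{\mu}e^{-\mu(T-s)}$ quasi-surely for each fixed $s$.

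A sharper route for Step~1 is to compare the terminal values rather than bound each by $C_2/\mu$. The stated constant $C_2/\mu$, rather than $2C_2/\mu$, needs $|\hat Y_T|=|Y_T|\le C_2/\mu$, which holds because $Y^T_T=0$. So only the bound on the infinite-horizon solution enters, and Step~1 is fine as written.

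\emph{Step 3: supremum over $u\le s$.} For $u\le s$ the bound gives $|\hat Y_u|\le \frac{C_2}{\mu}e^{-\mu(T-u)}\le\frac{C_2}{\mu}e^{-\mu(T-s)}$ quasi-surely. Passing from each fixed $u$ to the supremum uses continuity of paths: the inequality holds simultaneously for all rational $u$ outside a polar set (a countable union of polar sets is polar), and then for all $u\in[0,s]$ by continuity of $\hat Y$.

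The main obstacle is Step~2's $\epsilon\to0$ limit together with the justification that $K^T$ and $K$ remain decreasing $G$-martingales under $\hat{\mathbb E}^{b^\epsilon}$ (Lemma~\ref{K is MG in BMO}). This requires $\epsilon$-uniform control of the BMO norm of $b^\epsilon$, hence a uniform order $q$ of the reverse H\"older inequality, and $K\in\mathbb L^p_G$ for large $p$. All of this follows from Theorem~\ref{Z is BMO and K is intgrable} on the finite interval $[0,T]$, so we would cite that result and Lemma~\ref{lem:infinite-linearization-bmo} rather than redo the estimates.
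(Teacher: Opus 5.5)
Your argument is correct, but it is organized differently from the paper's.

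\textbf{How the paper argues.} The paper never linearizes $Y^T$ against $Y$. It starts from the Cauchy estimate \eqref{eqn:hat_Y} between two finite-horizon approximations $Y^n$ and $Y^\ell$. There the discrepancy comes from the source term $\mathds{1}_{u>n}g(u,X_u,0,0)$ on $(n,\ell]$, which is created by extending $Y^n$ by zero. The paper then lets $\ell\to\infty$, using $Y^\ell\to Y$ and identifying the constructed limit with the given bounded solution by uniqueness. Non-integer $T$ is handled by rerunning that estimate with horizon $T$.

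\textbf{How your route differs.} You rerun the uniqueness-type linearization directly on $[0,T]$. There $Y^T$ and $Y$ solve the same equation, so the only discrepancy is the terminal mismatch $\hat Y_T=-Y_T$. This gives a terminal-stability estimate $|\hat Y_s|\le e^{-\mu(T-s)}\|\hat Y_T\|_\infty$ for any two bounded solutions. It holds for every real $T$ at once, with no passage to the limit in the horizon.

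\textbf{What it costs.} You need the a priori bound $\|Y\|_\infty\le C_2/\mu$ for the infinite-horizon solution. Taking it from the existence construction plus uniqueness, as you do, is legitimate. Alternatively, linearize $Y$ against $0$ on $[0,T]$ to get $|Y_s|\le e^{-\mu(T-s)}\|Y\|_\infty+\text{const}$, then let $T\to\infty$. That produces the bound directly and makes your route independent of the approximating sequence.

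\textbf{A detail inherited from the paper.} The paper's $a^\epsilon$ uses the cutoff value $-\mu/(1+\underline\sigma^2)$. With it, one only has $2G(a^\epsilon)\le-\mu\underline\sigma^2/(1+\underline\sigma^2)$ on $\{|\hat Y|<\epsilon\}$, which is weaker than $-\mu$. Replace the cutoff by $-\mu/\underline\sigma^2$; the change is $O(\epsilon)$ and is absorbed into $m^\epsilon$. Then your pathwise bound $a^\epsilon\,d\langle B^\epsilon\rangle_u\le-\mu\,du$ is actually valid.
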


\begin{proof}
    This estimate follows from \eqref{eqn:hat_Y} when \(T=n\) is a positive integer satisfying \(n>t\). The extension to arbitrary real \(T>t\) follows by repeating the same estimates with that terminal horizon.
\end{proof}

\section{Bond pricing with endogenous short-rate feedback}\label{sec:app_1}

In this section, we study bond pricing with endogenous short-rate feedback under $G$-expectation.
We formulate the financial market on a $G$-expectation space
$(\Omega,\overline{\mathbb E})$, where the canonical process $W$ is a $d$-dimensional
$G$-Brownian motion under $\overline{\mathbb E}$.
The market consists of the money-market account
\[
M_s
=e^{\int_0^s r_u\,du
+
\int_0^s \rho_{ij,u}\,d\langle W^i,W^j\rangle_u},
\]
where the short-rate processes $(r_s)_{s\geq 0}$ and
$(\rho_{ij,s})_{s\geq 0}$ are specified below, together with zero-coupon
bonds $P(s,T)$ for all maturities $T>0$ satisfying $P(T,T)=1$.

To rule out arbitrage opportunities, following
\citet[Theorem~3.1]{holzermann2022term}, we assume that there exists a
process $\lambda$ such that the discounted bond prices
\[
\frac{P(s,T)}{M_s},
\qquad 0\leq s\leq T,
\]
are symmetric $G$-martingales under the sublinear expectation
$\hat{\mathbb E}:=\hat{\mathbb E}^{\lambda}$.
We recall the definition of $\hat{\mathbb E}^{\lambda}$ in
\eqref{eqn:change}.
The process 
\begin{equation}
\label{new_GBM_pricing}
B_t
:=
W_t-\int_0^t \lambda_u\,d\langle W\rangle_u,
\qquad 0\leq t\leq T,
\end{equation}
is a $G$-Brownian motion under $\hat{\mathbb E}$.
We refer to $\hat{\mathbb E}$ as the risk-neutral sublinear expectation.
Since
$\langle B^i,B^j\rangle
=
\langle W^i,W^j\rangle$,
the money-market account can equivalently be written as
\[
M_s
=
e^{
\int_0^s r_u\,du
+
\int_0^s \rho_{ij,u}\,d\langle B^i,B^j\rangle_u
}.
\]
Accordingly, the time-$s$ price of a zero-coupon bond with maturity $T$
is given by
\begin{equation}
\label{eqn:bond}
P(s,T)
=
\hat{\mathbb E}_s
\left[
e^{
-\int_s^T r_u\,du
-\int_s^T \rho_{ij,u}\,d\langle B^i,B^j\rangle_u
}
\right].
\end{equation}
For notational convenience, we write the logarithmic bond price as
$Y_s^T:=\log P(s,T)$. When the maturity $T$ is clear from the context,
we suppress the superscript $T$ and write $Y_s$.

We now introduce endogenous short-rate feedback.
Specifically, we assume that the short rate is a function of time,
the economic state, and the bond price.
Let $X$ be an economic state process representing macroeconomic factors
such as inflation and unemployment.
We assume that $X$ solves the $G$-SDE \eqref{SDE} with
$t=0$ and $\xi=x\in\mathbb R^m$.
The short-rate processes are then specified by
\[
r_s=r(s,X_s,Y_s),
\qquad
\rho_{ij,s}=\rho_{ij}(s,X_s,Y_s),
\qquad s\geq 0.
\]
Substituting these feedback specifications into \eqref{eqn:bond} yields
\begin{align}
\label{eqn:appli_1}
e^{Y_s}
=
\hat{\mathbb E}_s
\left[
e^{
-\int_s^T r(u,X_u,Y_u)\,du
-\int_s^T \rho_{ij}(u,X_u,Y_u)\,
d\langle B^i,B^j\rangle_u
}
\right].
\end{align}
Unlike the standard bond-pricing formula, \eqref{eqn:appli_1} is implicit:
the short rate depends on the bond price through $Y$.
The existence and uniqueness of a process $Y$ satisfying
\eqref{eqn:appli_1} therefore require a separate argument.

\begin{assume}\label{assume:r}
Suppose that there exist positive constants $C_1$ and $C_2$ such that
\begin{align}
&|r(s,x,y)-r(s,x',y')|
+
\sum_{i,j=1}^d
|\rho_{ij}(s,x,y)-\rho_{ij}(s,x',y')|
\leq
C_1\bigl(|x-x'|+|y-y'|\bigr),
\\
&|r(s,x,0)|
+
\sum_{i,j=1}^d
|\rho_{ij}(s,x,0)|
\leq C_2
\end{align}
for all $s\in[0,T]$, $x,x'\in\mathbb R^m$, and $y,y'\in\mathbb R$.
\end{assume}

We seek to establish the existence and uniqueness of a process $Y$
satisfying \eqref{eqn:appli_1}. To this end, we transform the implicit
bond-pricing equation \eqref{eqn:appli_1} into the following quadratic $G$-BSDE:
\begin{equation}
\label{eqn:bsde}
\begin{aligned}
Y_s
={}&
-\int_s^T r(u,X_u,Y_u)\,du
-\int_s^T
\left(
\rho_{ij}(u,X_u,Y_u)
-\frac{1}{2}Z_u^iZ_u^j
\right)
d\langle B^i,B^j\rangle_u
\\
&\quad
-\int_s^T Z_u\,dB_u
-(K_T-K_s),
\qquad 0\leq s\leq T.
\end{aligned}
\end{equation}
The following theorem establishes the well-posedness of this $G$-BSDE
and, consequently, of the endogenous bond-pricing equation
\eqref{eqn:appli_1}.

\begin{thm}
Suppose Assumptions~\ref{assumption} and~\ref{assume:r} hold.
Then the following statements hold.
\begin{enumerate}
\item[(i)]
The $G$-BSDE \eqref{eqn:bsde} admits a unique solution
$(Y,Z,K)
\in
\mathbb S^{2}(0,T)
\times
\mathbb H^{2}(0,T;\mathbb R^d)
\times
\mathbb L_G^2(\Omega_T)$
such that $Y$ is bounded.
\item[(ii)]
The process $Y$ is the unique bounded process in
$\mathbb S^2(0,T)$ satisfying \eqref{eqn:appli_1}.
\end{enumerate}
\end{thm}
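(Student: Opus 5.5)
For part (i), I will cast \eqref{eqn:bsde} as an instance of \eqref{QBSDE} and apply Theorem~\ref{thm:exist} (with uniqueness from Theorem~\ref{thm:uniqueness}). Take $\Phi\equiv 0$, $f(s,x,y,z):=-r(s,x,y)$ and
\[
g_{ij}(s,x,y,z):=\tfrac12 z^iz^j-\rho_{ij}(s,x,y).
\]
Symmetry $g_{ij}=g_{ji}$ holds after replacing $\rho_{ij}$ by its symmetrization $\tfrac12(\rho_{ij}+\rho_{ji})$, which leaves the $d\langle B^i,B^j\rangle$ integral unchanged. Continuity in $s$ will be added as a standing assumption on $r,\rho$, if it is not already implicit. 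The Lipschitz condition of Assumption~\ref{assumption2} follows from Assumption~\ref{assume:r} together with
\[
|z^iz^j-z'^iz'^j|\le(|z|+|z'|)|z-z'|,
\]
so the constant $\max(C_1,d^2)$ works. The growth bound $|f(s,x,0,0)|+\sum|g_{ij}(s,x,0,0)|+|\Phi|\le C_2$ is exactly the second line of Assumption~\ref{assume:r}. Hence there is a unique solution with $Y$ bounded. By Theorem~\ref{Z is BMO and K is intgrable}, $Z$ is $G$-BMO and $K_T\in\mathbb L^p_G$ for all $p$, which gives membership in the stated spaces.

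For the existence half of part (ii), I will show that the bounded $Y$ of (i) satisfies \eqref{eqn:appli_1}. Define
\[
D_s:=\exp\Big(-\int_0^s r(u,X_u,Y_u)\,du-\int_0^s\rho_{ij}(u,X_u,Y_u)\,d\langle B^i,B^j\rangle_u\Big).
\]
This is bounded above and below on $[0,T]$ because $|r|,|\rho|\le C_2+C_1|Y|$ is bounded and $\langle B\rangle$ has bounded density. Apply It\^o's formula to $D_se^{Y_s}$. The $dY$ dynamics are
\[
dY=r\,du+(\rho_{ij}-\tfrac12Z^iZ^j)\,d\langle B^i,B^j\rangle+Z\,dB+dK,
\]
and the It\^o correction is $\tfrac12 Z^iZ^j\,d\langle B^i,B^j\rangle$. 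The drift terms therefore cancel exactly, leaving
\[
d(De^{Y})=De^{Y}Z\,dB+De^{Y}\,dK.
\]
Now $\int De^YZ\,dB$ is a symmetric $G$-martingale, since $De^Y$ is bounded and $Z\in\mathbb H^2$. The term $\int De^Y\,dK$ is a decreasing process whose conditional $\hat{\mathbb E}_s$-expectation of increments is $0$: for a positive bounded integrand this follows as in \cite{hu2014backward}, for instance via $-\int \eta\,dK=\sup_P$ arguments or by the standard lemma that $\int\eta\,dK$ is a $G$-martingale for bounded positive $\eta\in\mathbb M^2$. Together these give $D_se^{Y_s}=\hat{\mathbb E}_s[D_Te^{Y_T}]=\hat{\mathbb E}_s[D_T]$. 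Dividing by the $\mathcal F_s$-measurable positive $D_s$, which is legitimate by positive homogeneity of $\hat{\mathbb E}_s$, yields \eqref{eqn:appli_1}.

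For uniqueness in (ii), let $\tilde Y\in\mathbb S^2$ be bounded and satisfy \eqref{eqn:appli_1}, and define $\tilde D$ analogously. Then
\[
N_s:=\tilde D_se^{\tilde Y_s}=\hat{\mathbb E}_s[\tilde D_T]
\]
is a $G$-martingale with bounded, positive terminal value. By the $G$-martingale representation theorem (Peng, Song) in the form used in \cite{hu2014backward}, $N_s=N_0+\int_0^s\eta\,dB+\tilde K_s$ with $\tilde K$ decreasing. Applying It\^o to $\log N-\log\tilde D$ and setting $\tilde Z=\eta/N$ produces a triple solving \eqref{eqn:bsde}, except that the $K$-part appears as $\int N^{-1}d\tilde K$. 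Since $N$ is bounded away from zero, this is again a decreasing $G$-martingale, by the same lemma as above. Uniqueness in (i) then forces $\tilde Y=Y$.

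I expect this last step to be the main obstacle: verifying that $\int N^{-1}d\tilde K$ is a $G$-martingale with the required integrability, and that $\tilde Z\in\mathbb H^2$. I would handle this by invoking the known fact that $\int\eta\,dK$ is a decreasing $G$-martingale for positive bounded continuous adapted $\eta$, and by using boundedness of $N^{-1}$ and $\mathbb L^2$ representation estimates to place $\eta\in\mathbb H^2$.
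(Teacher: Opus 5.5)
Your proposal is correct and follows essentially the same route as the paper. Both reduce (i) to Theorem~\ref{thm:exist}, verify the pricing identity in (ii) by It\^o's formula applied to the discounted exponential $D_s e^{Y_s}$, and get uniqueness in (ii) from three ingredients: the $G$-martingale representation theorem, It\^o's formula for the logarithm, and Lemma~3.4 of \cite{hu2014backward}, applied to the bounded positive integrand $N^{-1}$, which makes $\int N^{-1}\,d\tilde K$ a decreasing $G$-martingale. Your explicit symmetrization of $\rho_{ij}$ and your remark that continuity in $s$ is needed to meet Assumption~\ref{assumption2} fill in details the paper passes over.
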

 
\begin{proof} 
    For simplicity, assume $d=1$. The coefficients $r$ and
$\rho$ induce generators satisfying the assumptions of
Theorem~\ref{thm:exist}, which yields a unique solution $(Y,Z,K)$
to \eqref{eqn:bsde}.

We first verify that $Y$ satisfies \eqref{eqn:appli_1}. Applying
It\^{o}'s formula to
$e^{Y_s-\int_0^s r(u,X_u,Y_u)\,du-\int_0^s
\rho(u,X_u,Y_u)\,d\langle B\rangle_u}$, we obtain
    \begin{align}
        e^{Y_s-\int_0^sr(u,X_u,Y_u)\,du-\int_0^s\rho(u,X_u,Y_u)\,d\langle B\rangle_u}=\hat{\mathbb{E}}_s[e^{-\int_0^Tr(u,X_u,Y_u)\,du-\int_0^T\rho(u,X_u,Y_u)\,d\langle B\rangle_u}]\,.
    \end{align}
    Therefore, $Y$ satisfies \eqref{eqn:appli_1}.
    To prove uniqueness for the implicit pricing equation, let
$Y'\in\mathbb{S}^2(0,T)$ be another bounded process satisfying
\eqref{eqn:appli_1}. Then, for each
$0\leq s\leq T$,
    \begin{align}
         e^{Y'_s-\int_0^sr(u,X_u,Y'_u)\,du-\int_0^s\rho(u,X_u,Y'_u)\,d\langle B\rangle_u}=\hat{\mathbb{E}}_s[e^{-\int_0^Tr(u,X_u,Y'_u)\,du-\int_0^T\rho(u,X_u,Y'_u)\,d\langle B\rangle_u}]\,.
    \end{align}
    By the $G$-martingale representation theorem, there exist
    $Z\in\mathbb{H}^2(0,T)$ and a decreasing $G$-martingale $K$ such that
\begin{align}
        e^{Y'_s-\int_0^sr(u,X_u,Y'_u)\,du-\int_0^s\rho(u,X_u,Y'_u)\,d\langle B\rangle_u}=\int_0^sZ_u\,dB_u+K_s\,.
    \end{align}
    Applying It\^{o}'s formula again yields
    \begin{align}
        Y'_s&=
        -\int_s^Tr(u,X_u,Y'_u)\,du-\int_s^T\rho(u,X_u,Y'_u)\,d\langle B\rangle_u\\
        &-\int_s^T\frac{1}{2}(e^{-Y'_u+\int_0^ur(\ell,X_\ell,Y'_\ell)\,d\ell+\int_0^u\rho(\ell,X_\ell,Y'_\ell)\,d\langle B\rangle_\ell}Z_u)^2\,d\langle B\rangle_u\\
        &-\int_s^T e^{-Y'_u+\int_0^ur(\ell,X_\ell,Y'_\ell)\,d\ell+\int_0^u\rho(\ell,X_\ell,Y'_\ell)\,d\langle B\rangle_\ell}Z_u\,dB_u
        -\int_s^T e^{-Y'_u+\int_0^ur(\ell,X_\ell,Y'_\ell)\,d\ell+\int_0^u\rho(\ell,X_\ell,Y'_\ell)\,d\langle B\rangle_\ell}dK_u\,.
    \end{align}
    Define
    \begin{align}
        Z'_s&=e^{-Y'_s+\int_0^sr(u,X_u,Y'_u)\,du+\int_0^s\rho(u,X_u,Y'_u)\,d\langle B\rangle_u}Z_s\,,\\
        K'_s&=\int_0^s e^{-Y'_u+\int_0^ur(\ell,X_\ell,Y'_\ell)\,d\ell+\int_0^u\rho(\ell,X_\ell,Y'_\ell)\,d\langle B\rangle_\ell}dK_u\,.
    \end{align}
    The process
\[
e^{-Y'_u+\int_0^u r(\ell,X_\ell,Y'_\ell)\,d\ell
+\int_0^u\rho(\ell,X_\ell,Y'_\ell)\,d\langle B\rangle_\ell}
\]
is positive and bounded. Hence, by
Lemma~3.4 of \citet{hu2014backward}, $K'$ is a decreasing
$G$-martingale, and $(Y',Z',K')$ is a solution to
\eqref{eqn:bsde}. Uniqueness in Theorem~\ref{thm:exist} yields
$(Y,Z,K)=(Y',Z',K')$. Consequently, $Y$ is the unique bounded
process in $\mathbb{S}^2(0,T)$ satisfying \eqref{eqn:appli_1}.
\end{proof}

\section{Policy short-rate design for target bond prices}

\subsection{Fixed-maturity bonds}

We consider the inverse problem of choosing a policy short rate to reproduce a prescribed bond-price target.
Let $\psi:[0,T]\times\mathbb{R}^m\to\mathbb{R}$ be a prescribed function, and suppose that the monetary authority seeks to attain the target bond price
\[
P(s,T)=e^{\psi(s,X_s)},
\qquad 0\leq s\leq T,
\]
for a zero-coupon bond with maturity $T$.
This formulation allows the target to depend on macroeconomic conditions, such as inflation and unemployment, through the state process $X$.

For a fixed maturity, the construction follows by matching the dynamics
of the target price. We present this result as preparation for the
long-maturity problem in Subsection~\ref{sec:long-maturity-bonds}.
The proposition below follows directly from It\^{o}'s formula,
so we omit its proof.

\begin{prop}
Let $\psi\in C^{1,2}([0,T]\times\mathbb{R}^m)$ satisfy
$\psi(T,\cdot)=0$.
Define
\[
r(s,x)
:=
\partial_s\psi(s,x)
+
\left\langle
D_x\psi(s,x),
b(s,x)
\right\rangle
\]
and
\begin{align}
\rho_{ij}(s,x)
&:=
\left\langle
D_x\psi(s,x),
h_{ij}(s,x)
\right\rangle
+
\frac{1}{2}
\left\langle
D_x^2\psi(s,x)\sigma_i(s,x),
\sigma_j(s,x)
\right\rangle
\nonumber\\
&\quad
+
\frac{1}{2}
\left\langle
D_x\psi(s,x),
\sigma_i(s,x)
\right\rangle
\left\langle
D_x\psi(s,x),
\sigma_j(s,x)
\right\rangle
\end{align}
for $(s,x)\in[0,T]\times\mathbb{R}^m$.
Suppose Assumption~\ref{assumption} holds and that
$r$ and $\rho_{ij}$ are bounded; that is, there exists a constant
$C>0$ such that
\[
|r(s,x)|
+
\sum_{i,j=1}^d
|\rho_{ij}(s,x)|
\leq C
\]
for all $(s,x)\in[0,T]\times\mathbb{R}^m$.
Then the target satisfies
\begin{align}
e^{\psi(s,X_s)}
=
\hat{\mathbb{E}}_s
\left[
e^{
-\int_s^T r(u,X_u)\,du
-\int_s^T
\rho_{ij}(u,X_u)\,
d\langle B^i,B^j\rangle_u
}
\right],
\qquad 0\leq s\leq T.
\end{align}
\end{prop}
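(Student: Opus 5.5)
The plan is to verify the identity by applying $G$-It\^{o}'s formula to $\psi(s,X_s)$ and identifying the discounted target price as a stochastic exponential. Write $z_j(s,x):=\langle D_x\psi(s,x),\sigma_j(s,x)\rangle$ and $Z_s:=(z_j(s,X_s))_{1\leq j\leq d}$. Since $X$ solves \eqref{SDE} and $\psi\in C^{1,2}$, It\^{o}'s formula gives
\begin{align}
d\psi(s,X_s)
={}&\bigl(\partial_s\psi+\langle D_x\psi,b\rangle\bigr)(s,X_s)\,ds
+\Bigl(\langle D_x\psi,h_{ij}\rangle+\tfrac12\langle D_x^2\psi\,\sigma_i,\sigma_j\rangle\Bigr)(s,X_s)\,d\langle B^i,B^j\rangle_s\\
&+Z_s^j\,dB^j_s .
\end{align}
By the definitions of $r$ and $\rho_{ij}$ this can be rewritten as
\[
d\psi(s,X_s)=r(s,X_s)\,ds+\Bigl(\rho_{ij}(s,X_s)-\tfrac12 Z^i_sZ^j_s\Bigr)d\langle B^i,B^j\rangle_s+Z_s\,dB_s ,
\]
with $\psi(T,X_T)=0$. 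Thus $(\psi(\cdot,X_\cdot),Z,0)$ solves the logarithmic pricing $G$-BSDE \eqref{eqn:bsde}, with $K\equiv0$ and coefficients that do not depend on $y$.

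Next I will set
\[
D_s:=\exp\Bigl(\psi(s,X_s)-\int_0^s r(u,X_u)\,du-\int_0^s\rho_{ij}(u,X_u)\,d\langle B^i,B^j\rangle_u\Bigr).
\]
The display above yields $d\log D_s=Z_s\,dB_s-\tfrac12 Z^i_sZ^j_s\,d\langle B^i,B^j\rangle_s$. Hence $D_s=e^{\psi(0,x)}M^Z_s$, where $M^Z$ is the exponential from Section~\ref{section 2}.

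The main step is to show that $M^Z$ is a symmetric $G$-martingale on $[0,T]$. Once that holds, $D_s=\hat{\mathbb E}_s[D_T]$. Because $\psi(T,\cdot)=0$ and $e^{-\int_0^s r\,du-\int_0^s\rho_{ij}\,d\langle B^i,B^j\rangle}$ is $\mathcal F_s$-measurable, positive and bounded, I will pull it out of $\hat{\mathbb E}_s$. This is legitimate for a positive factor, and since $M^Z$ is symmetric the two-sided identity gives no sublinearity issue. The result is exactly the claimed formula for $e^{\psi(s,X_s)}$.

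For the martingale property I would argue that $Z$ is a $G$-BMO martingale generator, after which the discussion preceding Lemma~\ref{reverse holder inequality} gives symmetry of $M^Z$. If $\psi$ is bounded, the bound follows as in the proof of Theorem~\ref{Z is BMO and K is intgrable}. One applies It\^{o}'s formula to $e^{-\gamma\psi(s,X_s)}$, uses the quadratic term $-\tfrac12|Z|^2$ together with the boundedness of $r$ and $\rho_{ij}$ to control $\mathbb E^P_\tau[\int_\tau^T|Z_u|^2\,d\langle B\rangle_u]$, and notes that the drift in $\psi$ has the right sign structure with $\gamma=1$.

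This is the step I expect to be the real obstacle, because the statement does not explicitly assume $\psi$ bounded. Without that assumption I would fall back on a localization under each $P\in\mathcal P$: stop at $\tau_n=\inf\{s:|X_s|\geq n\}$, use the probabilistic representation of $\hat{\mathbb E}_s$, and pass to the limit via uniform integrability of $D_{\tau_n}$. That uniform integrability would come from the moment bounds of Theorem~\ref{SDE sublinear property} combined with linear growth of $\psi$.
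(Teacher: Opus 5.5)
Your route is the paper's route: the paper only says the proposition follows from It\^o's formula. Your computation of $d\psi(s,X_s)$, the identification $D_s=e^{\psi(0,x)}M^Z_s$, and the reduction to the $G$-martingale property of $M^Z$ spell that out correctly, and the martingale property is indeed the only nontrivial point.

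\textbf{The gap: the case of unbounded $\psi$.} Your fallback localizes at exit times of $X$ under each $P\in\mathcal P$ and passes to the limit by uniform integrability of $D_{\tau_n}$. It rests on a linear growth bound for $\psi$, which is not a hypothesis: only $\psi\in C^{1,2}$ and $\psi(T,\cdot)=0$ are assumed. Even granting linear growth, the polynomial moment bounds of Theorem~\ref{SDE sublinear property} do not control $e^{\psi(\tau_n,X_{\tau_n})}$; that would need exponential moments of $\sup|X|$. Since a positive local martingale need not be a true martingale, some control of $\psi$ is genuinely required. As written, the general case is not proved.

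\textbf{The missing observation.} No fallback is needed, because the hypotheses already force $\psi$ to be bounded. Fix $(s,x)$ and let $\chi$ solve $\dot\chi_u=b(u,\chi_u)$, $\chi_s=x$, on $[s,T]$. This ODE has a unique global solution, since Assumption~\ref{assumption} gives continuity in $u$ and a uniform Lipschitz bound in $x$. As $\psi\in C^{1,2}$ is jointly $C^1$, the chain rule gives
\[
\frac{d}{du}\psi(u,\chi_u)=\partial_u\psi(u,\chi_u)+\langle D_x\psi(u,\chi_u),b(u,\chi_u)\rangle=r(u,\chi_u).
\]
With $\psi(T,\cdot)=0$ this yields $\psi(s,x)=-\int_s^T r(u,\chi_u)\,du$, hence $|\psi(s,x)|\leq C(T-s)\leq CT$.

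\textbf{Closing the argument.} Your bounded-case argument now applies verbatim. Integrating your It\^o decomposition from $\tau$ to $T$ expresses $\tfrac12\int_\tau^T Z^i_uZ^j_u\,d\langle B^i,B^j\rangle_u$ as $\psi(\tau,X_\tau)$, plus $\int_\tau^T r\,du+\int_\tau^T\rho_{ij}\,d\langle B^i,B^j\rangle_u$, plus $\int_\tau^T Z_u\,dB_u$. After localization and monotone convergence, this bounds the BMO norm of $Z$ by a constant depending on $C$, $T$ and $\overline{\sigma}$. So $Z$ is a $G$-BMO martingale generator and $M^Z$ is a symmetric $G$-martingale. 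The exponential transform $e^{-\gamma\psi}$ is not even needed, since the $-\tfrac12|Z|^2$ drift already has the favorable sign.

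Alternatively, $D$ is now bounded above and below by positive constants. It is therefore a true martingale under every $P\in\mathcal P$, and the conclusion follows from the representation of $\hat{\mathbb E}_s$ as an essential supremum.
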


\subsection{Long-maturity bonds}
\label{sec:long-maturity-bonds}

The preceding subsection considers an inverse problem for a single,
fixed maturity. We now study the corresponding long-maturity problem.
Suppose that the monetary authority specifies a target long-term
yield $\lambda>0$. To remove the linear decay in the logarithmic
price associated with this target, we define the compensated
logarithmic bond price
\begin{equation}
\label{eqn:compensated-log-bond-price}
Y_s^T
:=
\log P(s,T)+\lambda(T-s),
\qquad 0\leq s\leq T.
\end{equation}
Thus,
\[
e^{Y_s^T}
=
e^{\lambda(T-s)}P(s,T)
\]
represents the bond price after removing the prescribed exponential
decay with respect to maturity.

The feedback coefficients now depend on $Y^T$, rather than on the
uncompensated logarithmic bond price. The resulting bond-price
process is therefore required to satisfy
\begin{equation}
\label{eqn:long-maturity-bond-pricing}
P(s,T)
=
\hat{\mathbb E}_s
\left[
e^{
-\int_s^T r(u,X_u,Y_u^T)\,du
-\int_s^T
\rho_{ij}(u,X_u,Y_u^T)\,
d\langle B^i,B^j\rangle_u
}
\right].
\end{equation}
By the positive homogeneity of $\hat{\mathbb E}_s$,
this equality is equivalent to
\begin{equation}
\label{eqn:compensated-bond-pricing}
e^{Y_s^T}
=
\hat{\mathbb E}_s
\left[
e^{
-\int_s^T
(r(u,X_u,Y_u^T)-\lambda)\,du
-\int_s^T
\rho_{ij}(u,X_u,Y_u^T)\,
d\langle B^i,B^j\rangle_u
}
\right].
\end{equation}

Let $\phi:\mathbb R^m\to\mathbb R$ be the target profile for the
compensated logarithmic bond price. The authority seeks feedback
coefficients such that, for each fixed $s\geq0$,
\begin{equation}
\label{eqn:long-maturity-targets}
\lim_{T\to\infty}
e^{\lambda(T-s)}P(s,T)
=
e^{\phi(X_s)}
\quad\text{and}\quad
-\lim_{T\to\infty}
\frac{\log P(s,T)}{T-s}
=
\lambda,
\qquad\text{quasi-surely}.
\end{equation}
The first limit specifies the state-dependent profile that remains
after removing the prescribed long-maturity decay. It also implies
the long-term yield identity in the second limit.

\begin{thm} Fix $\lambda>0$, a function $\phi\in C^2(\mathbb{R}^m)$,
and a constant $\mu>0$. Define
    $$r(s,x,y):=\left\langle D\phi(x),b(s,x)\right\rangle+\mu(y-\phi(x))+\lambda$$
    and 
    $$\rho_{ij}(s,x,y)=\left\langle D\phi(x),h_{ij}(x)\right\rangle+\frac{1}{2}
\left\langle
D^2\phi(x)\sigma_i(s,x),
\sigma_j(s,x)
\right\rangle+\frac{1}{2}\left\langle D\phi(x),\sigma_i(s,x)\right\rangle\left\langle D\phi(x),\sigma_j(s,x)\right\rangle\,.$$ 
Suppose that Assumption~\ref{infinite assumption 1} holds and that there exist positive constants $C_1, C_2$ such that
\begin{align}
    |r(s,x,0)-r(s,x',0)|+|\rho_{ij}(s,x)-\rho_{ij}(s,x')|\le C_1|x-x'|\,,\quad|r(s,x,0)|+|\rho_{ij}(s,x,0)|\le C_2
\end{align}
for all $s\in[0,\infty)$ and $ x,x'\in \mathbb{R}^m$.
Then the following statements hold.
\begin{enumerate}[label=(\roman*)]
    \item The finite-horizon quadratic $G$-BSDE 
\begin{align}
    Y_s^T&=-\int_s^T r(u,X_u,Y_u^T)-\lambda\,du-\int_s^T \rho_{ij}(u,X_u,Y^T_u)-\frac{1}{2}(Z_u^T)^i(Z_u^T)^j\,d\langle B^i, B^j\rangle_u\\
    &-\int_s^TZ^T_u\,dB_u-(K^T_T-K_s)
\end{align}
admits a unique solution $(Y^T,Z^T,K^T)$ with $Y^T$ bounded.
\item The process $Y^T$ is the unique bounded process in $\mathbb{S}^2(0,T)$ satisfying \eqref{eqn:compensated-bond-pricing}.
\item The infinite-horizon quadratic $G$-BSDE 
\begin{align}
    Y_s&=Y_T-\int_s^Tr(u,X_u,Y_u)-\lambda \,du-\int_s^T\rho_{ij}(u,X_u,Y_u)-\frac{1}{2}Z_u^iZ_u^j\,\langle B^i,B^j\rangle_u\\
    &\quad-\int_s^TZ_u\,dB_u-(K_T-K_t)
\end{align}
admits a unique solution $(Y,Z,K)$ with $Y$ bounded. Moreover, this solution satisfies $Y=\phi(X)$.
\item The processes $Y^T$ and $Y$ satisfy the estimate 
\begin{equation}
    \label{eqn:long-maturity-convergence}
\sup_{0\le u\le s}|Y_u^T-Y_u| \le \frac{C_2}{\mu}e^{-\mu(T-s)} \quad\text{quasi-surely}\,.
\end{equation}
Consequently, both limits in
\eqref{eqn:long-maturity-targets} hold.
\end{enumerate}

\end{thm}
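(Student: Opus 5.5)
We reduce each part to the general theory of Sections 3--4, applied to the drivers
\[
f(s,x,y,z):=-(r(s,x,y)-\lambda)=-\langle D\phi(x),b(s,x)\rangle-\mu(y-\phi(x)),\qquad g_{ij}(s,x,y,z):=\tfrac12 z^iz^j-\rho_{ij}(s,x),
\]
with terminal value $\Phi\equiv0$. The first step is to check Assumptions~\ref{assumption2} and~\ref{infinite assumption 2}.

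\emph{Assumption checks.} Symmetry $g_{ij}=g_{ji}$ is clear. The $z$-part $\tfrac12 z^iz^j$ has increments bounded by $\tfrac12(|z|+|z'|)|z-z'|$. In $x$ and $y$, the hypotheses give Lipschitz bounds on $r(\cdot,\cdot,0)$ and $\rho_{ij}$, and $r$ is affine in $y$ with slope $\mu$. Hence $f$ and $g$ satisfy the Lipschitz condition with constant $\max(C_1,\mu)$ plus $d^2$ factors, and the growth bound follows from $|r(s,x,0)|+|\rho_{ij}|\le C_2$ together with $\lambda$.

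\emph{Strict monotonicity.} Since $g$ does not depend on $y$, the $G$-term vanishes. Then
\[
(f(s,x,y,z)-f(s,x,y',z))(y-y')=-\mu|y-y'|^2 ,
\]
which is exactly condition~5 of Assumption~\ref{infinite assumption 2}.

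\emph{Parts (i) and (ii).} Part (i) follows from Theorem~\ref{thm:exist}, applied on $[0,T]$ with $\Phi=0$. For (ii), repeat the Itô argument of the preceding bond-pricing theorem, now applied to $e^{Y_s-\int_0^s(r-\lambda)\,du-\int_0^s\rho_{ij}\,d\langle B^i,B^j\rangle}$. That argument shows this process is the symmetric-part-plus-$K$ martingale representing \eqref{eqn:compensated-bond-pricing}. Uniqueness goes by the same route: take another bounded $Y'$, apply the $G$-martingale representation theorem, use Itô's formula to produce $(Y',Z',K')$ solving the BSDE, and invoke uniqueness. The equivalence of \eqref{eqn:long-maturity-bond-pricing} and \eqref{eqn:compensated-bond-pricing} was already noted.

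\emph{Part (iii).} Existence and uniqueness of a bounded solution is Theorem~\ref{infinite horizon Quadartic BSDE exist and unique}. To identify the solution, verify directly that $Y_s:=\phi(X_s)$, $Z^j_s:=\langle D\phi(X_s),\sigma_j(s,X_s)\rangle$, $K\equiv0$ solves the equation. Itô's formula for $\phi(X)$ gives
\[
d\phi(X_s)=\langle D\phi,b\rangle ds+\big(\langle D\phi,h_{ij}\rangle+\tfrac12\langle D^2\phi\,\sigma_i,\sigma_j\rangle\big)d\langle B^i,B^j\rangle_s+Z_s\,dB_s .
\]
On the other hand, with $y=\phi(x)$ we have $r-\lambda=\langle D\phi,b\rangle$ and $\rho_{ij}-\tfrac12 Z^iZ^j=\langle D\phi,h_{ij}\rangle+\tfrac12\langle D^2\phi\sigma_i,\sigma_j\rangle$, by the definition of $\rho_{ij}$. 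So the dynamics match exactly.

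\emph{The main obstacle.} One must ensure that $\phi(X)$ is bounded and that $Z\in\mathbb H^2$; both are needed to sit in the uniqueness class. The statement calls $\phi$ bounded only implicitly. I would take boundedness of $\phi$ as part of the standing admissibility (as in the introduction). Membership $Z\in\mathbb H^2$ then follows from the linear growth of $\sigma$ and local boundedness of $D\phi$ via Theorem~\ref{SDE sublinear property}. If $D\phi\,\sigma$ is not globally bounded, this requires a localization argument. Once these are in place, uniqueness gives $Y=\phi(X)$.

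\emph{Part (iv).} The estimate \eqref{eqn:long-maturity-convergence} is Theorem~\ref{thm:convergence} with $t=0$, $\xi=x$, since the finite-horizon equation in (i) is exactly \eqref{eqn:appfBSDE}. The constant $C_2$ there bounds $|f(s,x,0,0)|+\sum|g_{ij}(s,x,0,0)|$; after adjusting it to absorb $\mu|\phi|$ or $\lambda$ terms, it is the constant in the statement. Finally, $e^{\lambda(T-s)}P(s,T)=e^{Y^T_s}\to e^{\phi(X_s)}$ quasi-surely. Since $Y^T_s$ is bounded uniformly in $T$,
\[
-\frac{\log P(s,T)}{T-s}=\lambda-\frac{Y^T_s}{T-s}\to\lambda .
\]
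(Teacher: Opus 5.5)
The paper states this theorem without a separate proof, and your reduction is the route its earlier results are built for. With $f=\lambda-r$ and $g_{ij}=\tfrac12 z^iz^j-\rho_{ij}$ you correctly verify Assumptions~\ref{assumption2} and~\ref{infinite assumption 2}: the Lipschitz constant is $\max(C_1,\mu)$, the growth constant is $C_2+\lambda$, and the monotonicity is exactly $-\mu|y-y'|^2$. The It\^o matching that identifies $(\phi(X),(\langle D\phi(X),\sigma_j\rangle)_j,0)$ in (iii) is also right. Your added assumption that $\phi$ is bounded is genuinely necessary, not a convenience. Take $m=d=1$, $b(s,x)=x$, $h=\sigma=0$, $\mu=2$, $\phi(x)=x^2$. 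Then $r(s,x,y)=2y+\lambda$ and $\rho\equiv0$, so every stated hypothesis holds. Yet the bounded solution in (iii) is $Y\equiv0$, not $\phi(X_s)=x^2e^{2s}$.

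The step that fails is the last sentence of your treatment of (iv). Theorem~\ref{thm:convergence} gives the constant $C_2'/\mu$, where $C_2'$ bounds $|f(s,x,0,0)|+\sum_{i,j}|g_{ij}(s,x,0,0)|=|\lambda-r(s,x,0)|+\sum_{i,j}|\rho_{ij}(s,x)|$. The hypotheses only give $C_2'=C_2+\lambda$ (up to the summation over $i,j$), and there is nothing into which the $\lambda$ can be ``absorbed''.

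In fact the bound with $C_2/\mu$ is false. Take $\phi\equiv\lambda/\mu$. Then $r(s,x,y)=\mu y$ and $\rho_{ij}\equiv0$, so the hypotheses hold with every $C_2>0$. The bounded solutions are $Y\equiv\lambda/\mu$ and $Y^T_s=\frac{\lambda}{\mu}\bigl(1-e^{-\mu(T-s)}\bigr)$, with $Z=Z^T=0$ and $K=K^T=0$. Hence $\sup_{0\le u\le s}|Y^T_u-Y_u|=\frac{\lambda}{\mu}e^{-\mu(T-s)}$, which exceeds $\frac{C_2}{\mu}e^{-\mu(T-s)}$ as soon as $C_2<\lambda$.

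What your argument actually proves is the estimate of (iv) with $(C_2+\lambda)/\mu$. Alternatively, the limit triple has $K\equiv0$, and $\tfrac12(|Z^T|^2-|Z|^2)=\tfrac12(Z^T+Z)\cdot(Z^T-Z)$ is exactly linear. So the Girsanov argument from the proof of Theorem~\ref{infinite horizon Quadartic BSDE exist and unique}, applied to $e^{-\mu s}(Y^T_s-\phi(X_s))$ with generator $\tfrac12(Z^T+Z)$, yields the sharp bound $\sup|\phi|\,e^{-\mu(T-s)}$. Either way, the two long-maturity limits still follow, since they need only some constant.

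A smaller point: your justification of $Z\in\mathbb H^2$ does not work as written. Local boundedness of $D\phi$ together with moment bounds on $X$ controls nothing without a growth bound on $D\phi$. Boundedness of $\phi$ closes the gap. The terms $\langle D\phi,b\rangle=r(s,x,0)+\mu\phi-\lambda$ and $\rho_{ij}$ are bounded, and It\^o's formula gives
\[
de^{\phi(X_s)}=e^{\phi(X_s)}\bigl(\langle D\phi,b\rangle\,ds+\rho_{ij}\,d\langle B^i,B^j\rangle_s+Z_s\,dB_s\bigr).
\]
So under each $P\in\mathcal P$ the local martingale $\int_0^\cdot e^{\phi(X_u)}Z_u\,dB_u$ is bounded on $[0,T]$ by a constant independent of $P$. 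This gives a $G$-BMO bound on $Z$. Combined with the quasi-continuity of $Z_s=(\langle D\phi(X_s),\sigma_j(s,X_s)\rangle)_j$ and $c(\sup_{u\le T}|X_u|>R)\to0$, it yields $Z\in\mathbb H^2(0,T)$ for every $T$. Uniqueness then identifies $Y=\phi(X)$.
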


The parameter $\mu$ in the feedback term
$\mu(y-\phi(x))$ controls the rate at which the finite-maturity solution
approaches the prescribed long-maturity profile. When the compensated
logarithmic bond price exceeds its target, the term increases the
specified policy rate; when the price falls below its target, the term
decreases the rate. As shown by \eqref{eqn:long-maturity-convergence},
convergence is exponential with rate $\mu$ as maturity increases.
Thus, selecting $\mu$ determines the rate in the horizon-convergence
estimate. This estimate concerns convergence with respect to maturity,
not calendar-time adjustment after a policy intervention.

\section{Conclusion}
\label{sec:con}

This paper studies bond valuation when the discounting rule depends on the bond price itself and the underlying state evolves under volatility uncertainty. The feedback leads to an implicit pricing problem, rather than the evaluation of a payoff under an independently specified short rate. Representing the logarithmic bond price as the solution of a quadratic $G$-BSDE allows us to analyze this self-consistency problem through existence, uniqueness, comparison, and stability results.

The finite-horizon analysis yields a unique bounded valuation under suitable assumptions, together with BMO estimates and Markovian regularity. Under an additional strict monotonicity condition, the infinite-horizon equation admits a unique bounded solution, and finite-horizon approximations converge exponentially on compact time intervals. These results address both the well-posedness of the feedback problem at a fixed maturity and its behavior as maturity tends to infinity.

The inverse constructions determine discount-rate coefficients from a prescribed price profile. At a fixed maturity, admissible smooth targets are reproduced by matching their $G$-It\^{o} dynamics. For long maturities, a prescribed yield and an admissible bounded state-dependent profile are combined with linear restoring feedback so that the compensated logarithmic price approaches the target exponentially. The feedback coefficient determines the rate of convergence with respect to maturity, while the target yield determines the asymptotic exponential decay of the bond price. These results concern implementability within the specified model, not welfare-optimal policy.

Several questions remain outside the scope of these valuation results. The sublinear pricing expectation is specified as part of the model. Uniqueness of its pricing fixed point does not by itself establish a unique equilibrium price or the symmetric-martingale conditions needed to rule out arbitrage in a traded bond market. Extending feedback from a reference bond to several maturities under a common short-rate process is therefore a natural next step. Other directions include restrictions on admissible rate coefficients, numerical approximation of the pricing fixed point, and statistical specification of the volatility-uncertainty set. The present analysis provides a basis for these extensions by connecting robust valuation, horizon stability, and target-oriented short-rate design within a common backward-equation framework.

\noindent\textbf{Acknowledgements.}\\ 
Hyungbin Park was supported by the National Research Foundation of
Korea (NRF) grants funded by the Ministry of Science and ICT (2021R1C1C1011675, 2022R1A5A6000840, RS-2026-25488333). Financial support from the Institute for Research in Finance and
Economics of Seoul National University is gratefully acknowledged.


\end{document}